\newtheorem{theorem}{Theorem}[section]\crefname{theorem}{Theorem}{Theorems}
\newtheorem{lemma}[theorem]{Lemma}\crefname{lemma}{Lemma}{Lemmas}
\crefname{claim}{Claim}{Claims}
\newtheorem{definition}[theorem]{Definition}\crefname{definition}{Definition}{Definitions}
\crefname{proposition}{Proposition}{Propositions}
\crefname{problem}{Problem}{Problems}
\crefname{remark}{Remark}{Remarks}
\newtheorem{corollary}[theorem]{Corollary}\crefname{corollary}{Corollary}{Corollaries}
\numberwithin{equation}{section}
\DeclareMathOperator*{\argmin}{argmin}
\DeclareMathOperator{\sign}{sign}
\DeclareMathOperator{\poly}{poly}
\DeclareMathOperator{\polylog}{polylog}
\DeclareMathOperator{\diag}{diag}
\renewcommand{\vec}[1]{\bm{#1}}
\newcommand{\mat}[1]{\bm{#1}}
\newcommand{\A}[2]{\mat A(\vec{#1}, \vec{#2})}
\newcommand{\R}{\mathbb R}
\newcommand{\Exp}{\mathbb E}
\newcommand{\grad}{\nabla}
\newcommand{\hess}{\nabla^2}
\newcommand{\eps}{\varepsilon}
\renewcommand{\epsilon}{\varepsilon}
\newcommand{\bigO}[1]{O\left(#1\right)}
\newcommand{\noopsort}[1]{}
\NewDocumentCommand{\bigOt}{o m}{%
  \IfNoValueTF{#1}
    {\ensuremath{\widetilde{O}\!\left(#2\right)}}
    {\ensuremath{\widetilde{O}_{#1}\!\left(#2\right)}}%
}
\renewcommand{\vec}[1]{\boldsymbol{\mathbf{#1}}}
\DeclarePairedDelimiter{\abs}{\lvert}{\rvert}
\DeclarePairedDelimiter{\norm}{\lVert}{\rVert}
\DeclarePairedDelimiterX{\ip}[2]{\langle}{\rangle}{#1,#2}
\crefname{line}{line}{lines}
\newcommand*\wthelper[2]{%
        \hbox{\dimen@\accentfontxheight#1%
                \accentfontxheight#11.3\dimen@
                $\m@th#1\widetilde{#2}$%
                \accentfontxheight#1\dimen@
        }%
}
\newcommand*\accentfontxheight[1]{%
        \fontdimen5\ifx#1\displaystyle
                \textfont
        \else\ifx#1\textstyle
                \textfont
        \else\ifx#1\scriptstyle
                \scriptfont
        \else
                \scriptscriptfont
        \fi\fi\fi3
}
\newcommand{\email}[1]{\href{mailto:#1}{\texttt{#1}}}
\title{Improved quantum lower and upper bounds for matrix scaling}
\author{Sander Gribling\thanks{IRIF, Universit\'e de Paris, CNRS, Paris, France. Partially supported by SIRTEQ-grant QuIPP. \email{gribling@irif.fr}} \and Harold Nieuwboer\thanks{Korteweg--de Vries Institute for Mathematics and QuSoft, University of Amsterdam. Supported by NWO grant OCENW.KLEIN.267. \email{h.a.nieuwboer@uva.nl}}}
\date{}
\begin{document}

\maketitle

\begin{abstract}
Matrix scaling is a simple to state, yet widely applicable linear-algebraic problem: the goal is to scale the rows and columns of a given non-negative matrix such that the rescaled matrix has prescribed row and column sums.
Motivated by recent results on first-order quantum algorithms for matrix scaling, we investigate the possibilities for quantum speedups for classical second-order algorithms, which comprise the state-of-the-art in the classical setting.

We first show that there can be essentially no quantum speedup in terms of the input size in the high-precision regime: any quantum algorithm that solves the matrix scaling problem for $n \times n$ matrices with at most $m$ non-zero entries and with $\ell_2$-error $\varepsilon=\widetilde\Theta(1/m)$ must make $\widetilde\Omega(m)$ queries to the matrix, even when the success probability is exponentially small in $n$.
Additionally, we show that for $\varepsilon\in[1/n,1/2]$, any quantum algorithm capable of producing $\frac{\varepsilon}{100}$-$\ell_1$-approximations of the row-sum vector of a (dense) normalized matrix uses $\Omega(n/\varepsilon)$ queries, and that there exists a constant $\varepsilon_0>0$ for which this problem takes $\Omega(n^{1.5})$ queries.

To complement these results we give improved quantum algorithms in the low-precision regime: with quantum graph sparsification and amplitude estimation, a box-constrained Newton method can be sped up in the large-$\varepsilon$ regime, and outperforms previous quantum algorithms. For entrywise-positive matrices, we find an $\varepsilon$-$\ell_1$-scaling in time $\widetilde O(n^{1.5}/\varepsilon^2)$, whereas the best previously known bounds were $\widetilde O(n^2\mathrm{polylog}(1/\varepsilon))$ (classical) and $\widetilde O(n^{1.5}/\varepsilon^3)$ (quantum).
\end{abstract}

\section{Introduction}

The matrix scaling problem asks to scale each row and column of a given matrix $\mat A \in [0,1]^{n \times n}$ by a positive number in such a way that the resulting matrix has marginals (i.e., row- and column-sums) that are close to some prescribed marginals. For example, one could ask to scale the matrix in such a way that it becomes doubly stochastic. 

Matrix scaling has applications in a wide variety of areas including numerical linear algebra~\cite{lapack}, optimal transport in machine learning~\cite{NIPS2013_af21d0c9}, statistics~\cite{kruithof:telefoon,deming1940least,brown,bishop-fienberg-holland}, and also in more theoretical settings, e.g.~for approximating the permanent~\cite{lsw00}.
For a survey, we refer the reader to~\cite{idel2016review}.
Furthermore, the matrix scaling problem is a special (commutative) instance of a more general (non-commutative) class of problems, which includes operator and tensor scaling; these problems have many more applications and are a topic of much recent interest \cite{garg2019operator,burgisser2019towards}.

Formally, the matrix scaling problem is defined for the $\ell_p$-norm as follows. Given a matrix $\mat A \in [0,1]^{n \times n}$ with at most $m$ non-zero entries, entrywise-positive target marginals $\vec r, \vec c \in \R^{n}$ with $\|\vec r\|_1=1=\|\vec c\|_1$, and a parameter $\eps \geq 0$, find vectors $\vec x, \vec y \in \R^{n}$ such that the (rescaled) matrix $\A x y := (A_{ij} e^{x_i + y_j})_{i,j \in [n]}$ satisfies 
\begin{equation} \label{eq: error def}
\norm{\vec r(\A x y) - \vec r}_p \leq \eps, \qquad \norm{\vec c(\A x y) - \vec c}_p \leq \eps. 
\end{equation}
Here $\vec r(\A x y) = (\sum_{j=1}^n A_{ij} e^{x_i + y_j})_{i \in [n]}$ is the vector of row-marginals of the matrix $\A x y$ and similarly $\vec c(\A x y) = (\sum_{i=1}^n A_{ij} e^{x_i + y_j})_{j \in [n]}$ is the vector of column-marginals.
We refer to $\vec x$ and $\vec y$ as the scaling vectors, whereas $e^{x_i}$ and $e^{y_j}$ are called scaling factors.
A common choice of target marginals is $(\vec r, \vec c) = (\frac{\vec 1}{n}, \frac{\vec 1}{n})$, i.e., every row and column sum target is $1/n$, and we refer to these as the \emph{uniform} target marginals.
As is standard in the matrix scaling literature, we will henceforth assume that $\mat A$ is \emph{asymptotically $(\vec r, \vec c)$-scalable}: for every $\eps > 0$, there exist $\vec x, \vec y$ such that~$\A x y$ satisfies \cref{eq: error def}.
This depends only on the support of~$\mat A$~\cite[Thm.~3]{rothblum&schneider:scaling}, and is the case if and only if $(\vec r, \vec c)$ is in the convex hull of the points $(\vec e_i, \vec e_j) \in \R^{2n}$ such that $A_{ij} > 0$, where the $\vec e_i$ are the standard basis vectors for $\R^n$.
We will also always assume that the smallest non-zero entry of each of $\mat A$, $\vec r$ and $\vec c$ is at least $1/\!\poly(n)$.

Many classical algorithms for the matrix scaling problem can be viewed from the perspective of convex optimization. For example, one can solve the matrix scaling problem by minimizing the convex (potential) function
\begin{equation} \label{def: f}
  f(\vec x, \vec y) = \sum_{i,j=1}^n  A_{ij} e^{x_i + y_j} - \ip{\vec r}{\vec x} - \ip{\vec c}{\vec y},
\end{equation} 
where $\ip{\cdot}{\cdot}$ denotes the standard inner product on $\R^n$.
The popular and practical Sinkhorn algorithm~\cite{sinkhorn:scaling} -- which alternates between rescaling the rows and columns to the desired marginals -- can be viewed as a (block-)coordinate descent algorithm on~$f$, i.e., a first-order method.
Given its simplicity, it is no wonder that it has been rediscovered in many settings, and is known by many names, such as the RAS algorithm, iterative proportional fitting, or raking.

It is known that the iterates in the Sinkhorn algorithm converge to a $(\vec r, \vec c)$-scaled matrix whenever $\mat A$ is asymptotically $(\vec r, \vec c)$-scalable.
The convergence rate of Sinkhorn's algorithm is known in various settings, and we give a brief overview of the (classical) time complexity of finding an $\eps$-$\ell_1$-scaling, noting that a single iteration can be implemented in time $\bigOt{m}$. 
When $\mat A$ is entrywise positive then one can scale in time $\bigOt{n^2 / \eps}$~\cite{qscalingICALP}; in the $\ell_2$-setting for uniform target marginals a similar result can be found in~\cite{KALANTARI1993237,klrs08}.
In the general setting where $\mat A$ has at most $m \leq n^2$ non-zero entries the complexity becomes $\bigOt{m/\eps^2}$ (for arbitrary target marginals $(\vec r,\vec c)$); a proof may be found in \cite{altschuler2017nearlinear} for the entrywise-positive case, \cite{chakrabarty2020better} for exactly scalable matrices (i.e., where the problem can be solved for $\eps=0$) and \cite{qscalingICALP} for asymptotically scalable matrices.

While simple, the Sinkhorn algorithm is by no means the fastest when the parameter~$\eps$ is small.
The classical state-of-the-art algorithms are based on second-order methods such as (traditional) interior point methods or so-called \emph{box-constrained Newton methods}~\cite{cmtv17,azlow17}, the latter of which we describe in more detail below.
We note that these algorithms depend on fast algorithms for graph sparsification and Laplacian system solving, so are rather complicated compared to Sinkhorn's algorithm. 
The box-constrained Newton methods can find $\eps$-$\ell_1$-scaling vectors in time $\bigOt{m R_\infty}$, where the $\widetilde O$ hides polylogarithmic factors in $n$ and $1/\eps$, and $R_\infty$ is a certain diameter bound (made precise later in the introduction).
For entrywise-positive matrices, $R_\infty$ is of size $\bigOt{1}$, and in general it is known to be $\bigOt{n}$ \cite[Lem.~3.3]{azlow17}. Alternatively, the interior-point method of \cite{cmtv17} has a time complexity of $\bigOt{m^{3/2}}$, which is better than the box-constrained Newton method for general inputs, but worse for entrywise-positive matrices.

Recently, a quantum algorithm for matrix scaling was developed based on Sinkhorn's algorithm~\cite{qscalingICALP}, giving $\eps$-$\ell_1$-scaling vectors in time $\bigOt{\sqrt{mn}/\eps^4}$ for general matrices or $\bigOt{n^{1.5} / \eps^3}$ for entrywise-positive matrices. This improves the dependence on $m$ and $n$ at the cost of a higher dependence on $1/\eps$ when compared to the classical Sinkhorn algorithm (which we recall runs in $\bigOt{m/\eps^2}$ or $\bigOt{n^2/\eps}$ for entrywise-positive matrices).
Furthermore, it was shown that this quantum algorithm is optimal for (sufficiently small) constant $\eps$: there exists an $\eps_0 > 0$ (independent of $n$) such that every quantum algorithm that $\eps_0$-$\ell_1$-scales to uniform target marginals with probability at least $2/3$ must make at least $\Omega(\sqrt{mn})$ queries.
It was left as an open problem whether one can also obtain quantum speedups (in terms of $n$ or $m$) using second-order methods. 
In this work we give improved quantum lower and upper bounds on the complexity of matrix scaling.
We first prove a lower bound: we show that every quantum algorithm that solves the matrix scaling problem for small enough $\eps$ must make a number of queries proportional to the number of non-zero entries in the matrix, even when the success probability of the algorithm is only assumed to be exponentially small.
This shows that one cannot hope to get a quantum algorithm for matrix scaling with a polylogarithmic $1/\eps$-dependence and sublinear dependence on $m$.
However, this does not rule out that second-order methods can be useful in the quantum setting.
Indeed, we give a quantum box-constrained Newton method which has a better $1/\eps$-dependence than the previously mentioned quantum Sinkhorn algorithm, and in certain settings is strictly better, such as for entrywise-positive instances.

\subsection{Lower bounds}
As previously mentioned, we show for entrywise-positive instances that a polynomial $1/\eps$-dependence is necessary for a scaling algorithm whose $n$-dependence is $n^{2-\gamma}$ for a constant $\gamma > 0$.
More precisely, we prove the following theorem (which we extend to an $\widetilde\Omega(m)$-lower bound in the general setting of $m\leq n^2$ non-zero entries in \cref{cor: sparse lb}):
\begin{theorem}
\label{thm: informal scaling lb}
There exists a constant $C>0$ such that every matrix scaling algorithm that, with probability $\geq \frac32 \exp(-n/100)$, finds scaling vectors for entrywise-positive $n \times n$-matrices with $\ell_2$-error $C/(n^2 \sqrt{\ln n})$ must make at least $\Omega(n^2)$ queries to the matrix. This even holds for uniform targets and matrices with smallest entry $\Omega(1/n^2)$.
\end{theorem}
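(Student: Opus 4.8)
The plan is to encode, into the rows of an entrywise-positive matrix, $n$ mutually independent copies of a Hamming-weight problem, in such a way that any sufficiently precise scaling is forced to (essentially) solve all of them, and then to invoke a strong direct product theorem for quantum query complexity; the role of the latter is that it stays in force down to success probability $2^{-\Omega(n)}$, which is exactly what the hypothesis $\tfrac32\exp(-n/100)$ supplies (amplification is of no help here, as even verifying a claimed $\varepsilon$-scaling is itself not cheap).

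For the hard instances I would take a bit array $\vec b\in\{0,1\}^{n\times n}$, viewed as $n$ disjoint blocks $\vec b^{(1)},\dots,\vec b^{(n)}\in\{0,1\}^n$, and a matrix $\mat A^{(\vec b)}$ whose entries are a fixed simple function of the individual bits --- the natural choice being $A^{(\vec b)}_{ij}=\tfrac1{n^2}\bigl(1+\delta\, b^{(i)}_j\bigr)$ with $\delta$ a small enough constant multiple of $1/\sqrt{\ln n}$. Such an $\mat A^{(\vec b)}$ is entrywise positive with all entries $\Theta(1/n^2)$, so for the uniform targets it is (asymptotically, in fact exactly) scalable with $R_\infty=\widetilde O(1)$; one query to $\mat A^{(\vec b)}$ costs one query to $\vec b$; and the optimal row-scaling $x_i^\star$ is an explicit strictly monotone function of (essentially) $w_i:=\abs{\vec b^{(i)}}$, with consecutive attainable values separated by $\Theta(\delta/n)$. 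To keep the row--column coupling controllable I expect to need to restrict $\vec b$ to a structured product family $\mathcal G=\mathcal G_1\times\dots\times\mathcal G_n$ --- for instance letting a constant fraction of the rows carry the independent data and using the remaining rows to engineer the column sums of $\mat A^{(\vec b)}$ to be nearly balanced --- so that a distributional strong direct product theorem against the uniform distribution on $\mathcal G$ applies.

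The heart of the argument, which I expect to be the main obstacle, is a structural lemma: every $\tfrac C{n^2\sqrt{\ln n}}$-$\ell_2$-scaling $(\vec x,\vec y)$ of $\mat A^{(\vec b)}$ determines $w_i$ --- for, say, a $1-o(1)$ fraction of the rows $i$ --- by rounding $x_i$. I would establish it from three ingredients: (i) the explicit formula for $x_i^\star$ in terms of $w_i$; (ii) near-uniqueness of approximate scalings --- standard conditioning estimates for $f$ (diameter $R_\infty=\widetilde O(1)$, and the Hessian $\hess f$ having nonzero eigenvalues $\asymp 1/n$ for these dense instances, see \cite{cmtv17,azlow17}) give that an $\varepsilon$-$\ell_2$-scaling is $O(n\varepsilon)$-close to the optimum in $\ell_2$, up to the one-dimensional gauge freedom $(\vec x,\vec y)\mapsto(\vec x+t\vec 1,\vec y-t\vec 1)$ (which is easily pinned, e.g.\ by a reference row of zeros); and (iii) control of the row--column coupling --- since the $i$-th row marginal of $\A x y$ equals $e^{x_i}\sum_j A^{(\vec b)}_{ij}e^{y_j}$, in a suitable gauge the shared vector $\vec y$ shifts the ``apparent weight'' of row $i$ by $\ip{\vec b^{(i)}}{\vec y}$, and one must show $\abs{\ip{\vec b^{(i)}}{\vec y}}<\tfrac12$ for the relevant rows. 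It is precisely in playing (ii) and (iii) off against the $\Theta(\delta/n)$ separation that the precision $\varepsilon=\Theta(1/(n^2\sqrt{\ln n}))$ is pinned down: the $1/\sqrt{\ln n}$ in $\varepsilon$ matches the $1/\sqrt{\ln n}$ in $\delta$, and $C$ must be a small enough absolute constant for all inequalities to close. Bounding $\ip{\vec b^{(i)}}{\vec y^\star}$ --- which depends on \emph{all} blocks through the column sums --- uniformly over $\mathcal G$ and over all rows is the delicate point; I would handle it by a concentration argument combined with a union bound over the $n$ rows (this is where the $\sqrt{\ln n}$ enters analytically), after using the design of $\mathcal G$ to kill its dominant part.

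Finally, composing a hypothetical $T$-query scaling algorithm with the deterministic rounding of the structural lemma yields a $T$-query quantum algorithm that, with probability at least $\tfrac32\exp(-n/100)$, outputs $w_i$ for all but $o(n)$ of the rows whenever $\vec b\in\mathcal G$ --- that is, it solves the $n$-fold direct product of the $n$-bit Hamming-weight problem on disjoint blocks (with ``most'' blocks required) under the uniform distribution on $\mathcal G$. Since computing the Hamming weight of $n$ bits, even to within an additive constant, needs $\Omega(n)$ quantum queries (Nayak--Wu), a strong (threshold) direct product theorem for quantum query complexity --- which remains valid down to success probability $2^{-\Omega(n)}$, and so applies since $\tfrac32\exp(-n/100)=2^{-\Theta(n)}$ lies below its threshold once the constant $100$ is chosen appropriately --- forces $T=\Omega(n\cdot n)=\Omega(n^2)$. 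The $\widetilde\Omega(m)$ extension for $m\le n^2$ in \cref{cor: sparse lb} then follows by planting this construction in a $\sqrt m\times\sqrt m$ block and padding the remaining rows and columns (e.g.\ with a disjoint permutation pattern) so as to leave the scaling of the planted block untouched.
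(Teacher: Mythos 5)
Your high-level plan is essentially the paper's: reduce from many independent copies of a Hamming-weight-distinguishing problem, use the strong direct product theorem (which the paper also invokes, via Lee--Roland, exactly because it survives exponentially small success probability), and use strong convexity of the potential $f$ (Hessian $\asymp 1/n$, gauge fixed on $(\vec 1,-\vec 1)^\perp$) to argue that every precise enough $\ell_2$-scaling is forced close to a canonical encoding of the Hamming weights. You correctly isolate the row--column coupling as the crux and correctly pin down the role of the $\sqrt{\ln n}$ factor.

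Where your sketch leaves a genuine gap is in \emph{how} the column sums are kept under control, and the paper resolves this with two concrete moves that your outline gestures at but does not supply. First, the paper does not use $n$ independent rows: it uses $n/2$ bit strings $z^1,\dots,z^{n/2}$ and builds \emph{paired} rows $1+z^i/b$ and $1-z^i/b$, so that the column sums of the \emph{unscaled} matrix $\mat A$ are exactly equal. This is a specific, exact instance of your "use the remaining rows to engineer column sums," and it is what makes the subsequent analysis tractable. Second, and more crucially, after a single Sinkhorn row-rescaling step the column marginals are \emph{not} automatically close to uniform, and bounding the coupling term $\ip{\vec b^{(i)}}{\vec y^\star}$ (or its analogue) uniformly over all hard instances is not possible; the paper fixes this by having the \emph{reduction itself} randomize --- it independently applies a uniformly random permutation $\sigma^i$ to the coordinates of each $z^i$ before building $\mat A^\sigma$. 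Concentration (Hoeffding plus a union bound over columns) over the $\sigma^i$ then shows that with probability $\geq 2/3$ over the reduction's own coins the post-Sinkhorn column marginals are within $O(\sqrt{\ln n}/(b^2 n^2))$ of uniform, which after a strong-convexity step (\cref{lem: grad to inf norm bound}) gives the structural lemma you want. Your alternative --- a distributional SDPT against the uniform distribution on a structured family $\mathcal G$, together with "using the design of $\mathcal G$ to kill the dominant part" --- is vague at exactly this point: you would need the coupling to be small for \emph{all} (or all but a tiny fraction of) the inputs you feed into the direct-product instance, not merely in expectation, and you have not specified how $\mathcal G$ accomplishes that. The paper's choice to move the randomness from the input distribution into the reduction sidesteps this entirely, and lets it quote the worst-case Lee--Roland SDPT without modification; the success probabilities then multiply, which is why the hypothesis is $\tfrac32\exp(-n/100)$ rather than $\exp(-n/100)$. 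If you adopt the paired-row-plus-random-permutation construction, the rest of your sketch (near-uniqueness via strong convexity, rounding to recover the $a_i$, SDPT) closes up exactly as you describe; without it, the structural lemma is unproved.
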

The proof of this lower bound is based on a reduction from deciding whether bit strings have Hamming weight $n/2 + 1$ or $n/2 - 1$.
Specifically, given $k$ bit strings $z^1, \ldots, z^k \in \{ \pm 1 \}^n$ for $k = \Theta(n)$, each with Hamming weight $\abs{z^i} = n/2 + a_i$ where $a_i \in \{\pm 1\}$, we show that any matrix scaling algorithm can be used to determine all the $a_i$.
One can show that every quantum algorithm that computes all the $a_i$'s needs to make $\Omega(nk)$ quantum queries to the bit string $z^1, \dotsc, z^k$, even if the algorithm has only exponentially small success probability: to determine a single $a_i$ with success probability at least $2/3$, one needs to make $\Omega(n)$ quantum queries to the bit string $z^i$~\cite{beals2001quantum,nayak1999quantum,ambainis:lowerboundsj}, and one can use the strong direct product theorem of Lee and Roland~\cite{lee&roland:qsdpt} to prove the lower bound for computing all $k$ $a_i$'s simultaneously. 
To convert the problem of computing the $a_i$ to an instance of matrix scaling, one constructs a $2k \times n$ matrix $\mat A$ whose first $k$ rows are (roughly) given by the vectors $1 + z^i / b$ for some $b \geq 2$, and whose last $k$ rows are given by $1 - z^i / b$.
For such an $\mat A$, the column sums are all $2k$, and the row sums are determined by the $a_i$.
If the matrix $\mat A'$ obtained by a single Sinkhorn step from $\mat A$ (i.e., rescaling all the rows) were exactly column scaled, then the \emph{optimal} scaling factors encode the $a_i$.
We show that, if one randomly (independently for each $i$) permutes the $z^i$ beforehand, this is approximately the case: the column sums of this $\mat A'$ will be close to the desired column sums with high probability, and hence the first step of Sinkhorn gives approximately optimal scaling factors (which encode the $a_i$).
Then, we give a lower bound on the strong convexity parameter of the potential $f$, to show that \emph{all} sufficiently precise minimizers of $f$ also encode the~$a_i$.
In other words, from sufficiently precise scaling factors, we can recover the~$a_i$, yielding the reduction to matrix scaling, and consequently a lower bound for the matrix scaling problem.

\bigskip 
We additionally study the problem of computing an $\eps$-$\ell_1$-approximation of the vector of row sums of an $\ell_1$-normalized $n \times n$ matrix $\mat A$.
This is a common subroutine for matrix scaling algorithms; for instance, the gradient of the potential function $f$ from \eqref{def: f} that we optimize for the upper bound can be determined from the row and column sums by subtracting the desired row and column sums, so the complexity of this subroutine directly relates to the complexity of each iteration in our algorithm.
We give the following lower bound for this problem.
\begin{theorem}[Informal] \label{thm: informal grad lb} For $\eps \in [1/n,1/2]$ and an $\ell_1$-normalized matrix $\mat A \in [0,1]^{n \times n}$, computing an $\tfrac{\eps}{100}$-$\ell_1$-approximation of $\vec r(\mat A)$ takes $\Omega(n/\eps)$ queries to $\mat A$. Moreover, there exists a constant $\eps_0>0$ such that computing an $\eps_0$-$\ell_1$-approximation of $\vec r(\mat A)$ takes $\Omega(n^{1.5})$ queries to $\mat A$. 
\end{theorem}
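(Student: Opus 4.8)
The plan is to prove both bounds by reduction from direct products of hard ``promise counting'' problems, using the strong direct product theorem (SDPT) of Lee and Roland~\cite{lee&roland:qsdpt} exactly as in the proof of \cref{thm: informal scaling lb}: from a collection of hard instances one builds an $\ell_1$-normalized matrix whose row-sum vector encodes the answers, so that an $\tfrac{\eps}{100}$-$\ell_1$-approximation of the row sums recovers almost all of the answers, and the SDPT then forces the query count. Since one query to the constructed matrix is always simulable by $O(1)$ queries to the underlying instances, a lower bound for computing the answers transfers to one for the row-sum problem.

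For the $\Omega(n/\eps)$ bound I would reduce from the $k$-fold \emph{gapped Hamming weight} problem with $k=n/2$: given $z^1,\dots,z^k\in\{0,1\}^n$, each with $|z^i|\in\{n/2-g,\ n/2+g\}$ where $g:=\max(1,\lfloor \eps n/4\rfloor)$, output the signs $a_i:=\sign(|z^i|-n/2)$. Distinguishing the two Hamming weights takes $\Omega(n/g)=\Omega(1/\eps)$ quantum queries by the Nayak--Wu approximate-counting lower bound~\cite{nayak1999quantum,ambainis:lowerboundsj}, so by the SDPT, computing all but a $\gamma$-fraction of the $a_i$ (for a small enough constant $\gamma>0$) takes $\Omega(k/\eps)=\Omega(n/\eps)$ queries, even for success probability $2^{-\Omega(n)}$. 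To encode this, pair up rows $i\leftrightarrow i+k$, setting row $i$ to $c\,z^i$ and row $i+k$ to $c(\vec 1-z^i)$ with $c:=2/n^2$; then every pair has total mass $cn$ regardless of the $a_i$, so $\|\mat A\|_1=1$, while the $i$-th row sum is $1/n\pm 2g/n^2$, the two possible values being $4g/n^2$ apart. Hence an $\tfrac{\eps}{100}$-$\ell_1$-approximation $\tilde{\vec r}$ of $\vec r(\mat A)$ has $|\tilde r_i-r_i|$ exceeding half the separation for at most $O(\eps n^2/g)\le n/25$ indices, so thresholding $\tilde r_i$ at $1/n$ recovers all but $\le n/25\le\gamma k$ of the $a_i$'s, which yields the bound for every $\eps\in[1/n,1/2]$.

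For the $\Omega(n^{1.5})$ bound at a constant $\eps_0$, the dense construction above only gives $\Omega(n)$, so I would reduce instead from an $n$-fold \emph{search} problem, which makes each row subproblem cost $\Omega(\sqrt n)$ rather than $O(1)$. The hidden input is a set $S\subseteq[n]$ with $|S|=n/2$ together with positions $p_i\in[n]$ for $i\in S$; define $\mat A$ by $A_{i,p_i}:=2/n$ for $i\in S$ and $A_{ij}:=0$ otherwise, so $\mat A\in\{0,2/n\}^{n\times n}$, $\|\mat A\|_1=(n/2)(2/n)=1$, and $\vec r(\mat A)=\tfrac2n\,\indic_S$. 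Taking $\eps_0:=1/200$, an $\eps_0$-$\ell_1$-approximation $\tilde{\vec r}$ of $\vec r(\mat A)$ has $|\tilde r_i-r_i|\ge 1/n$ for at most $\eps_0 n=n/200$ indices, so thresholding at $1/n$ computes the $n$ bits $\indic[i\in S]$ with at most $n/200$ mistakes. But deciding whether $i\in S$ is a Grover search over the $i$-th row (is there a nonzero entry?) and costs $\Omega(\sqrt n)$ queries~\cite{beals2001quantum,ambainis:lowerboundsj}, so by the SDPT for search, computing all but a small constant fraction of these $n$ bits needs $\Omega(n\sqrt n)=\Omega(n^{1.5})$ queries, hence so does the row-sum problem.

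The two steps I expect to require the most care are the following. First, the SDPT is cleanest for computing \emph{all} instances correctly, whereas here we only recover a $(1-\gamma)$-fraction; one bridges this with a union bound over the $\binom{k}{\le\gamma k}$ possible locations of the mistakes, applying the all-correct SDPT on each fixed set of $(1-\gamma)k$ ``good'' instances, which succeeds precisely when $\gamma$ is a small enough constant that the exponential decay of the SDPT beats $\binom{k}{\le\gamma k}$ --- this is exactly why the thresholds (and $\eps_0$) are chosen to keep the error fraction small. Second, one must ensure the matrix is \emph{exactly} $\ell_1$-normalized: in the first reduction this is automatic from the complement pairing, and in the second from the promise $|S|=n/2$, which costs at most one ``free'' instance and does not lower the per-row search complexity. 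A minor additional point is to verify that the per-instance lower bounds (Nayak--Wu approximate counting; Grover search) and their direct-product forms remain valid under the precise promised input distributions used here, and to track the constants throughout.
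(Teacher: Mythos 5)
Your first reduction is essentially the paper's proof of \cref{lb: original}: pair up rows via complementation so columns and total mass are fixed, let the two possible row sums be separated by $\Theta(\eps/n)$, and threshold a good $\ell_1$-approximation to recover most $a_i$'s; then invoke the strong direct product theorem (Lee--Roland), exactly as in \cref{thm:basic lb}. The only differences are cosmetic (you put zeros where the paper uses $0.5/n^2$, your gap parameter $g$ plays the role of $\tau n$), plus some constants that need a nudge --- with an $\eps/100$-approximation your own accounting gives on the order of $n/50$ bad indices among $k=n/2$ rows, which is a few percent rather than the 1\% allowed by \cref{thm:basic lb}; fix by tightening the approximation constant, by using both rows of a complementary pair to vote on $a_i$, or simply by applying the SDPT with a slightly more permissive agreement fraction. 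Also note that the Lee--Roland SDPT as stated in the paper already bounds the probability of agreeing on a $\mu$-fraction of instances, so your proposed union bound over $\binom{k}{\le\gamma k}$ error patterns is unnecessary (and would be looser).

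Your second reduction, however, is a genuinely different route. The paper derives the $\Omega(n^{1.5})$ bound indirectly: it observes that $O(1)$ calls to a constant-precision row/column-sum approximator let one run Sinkhorn to constant $\ell_1$-accuracy, and then quotes the $\Omega(\sqrt{mn})=\Omega(n^{1.5})$ matrix-scaling lower bound of \cite{qscalingICALP}. You instead give a direct reduction from an $n$-fold search/OR problem, where each row either has a single marked entry of value $2/n$ or is all zero, so the row-sum vector is $\frac2n\,\indic_S$ and thresholding a constant-$\ell_1$-approximation recovers $S$ up to a small fraction of errors; each row is a $\Theta(\sqrt n)$-query OR instance, and SDPT gives $\Omega(n^{1.5})$. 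This is more self-contained (no appeal to the scaling lower bound of prior work), at the cost of one technicality you gesture at but should handle explicitly: the promise $|S|=n/2$ makes the $n$ per-row OR instances \emph{not} independent, which the SDPT requires. The clean fix is a pairing: for each $i\in[n/2]$ exactly one of rows $i$ and $i+n/2$ contains the marked entry, making $\|\mat A\|_1=1$ automatic and giving $n/2$ \emph{independent} ``which-of-two-rows'' instances, each still $\Omega(\sqrt n)$ hard. With that repair your argument goes through and is a nice alternative to the paper's.
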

The first lower bound in the theorem is proven in \cref{lb: original}.
Its proof is based on a reduction from $\Theta(n)$ independent instances of the majority problem, as for the lower bound for matrix scaling.
The second lower bound can be derived from the lower bound for matrix scaling given in~\cite{qscalingICALP}: using a constant number of calls to a subroutine that provides constant-precision approximations to the row- and column-sum vectors, one can implement Sinkhorn's algorithm to find a constant-precision $\ell_1$-scaling, which for a small enough constant takes $\Omega(n^{1.5})$ queries. Hence, there exists a constant $\eps_0 > 0$ (independent of $n$) such that computing an $\eps_0$-$\ell_1$-approximation of $\vec r(\mat A)$ takes at least $\Omega(n^{1.5})$ queries to the matrix entries.

\subsection{Upper bounds}
While the first lower bound (\cref{thm: informal scaling lb}) shows that a (quantum) algorithm for matrix scaling cannot have both an $m^{1-\gamma}$-dependence for $\gamma > 0$ and a polylogarithmic $1/\eps$-dependence, one can still hope to obtain a second-order $\bigOt{\sqrt{mn}/\!\poly(\eps)}$-time algorithm with a better $1/\eps$-dependence than the quantum Sinkhorn algorithm of~\cite{qscalingICALP}.
We show that one can build on a box-constrained Newton method~\cite{cmtv17,azlow17} to obtain a quantum algorithm which achieves this, at the cost of depending quadratically on a certain diameter bound $R_\infty$; recall for comparison that the classical box-constrained Newton methods run in time $\bigOt{m R_\infty}$. For general matrices, one has the bound $R_\infty = \bigOt{n}$~\cite[Lem.~3.3]{azlow17}.
The performance of the resulting quantum box-constrained Newton method is summarized in the following theorem:
\begin{theorem}[Informal version of \cref{cor: non-positive,cor: positive}]
  For asymptotically-scalable matrices $\mat A \in \R^{n \times n}_{\geq 0}$ with $m$ non-zero entries and target marginals $(\vec r, \vec c)$, one can find $(\vec x, \vec y)$ such that $\A x y$ is $O(\eps)$-$\ell_1$-scaled to $(\vec r, \vec c)$ in quantum time $\bigOt{R^2_\infty \sqrt{mn}/\eps^2}$ where $R_\infty$ is the $\ell_\infty$-norm of at least one $\eps^2$-minimizer of $f$. When $\mat A$ is entrywise positive we have $R_\infty = \bigOt{1}$, so the algorithm runs in quantum time $\bigOt{n^{1.5}/\eps^2}$.
\end{theorem}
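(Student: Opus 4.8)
The plan is to implement a box-constrained Newton method for minimizing the potential $f$ of \cref{def: f} on a quantum computer, replacing the two expensive classical primitives of each iteration---exact gradient evaluation and exact manipulation of the Hessian---by quantum subroutines, and then controlling error propagation so that the required precision and iteration count are only polynomially worse than in the classical $\bigOt{mR_\infty}$ algorithm. The starting point is the structure of $f$: its Hessian $\hess f(\vec x,\vec y)$ is, up to conjugation by $\diag(\id,-\id)$, the Laplacian of the bipartite graph on $[n]\sqcup[n]$ with edge weights $A_{ij}e^{x_i+y_j}$, and since $f$ is a sum of exponentials the quadratic form of $\hess f$ changes by only a constant factor over any $\ell_\infty$-ball of radius $\Theta(1)$, so $f$ is $O(1)$-second-order robust in the sense required by the box-constrained Newton framework of \cite{cmtv17,azlow17}. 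That framework, from a cheap (Sinkhorn) warm start, takes $\bigOt{R_\infty}$ trust-region steps $\vec z_{t+1}=\vec z_t+\vec\delta_t$---each $\vec\delta_t$ an approximate minimizer of $\ip{\vec g_t}{\vec\delta}+\tfrac12\ip{\vec\delta}{\mat H_t\vec\delta}$ over $\|\vec\delta\|_\infty\le\Theta(1)$, with $\vec g_t\approx\grad f(\vec z_t)$ and $\mat H_t\approx\hess f(\vec z_t)$---to reach $f(\vec z)-f^*\le\delta$ for a target $\delta$. Here $\grad f$ at $(\vec x,\vec y)$ equals $(\vec r(\A x y)-\vec r,\ \vec c(\A x y)-\vec c)$, so a small gradient is exactly an $\ell_1$-scaling; and $\ell_\infty$-smoothness of $f$ near its minimum (where the marginals, hence the diagonal of $\hess f$, are $O(1)$) turns $f(\vec z)-f^*\le\delta=\Theta(\eps^2)$ into $\|\grad f(\vec z)\|_1=O(\eps)$. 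By definition such a $\vec z$ exists with $\|\vec z\|_\infty\le R_\infty$, and for entrywise-positive $\mat A$ with entries $\Omega(1/\poly n)$ one has $R_\infty=\bigOt 1$ by \cite{azlow17}.

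The two quantum subroutines are as follows. For the Hessian, I would compute a $(1\pm\Theta(\eps'))$-spectral sparsifier of $\hess f(\vec z_t)$ with $\bigOt n$ edges via quantum spectral sparsification (Apers--de Wolf) in time $\bigOt{\sqrt{mn}/\eps'}$; the approximate box-constrained quadratic minimization required by the framework is then run classically on the sparsifier in time $\bigOt n$, and since a spectral sparsifier preserves the quadratic form up to a constant this still yields an $O(1)$-approximate inner step. For the gradient, note that the weighted degrees of a $(1\pm\eta)$-spectral sparsifier already approximate the marginals of $\A x y$ multiplicatively, so a fine enough sparsifier ($\eta=\Theta(\eps')$) furnishes $\vec g_t$ with $\|\vec g_t-\grad f(\vec z_t)\|_1\le\eps'$ at no extra cost; alternatively one estimates all $2n$ row/column sums of $\A x y$ directly by amplitude estimation in time $\bigOt{\sqrt{mn}/\eps'}$, which is essentially optimal by \cref{thm: informal grad lb}. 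The iterate $\vec z_t$ is kept in classical memory, so only $\mat A$ is accessed quantumly and the subroutines compose without QRAM on $\vec z_t$; the failure probabilities are driven to $1/\poly(n,1/\eps)$ and union-bounded over the $\bigOt{R_\infty}$ iterations.

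It remains to fix $\eps'$. Running box-constrained Newton with a gradient oracle of $\ell_1$-error $\eps'$ should behave like the exact method down to an error floor of order $R_\infty\eps'$: the linear term of each inner problem is perturbed by at most $\eps'\|\vec\delta_t\|_\infty\le\eps'$, and as the method takes $\bigOt{R_\infty}$ steps the accumulated slack in the potential-decrease telescoping is $\bigOt{R_\infty\eps'}$. Choosing $\eps'=\Theta(\eps^2/R_\infty)$ then makes the floor $\Theta(\eps^2)$, each iteration costs $\bigOt{\sqrt{mn}/\eps'}=\bigOt{R_\infty\sqrt{mn}/\eps^2}$, and multiplying by the $\bigOt{R_\infty}$ iterations gives the quantum running time $\bigOt{R_\infty^2\sqrt{mn}/\eps^2}$; substituting $R_\infty=\bigOt 1$ and $m\le n^2$ yields $\bigOt{n^{1.5}/\eps^2}$ for entrywise-positive matrices, while for general matrices $R_\infty=\bigOt n$ gives the stated bound.

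The main obstacle I anticipate is the stability analysis underlying the previous paragraph. The classical box-constrained Newton analyses assume exact first-order information, so one cannot invoke their convergence guarantee as a black box; instead one has to re-run the potential-function argument with the per-step gradient perturbation tracked throughout, and show that neither the iteration count nor the reachable accuracy degrades by more than the claimed factors. Coupled to this is the need to keep the trajectory inside a region where both the spectral-sparsifier guarantee for $\mat H_t$ and the $\ell_\infty$-smoothness bound used to pass from a function-value gap to an $\ell_1$-scaling hold with $O(1)$ constants---concretely, controlling $\sum_{ij}A_{ij}e^{x_i+y_j}$ along the trajectory---and to verify that the inner-problem solver of \cite{cmtv17,azlow17}, run on a sparsifier rather than on $\hess f(\vec z_t)$ itself, still produces a feasible step of the required quality.
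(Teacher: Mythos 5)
Your high-level plan matches the paper's: a box-constrained Newton method on a potential with a Laplacian-like Hessian, quantum spectral sparsification for the Hessian, amplitude estimation for the gradient, per-iteration precision $\eps' = \Theta(\eps^2/R_\infty)$ over $\bigOt{R_\infty}$ iterations, and explicit control of $\norm{\A x y}_1$ via a Sinkhorn-style shift when the trajectory drifts. That gives exactly the $\bigOt{R_\infty^2\sqrt{mn}/\eps^2}$ bound. The gradient-from-sparsifier-degrees observation is correct but not an improvement, since the sparsifier would then need precision $\Theta(\eps')$ and cost the same as amplitude estimation; the paper keeps the two jobs separate (constant-precision sparsifier for $\mat H_t$, fine amplitude estimation for $\vec b_t$).

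The genuine gap is the one you flagged as your ``main obstacle'' without resolving it: you run the box-constrained Newton method on $f$ itself, but the multiplicative progress factor in \cref{thm: second order robust update} is $1 - \Theta(1/\max(kR_\infty,1))$ where $R_\infty$ is the $\ell_\infty$-\emph{radius of the sublevel set} $\{g(\vec z')\le g(\vec z)\}$, not the $\ell_\infty$-norm of a near-minimizer. For $f$ these two quantities need not be comparable: $f$ is shift-invariant and, for sparse $\mat A$, may have an $\ell_\infty$-unbounded sublevel set even after quotienting out the $(\vec 1,-\vec 1)$ direction, in which case the per-step progress bound degenerates. The paper's fix is to replace $f$ by the regularized potential $\tilde f(\vec x,\vec y) = f(\vec x,\vec y) + \tfrac{\eps^2}{n e^B}\sum_i(e^{x_i}+e^{-x_i}) + \tfrac{\eps^2}{n e^B}\sum_j(e^{y_j}+e^{-y_j})$; \cref{lem:basic regularized properties} shows the sublevel set $\{\tilde f(\vec z)\le\tilde f(\vec 0)\}$ has $\ell_\infty$-radius at most $B+\bigOt{\log(n/\eps)}$, that $\tilde f \ge f$, and that $\tilde f^* - f^* = O(\eps^2)$ as long as some $\eps^2$-minimizer of $f$ has $\ell_\infty$-norm at most $B$. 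Without the regularizer, neither the convergence rate, nor the claim that $\nabla^2 \tilde f$ remains SDD away from the minimizer, nor the argument that the shift step does not lose ground, are available, and I don't see how to re-derive the theorem directly on $f$. So while you correctly identified all the quantum primitives and the right precision budget, the proposal as written lacks the one structural idea (regularization) that makes the potential-decrease telescoping and the sublevel-set bound compatible.
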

We emphasize that the diameter bound $R_\infty$ does not need to be provided as an input to the algorithm.
Note that for entrywise-positive matrices, the algorithm improves over the quantum Sinkhorn method, which runs in time $\bigOt{n^{1.5}/\eps^3}$.

Let us give a sketch of the box-constrained method that we use, see \cref{sec: minimizing second-order robust} for details. The algorithm aims to minimize the (highly structured) convex potential function $f$ from \cref{def: f}. A natural iterative method for minimizing convex functions $f$ is to minimize in each iteration~$i$ the quadratic Taylor expansion $\frac{1}{2} \vec x^T \nabla^2 f(\vec x^{(i)}) \vec x + \vec x^T \nabla f(\vec x^{(i)}) +  f(\vec x_i)$ of the function at the current iterate. A box-constrained method constrains the minimization of the quadratic Taylor expansion to those $\vec x$ that lie in an $\ell_\infty$-ball of radius $c$ around the current iterate (hence the name):
\[
    \vec x^{(i)} = \argmin_{\norm{\vec x-\vec x^{(i)}}_\infty \leq c} \frac{1}{2} \vec x^T \nabla^2f(\vec x^{(i)}) \vec x + \vec x^T \nabla f(\vec x^{(i)}).
\]
This is guaranteed to decrease a convex function $f$ whenever it is \emph{second-order robust}, i.e., whenever the Hessian of $f$ at a point is a good multiplicative approximation of the Hessian at every other point in a constant-radius $\ell_\infty$-ball. One can show that the steps taken decrease the potential gap by a multiplicative factor which depends on the distance to the minimizer.

One then observes that the function $f$ from \cref{def: f} is second-order robust. Moreover, its Hessian has an exceptionally nice structure: given by
\[
  \nabla^2f(\vec x,\vec y) = \begin{bmatrix} \diag(\vec r(\A x y)) & \A x y \\ {\A x y}^T & \diag(\vec c(\A x y)) \end{bmatrix}\!\!,
\]
it is similar to a \emph{Laplacian} matrix.
This means that the key subroutine in this method (approximately) minimizes quadratic forms $\frac12 \vec z^T \mat H \vec z + \vec z^T \vec b$ over $\ell_\infty$-balls, where $\mat H$ is a Laplacian  matrix; without the $\ell_\infty$-constraint, this amounts to solving the Laplacian system $\mat H \vec z = \vec b$.
Such a subroutine can be implemented for the more general class of symmetric diagonally-dominant matrices (with non-positive off-diagonal entries) on a classical computer in (almost) linear time in the number of non-zero entries of $\mat H$~\cite{cmtv17}. For technical reasons, one has to add a regularization term to $f$, and the regularized potential instead has a symmetric diagonally-dominant Hessian structure.
Given the recent quantum algorithm for graph sparsification and Laplacian system solving of Apers and de Wolf~\cite{Apers2020QLaplacian}, one would therefore hope to obtain a quantum speedup for the box-constrained Newton method.
We show that one can indeed achieve this by first using the quantum algorithm for graph sparsification, and then using the classical method for the minimization procedure. We note, however, that in order to achieve a quantum speedup in terms of $m$ and $n$, we incur a polynomial dependence in the time complexity on the precision with which we can approximate $\mat H$ and $\vec b$ (as opposed to only a \emph{polylogarithmic} dependence classically). Such a speedup with respect to one parameter (dimension) at the cost of a slowdown with respect to another (precision) is more common in recent quantum algorithms for optimization problems and typically requires a more careful analysis of the impact of approximation errors.
Interestingly, for the classical box-constrained Newton method, the minimization subroutine is the bottleneck, whereas in our quantum algorithm, the cost of a single iteration is dominated by the time it takes to approximate the vector $\vec b$. 
Using similar techniques as in~\cite{qscalingICALP}, one can obtain an additive $\delta \cdot \norm{\A x y}_1$-approximation of $\vec b$ in time roughly $\sqrt{mn}/\delta$. To obtain an efficient quantum algorithm we therefore need to control $\norm{\A x y}_1$ throughout the run of the algorithm. We do so efficiently by testing in each iteration whether the $1$-norm of $\A x y$ is  too large, if it is, we divide the matrix by $2$ (by shifting $\vec x$ by an appropriate multiple of the all-ones vector), which reduces the potential. 

\subsection{Open problems}
Our lower bound on matrix scaling shows that it is not possible to provide significant quantum speedups for scaling of entrywise-positive matrices in the high-precision scaling regime.
However, the best classical upper bound for $\eps$-scaling when no assumptions are made on the support of the matrices is $\bigOt{m^{3/2}}$, where $m$ is the number of non-zero entries \cite{cmtv17} (recall that this hides a polylogarithmic dependence on $1/\eps$).
The algorithm that achieves this bound is an interior-point method, rather than a box-constrained Newton method.
It is an interesting open problem whether such an algorithm also admits a quantum speedup in terms of $m$ while retaining a polylogarithmic $1/\eps$-dependence.
Note that while the interior-point method relies on fast Laplacian system solvers, it is not enough to merely replace this by a quantum Laplacian system solver, as the dimension of the linear system in question is $m + n$ rather than $\Theta(n)$.
More generally, the possibility of obtaining quantum advantages in high-precision regimes for optimization problems is still a topic of ongoing investigation.

A second natural question is whether the lower bounds from \cref{thm: informal grad lb} for computing an approximation of the row sums are tight. The best upper bound for the row-sum vector approximation that we are aware of is the one we use in the upper bound for scaling: we can compute an $\eps$-$\ell_1$-approximation of the row- and column sums in time $\bigOt{n^{1.5} / \eps}$.
For constant $\eps_0 \geq \eps > 0$ this matches the lower bound $\Omega(n^{1.5})$ (up to log-factors), but for non-constant $\eps > \frac{1}{100n}$ it remains an interesting open problem to close the gap between $\bigOt{n^{1.5}/\eps}$ and $\Omega(n/\eps)$.

\section{Lower bounds for matrix scaling and marginal approximation}
In this section we prove two lower bounds: an $\widetilde\Omega(m)$-lower bound for $1/\!\poly(n)$-$\ell_2$-scaling $n \times n$ matrices with at most $m$ non-zero entries, and for $\eps \in [1/n, 1/2]$ an $\Omega(n/\eps)$-lower bound for $\eps$-$\ell_1$-approximation of the row-sum vector of a normalized $n \times n$ matrix (with non-negative entries).
The proofs for both lower bounds are based on a reduction from the lower bound given below in \cref{thm:basic lb}.
In \cref{sec: definition} we construct the associated instances for matrix scaling, and in \cref{sec: concentration} we analyze their column marginals after a single iteration of the Sinkhorn algorithm.
Afterwards, in \cref{sec: convexity} we show that these column marginals are close enough to the target marginals for the reduction to matrix scaling to work, and in \cref{sec: concluding the lb} we put the ingredients together, with the main theorem being \cref{thm: scaling lb}.
Finally, in \cref{sec: row marginal lb} we prove the lower bound for computing approximations to the row marginals.

The lower bound we reduce from is the following:
\begin{theorem}
\label{thm:basic lb}
Let $n$ be even, $\tau \in [1/n, 1/2]$ such that $n\tau$ is an integer, and let $k \geq 1$ be an integer. Given $k$ binary strings $z^{1},\ldots,z^{k}\in\{\pm 1\}^n$, where $z^i$ has Hamming weight $n/2+a_i \tau n$ for $a_i\in\{-1,1\}$, computing with probability $\geq \exp(-k/100)$ a string $\tilde{a}\in\{-1,1\}^{k}$ that agrees with $a$ in $\geq 99\%$ of the positions requires $\Omega(k/\tau)$ quantum queries.
\end{theorem}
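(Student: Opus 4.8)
The plan is to prove this by combining a tight single-instance lower bound with a strong direct product theorem, absorbing the ``agrees in $\geq 99\%$ of positions'' relaxation through a counting argument.

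First I would isolate the single-instance problem $\mathrm{GapMaj}_\tau$: given $z\in\{\pm1\}^n$ promised to have Hamming weight exactly $n/2+\tau n$ or exactly $n/2-\tau n$, decide which. The crucial fact is that its bounded-error quantum query complexity satisfies $Q_{1/3}(\mathrm{GapMaj}_\tau)=\Omega(1/\tau)$. This is precisely the quantum lower bound for approximate counting: distinguishing two Hamming weights $w_0<w_1$ that are both $\Theta(n)$-far from $0$ and from $n$ requires $\Omega(\sqrt{w_0(n-w_0)}\,/\,(w_1-w_0))$ queries, which here is $\Omega(n/(\tau n))=\Omega(1/\tau)$. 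It follows from the quantum adversary bound for this symmetric partial function, equivalently from Nayak--Wu~\cite{nayak1999quantum} (for $\tau=1/n$ this is just the $\Omega(n)$ bound for Majority~\cite{beals2001quantum,ambainis:lowerboundsj}); the polynomial method alone is not enough for general $\tau$, since a univariate polynomial can jump from $0$ to $1$ across the promised gap with degree much smaller than $1/\tau$.

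Next I would reduce to an exact $\ell$-fold direct product. Work with the hard distribution in which each $z^i$ is independently uniform on weight $n/2+\tau n$ or on weight $n/2-\tau n$, each with probability $1/2$; then each $a_i$ is a fair coin and the $k$ instances are independent, so random guessing succeeds on all $k$ with probability only $2^{-k}$. Suppose a $q$-query algorithm $\mathcal A$ outputs $\tilde a$ agreeing with $a$ in $\geq0.99k$ coordinates with probability $\geq\exp(-k/100)$. The set of disagreeing coordinates then has size $\leq k/100$, and there are at most $N:=\sum_{j\leq k/100}\binom kj\leq 2^{H_2(1/100)k}$ such sets, where $H_2$ is the binary entropy; so by averaging there is a fixed set $W^*\subseteq[k]$ with $\lvert W^*\rvert\leq k/100$ such that $\mathcal A$ outputs $\tilde a_i=a_i$ for every $i\notin W^*$ with probability $\geq\exp(-k/100)/N$. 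Hard-wiring the coordinates in $W^*$ to known instances (weight $n/2+\tau n$, answer $+1$; this costs no oracle queries) turns $\mathcal A$ into an algorithm for the $\ell$-fold direct product $\mathrm{GapMaj}_\tau^{(\ell)}$ with $\ell=k-\lvert W^*\rvert\geq 0.99k$, using $q$ queries, succeeding with probability $\geq 2^{-c_1\ell}$ where $c_1=(1/(100\ln2)+H_2(1/100))/0.99<0.097$.

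Finally I would invoke the strong direct product theorem of Lee and Roland~\cite{lee&roland:qsdpt}, applied to the partial function $\mathrm{GapMaj}_\tau$: there are constants $\alpha,\beta>0$ such that any quantum algorithm for $\mathrm{GapMaj}_\tau^{(\ell)}$ making at most $\alpha\,\ell\cdot Q_{1/3}(\mathrm{GapMaj}_\tau)=\Omega(\alpha\ell/\tau)$ queries has success probability at most $2^{-\beta\ell}$, with $\beta=\beta(\alpha)$ tending to the zero-query guessing rate $1$ as $\alpha\to0$. Choosing $\alpha$ small enough that $\beta(\alpha)>0.097>c_1$, the success probability $2^{-c_1\ell}$ from the previous step forces $q>\alpha\ell/\tau$, i.e.\ $q=\Omega(k/\tau)$. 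I expect the main obstacle to be exactly this matching of constants: the counting step leaks a $2^{\Theta(k)}$ factor into the success probability (both from enumerating disagreement patterns and from $\exp(-k/100)$ itself), so one has to use the sharp quantitative form of Lee--Roland's theorem --- whose exponential decay rate, in the regime where the query budget is a small constant fraction of $\ell/\tau$, approaches the trivial $2^{-\ell}$ --- rather than merely its $2^{-\Omega(k)}$ corollary. A secondary point is confirming the single-instance bound is genuinely $\Omega(1/\tau)$ for the two-point promise, for which the Nayak--Wu counting lower bound (or the adversary method) is exactly what is needed.
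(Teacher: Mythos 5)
Your single-instance bound (the Nayak--Wu $\Omega(1/\tau)$ adversary/counting lower bound for the two-point gap-majority promise, transferred via $\mathrm{Adv}^{\pm} = \Theta(Q_{1/3})$) is exactly what the paper uses, and your instinct that a strong direct product theorem of Lee--Roland is the right tool is also correct. But the route you then take is genuinely different, and more complicated than it needs to be: you reduce the ``agree on $\geq 99\%$'' guarantee to an \emph{exact} $\ell$-fold direct product by enumerating the $\leq \sum_{j\le k/100}\binom{k}{j}$ possible disagreement sets, averaging to pin down a fixed set $W^*$, and hard-wiring those coordinates, at the cost of a $2^{H_2(1/100)k}$ loss that must then be absorbed by the sharp decay rate of Lee--Roland in the small-query regime. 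The paper instead invokes Lee--Roland \cite[Thm.~5.5]{lee&roland:qsdpt} directly in its \emph{threshold} form: the theorem is already stated for the event ``the output agrees with $f^{(k)}$ on at least a $\mu$-fraction of coordinates'' for any $\mu \in [\tfrac{1+\sqrt\delta}{2},1]$, with success probability bounded by $\exp\bigl(k(\tfrac1K - D(\mu\Vert \tfrac{1+\sqrt\delta}{2}))\bigr)$. Plugging in $\mu = 0.99$, $\delta = 0.1$, $K = 3$ gives a decay exponent $\approx -0.03 \le -1/100$, and the query budget $\tfrac{k\delta}{K(1-\delta)}\mathrm{Adv}^{\pm}(f) = \Omega(k/\tau)$ falls out with no extra entropy bookkeeping. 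So the paper's proof is about five lines of arithmetic after quoting the theorem.

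Your counting reduction is essentially sound, but two points would need tightening if you pursued it. First, after averaging to find $W^*$, you should average again over the values $\{z^i\}_{i\in W^*}$ to pick a fixing that preserves the conditional success probability; fixing them arbitrarily to the ``$+1$'' class, as you propose, is not justified by the averaging you did. Second, the hard-distribution framing introduces a worst-case-versus-distributional mismatch when you finally invoke Lee--Roland (a worst-case query lower bound), which is fixable but adds yet another step. Both wrinkles vanish if you use the threshold form directly. The constant-matching worry you flag at the end is real in your route but simply does not arise in the paper's.
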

\begin{proof}
  Let $\mathcal D = \{z \in \{\pm 1\}^n: \abs{z} = n/2 +\tau n \text{ or } \abs{z} = n/2-\tau n\}$ and define the partial Boolean function $f\colon \mathcal D \to \{\pm 1\}$ as 
  \[
  f(z) = \begin{cases} 1 & \text{ if } \abs{z} = n/2+\tau n \\
  -1 & \text{ if } \abs{z} = n/2-\tau n. 
  \end{cases}
  \]
  It is known that computing $f$ with success probability at least $2/3$ takes $\Theta(1/\tau)$ quantum queries to~$z$~\cite[Cor.~1.2]{nayak1999quantum}, i.e., the bounded-error quantum query complexity $Q_{1/3}(f)$ is $\Theta(1/\tau)$.
  
  We now proceed with bounding the query complexity of computing $99\%$ of the entries of $f^{(k)}\colon \mathcal D^k \to \{\pm 1\}^k$ defined by $f^{(k)}(z^1, \dotsc, z^k) = (f(z^1), \dotsc, f(z^k))$.
  We will make use of the general adversary bound $\mathrm{Adv}^{\pm}(f)$~\cite{DBLP:conf/stoc/HoyerLS07} which is known to satisfy $\mathrm{Adv}^{\pm}(f) = \Theta(Q_{1/3}(f))$~\cite[Thm.~1.1]{Lee2011QuantumQC}.
  The strong direct product theorem of Lee and Roland~\cite[Thm.~5.5]{lee&roland:qsdpt} says that for every $0 \leq \delta < 1$, $\mu \in [\frac{1 + \sqrt{\delta}}{2}, 1]$ and integers $k, K$, every quantum algorithm that outputs a bit string~$\tilde a \in \{\pm 1\}^k$, and makes $T$ quantum queries to the bit strings $z^1, \dotsc, z^k$ with \[
    T \leq \frac{k \delta }{K (1 - \delta)} \mathrm{Adv}^{\pm}(f)
  \]
  has the property that $\tilde a$ agrees with $f^{(k)}(z^1, \dotsc, z^k)$ on at least a $\mu$-fraction of the entries 
  with probability at most $\exp(k(\tfrac{1}{K} -  D(\mu \Vert \frac{1 + \sqrt{\delta}}{2})))$.\footnote{In~\cite{lee&roland:qsdpt} the upper bound on $T$ is stated in terms of  $\mathrm{Adv}^*(F)$ where $F = (\delta_{f(x),f(y)})_{x,y \in \mathcal D}$ is the Gram matrix of~$f$. For Boolean functions $f$ one has $\mathrm{Adv}^*(F)=\mathrm{Adv}^{\pm}(f)$~\cite[Thm.~3.4]{Lee2011QuantumQC}.} 
  Here $D(\mu \Vert \frac{1 + \sqrt{\delta}}{2})$ is the Kullback--Leibler divergence between the distributions $(\mu, 1-\mu)$ and $(\frac{1 + \sqrt{\delta}}{2}, \frac{1 - \sqrt{\delta}}{2})$.
  For $\mu = 0.99$, $\delta = 0.1$ and $K = 3$, one has $\frac{1}{K} - D(\mu \Vert \frac{1 + \sqrt{\delta}}{2}) \approx -0.03 \leq -1/100$.
  Therefore, the strong direct product theorem shows that computing $99\%$ of the entries of $f^{(k)}(z^1, \dotsc, z^k) = a$ correctly, with success probability at least $\mathrm{exp}(-k/100)$, takes $\Omega(k\, \mathrm{Adv}^{\pm}(f)) = \Omega(k\, Q_{1/3}(f)) = \Omega(k/\tau)$ quantum queries.
\end{proof}
We will use this lower bound with $k=n/2$ and $\tau=1/n$. The following intuition is useful to keep in mind. For a fixed $b \geq 2$, define the $2k \times n$ matrix $\mat A$ whose $(2i-1)$-th row equals $1+z^i/b$ and whose $(2i)$-th row equals $1-z^i/b$. Then $\mat A$ has the property that the row-marginals encode the Hamming weights of the $z^i$, and are all very close to $n$. (This implies that the first row-rescaling step of Sinkhorn's algorithm encodes the $a_i$.) Moreover, the column-marginals are exactly uniform. Hence, one may hope that all sufficiently precise scalings of $\mat A$ to uniform targets have scaling factors that are close to those given by the first row-rescaling step of Sinkhorn's algorithm (and hence learn most of the $a_i$). 

Below we formalize this approach. We show that if one randomly permutes the coordinates of each $z^i$ (independently over $i$), then with high probability, all $\eps$-scalings of the resulting matrix $\mat A^\sigma$ are close to the first step of Sinkhorn's algorithm; here we need to choose $b$ sufficiently large (${\sim}\,\sqrt{\ln(n)}$) and $\eps$ sufficiently small (${\sim}\,\frac{1}{n^2b}$). The section is organized as follows. In \cref{sec: definition} we formally define our matrix scaling instances and we analyse the first row-rescaling step of Sinkhorn's algorithm. In \cref{sec: concentration} we show that after the row-rescaling step, with high probability (over the choice of permutations), the column-marginals are close to uniform. In \cref{sec: convexity,sec: concluding the lb} we use the strong convexity of the potential $f$ from \cref{def: f} to show that if the above event holds, then all approximate minimizers of $f$ can be used to solve the counting problem.

\subsection{Definition of the scaling instances and analysis of row marginals} \label{sec: definition}
Let $n \geq 4$ be even. Let $k =n/2$ and let $z^1, \dotsc, z^k \in \{\pm 1 \}^n$ have Hamming weight $\abs{z^i} = \abs{\{ j : z_j^i = 1 \}} = n/2 + a_i$ for $a_i \in \{\pm 1\}$.
Sample uniformly random permutations $\sigma^1, \dotsc, \sigma^k \in S_n$ and define $w^i$ by $w^i_j = \smash{z^i_{(\sigma^i)^{-1}(j)}}$.
Let $b \geq 2$ be some number depending on $n$, and consider the $2k \times n$ matrix $\mat A^\sigma$ whose entries are $\mat A^\sigma_{2i-1,j} = 1 + \frac{w^i_j}{b}$ and $\mat A^\sigma_{2i,j} = 1 - \frac{w^i_j}{b}$.
Then each column sum $c_j(\mat A^\sigma)$ is $2k$, and the row sums of $\mat A^\sigma$ are given by
\[
    r_{2i-1}(\mat A^\sigma) = n + \frac1b \sum_{j=1}^n w^i_j = n + \frac2b a_i, \quad r_{2i}(A^\sigma) = n - \frac2b a_i.
\]
Let 
\begin{equation} \label{eq: first step}
    X_{2i-1} = \frac{1}{2k} \cdot \frac{1}{n + \frac2b a_i} \text{ and } X_{2i} = \frac{1}{2k} \cdot \frac{1}{n - \frac2b a_i} \qquad \text{ for all } i \in [k]
\end{equation}
be the row scaling factors obtained from a single Sinkhorn step. We first observe that the difference between $x_{2i-1} := \ln(X_{2i-1})$ and $x_{2i} := \ln(X_{2i})$ permits to recover $a_i$.
\begin{lemma} \label{lem: row marginals step 1}
    For the specific row-scaling factors $\mat X$ for $\mat A^\sigma$ given in \eqref{eq: first step}, for every $i \in [k]$ it holds that
    \[
      \abs*{\ln(X_{2i-1} / X_{2i})} \geq \frac{4}{nb},
    \]
     and $\sign(\ln(X_{2i}/X_{2i-1})) = a_i$.
\end{lemma}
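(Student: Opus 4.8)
The statement is a direct computation about the explicit quantities $X_{2i-1}$ and $X_{2i}$ defined in~\eqref{eq: first step}. Since both are of the form $\frac{1}{2k}\cdot\frac{1}{n\pm\frac2b a_i}$, the ratio $X_{2i-1}/X_{2i}$ is exactly $\frac{n-\frac2b a_i}{n+\frac2b a_i}$, and its logarithm is $\ln\bigl(n-\tfrac2b a_i\bigr)-\ln\bigl(n+\tfrac2b a_i\bigr)$. First I would write this difference as $\int_{n+\frac2b a_i}^{n-\frac2b a_i}\frac{dt}{t}$ (or equivalently invoke the mean value theorem): since the interval has length $\frac4b|a_i|=\frac4b$ and lies inside $[n-1,n+1]$ (using $b\geq2$, so $\frac2b\leq1$), the integrand is at least $\frac{1}{n+1}$ on this interval. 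This immediately yields $\abs*{\ln(X_{2i-1}/X_{2i})}\geq\frac{4}{b(n+1)}$.

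**Closing the small gap.** The bound I just sketched gives $\frac{4}{b(n+1)}$, which is slightly weaker than the claimed $\frac{4}{nb}$. To recover the stated inequality I would instead bound the integrand more carefully, using symmetry of the interval about $n$: on $[n-\frac2b,n+\frac2b]$ the function $1/t$ is convex, so its average value over the symmetric interval is at least its value at the midpoint, namely $\frac1n$. Concretely, $\bigl|\ln(n-\tfrac2b a_i)-\ln(n+\tfrac2b a_i)\bigr| = \bigl|\int_{n-\frac2b}^{n+\frac2b}\frac{dt}{t}\bigr|$ wait — more cleanly, write it as $\ln\frac{n+\frac2b}{n-\frac2b}$ (the sign of $a_i$ only flips which endpoint is which, so the absolute value is the same), and then use $\ln\frac{1+u}{1-u}=2\bigl(u+\frac{u^3}{3}+\cdots\bigr)\geq 2u$ for $u=\frac{2}{bn}\in(0,1)$, giving exactly $\frac{4}{bn}$. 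That is the cleanest route and I would present it this way.

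**The sign claim.** For the sign statement: $X_{2i}/X_{2i-1}=\frac{n+\frac2b a_i}{n-\frac2b a_i}$, whose numerator exceeds its denominator precisely when $a_i>0$ and is smaller when $a_i<0$ (both numerator and denominator are positive since $\frac2b\leq1<n$). Hence $\ln(X_{2i}/X_{2i-1})>0\iff a_i=1$, i.e.\ $\sign(\ln(X_{2i}/X_{2i-1}))=a_i$.

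**Main obstacle.** There is no real obstacle here — the only subtlety is getting the constant right, i.e.\ distinguishing the crude bound $\frac{4}{b(n+1)}$ from the claimed $\frac{4}{bn}$. This is handled by exploiting the symmetry of the interval around $n$ (equivalently, using the series expansion of $\ln\frac{1+u}{1-u}$), rather than a one-sided integral estimate. Everything else is a one-line substitution into the definitions in~\eqref{eq: first step}.
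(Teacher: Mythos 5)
Your proposal is correct and ultimately lands on the same argument as the paper: compute the ratio $X_{2i-1}/X_{2i}=\frac{n-\frac2b a_i}{n+\frac2b a_i}$, take absolute values to get $\ln\frac{n+2/b}{n-2/b}=\ln\frac{1+u}{1-u}$ with $u=\frac{2}{nb}$, and apply $\ln\frac{1+u}{1-u}\geq 2u$. The paper states the final inequality without spelling out the power-series justification and leaves the sign claim implicit, so your write-up is in fact slightly more explicit than theirs; the preliminary integral detour you describe is unnecessary but harmless.
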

\begin{proof}
    Observe that ($nb>2$ and therefore)
    \begin{equation*}
        \abs*{\ln(X_{2i-1}/X_{2i})} = \abs*{\ln\left(\frac{n + \frac{2}{b}}{n - \frac{2}{b}}\right)} = \ln\left(\frac{nb + 2}{nb - 2}\right) \geq \frac{4}{nb}. \qedhere
    \end{equation*}
\end{proof}

\subsection{Concentration of column marginals} \label{sec: concentration}
We first give an explicit expression for the $j$th column marginal of $\mat X\mat A^\sigma$ where $\mat X$ is given in \eqref{eq: first step}. 
\begin{lemma}
    The matrix $\mat X \mat A^\sigma$ has column sums
    \[
        c_j(\mat X \mat A^\sigma) = \frac{1}{2k (n^2 - 4/b^2)} \left(2kn - \frac{4}{b^2}\sum_{i=1}^{k} w_j^i a_i\right) \quad \text{ for } j \in [n].
    \]
\end{lemma}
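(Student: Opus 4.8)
The plan is to expand the definition of the column sum and carry out the algebra, exploiting two facts: $a_i^2 = 1$ (so that a common denominator appears which does not depend on $i$) and the cancellation of the terms linear in $w_j^i$.

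First I would recall that $\mat X$ is the diagonal matrix with $X_{2i-1} = \frac{1}{2k(n + \frac{2}{b}a_i)}$ and $X_{2i} = \frac{1}{2k(n - \frac{2}{b}a_i)}$ as in \eqref{eq: first step}, so that for each $i \in [k]$ and $j \in [n]$ the relevant entries of $\mat X \mat A^\sigma$ are $X_{2i-1}(1 + w_j^i/b)$ and $X_{2i}(1 - w_j^i/b)$. Summing over rows (grouping rows $2i-1$ and $2i$ together),
\[
c_j(\mat X \mat A^\sigma) = \sum_{i=1}^k \left( \frac{1 + w_j^i/b}{2k(n + \frac{2}{b}a_i)} + \frac{1 - w_j^i/b}{2k(n - \frac{2}{b}a_i)}\right).
\]

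Second, for each summand I would put the two fractions over the common denominator $2k(n + \frac{2}{b}a_i)(n - \frac{2}{b}a_i) = 2k(n^2 - \frac{4}{b^2}a_i^2) = 2k(n^2 - \frac{4}{b^2})$, where the last step uses $a_i \in \{\pm 1\}$; crucially this is independent of $i$. Expanding the numerator,
\[
(1 + w_j^i/b)(n - \tfrac{2}{b}a_i) + (1 - w_j^i/b)(n + \tfrac{2}{b}a_i) = 2n - \tfrac{4}{b^2} w_j^i a_i,
\]
where the terms $\pm \tfrac{2}{b}a_i$ and $\pm \tfrac{n}{b}w_j^i$ cancel in pairs. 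Summing over $i \in [k]$ yields $2kn - \frac{4}{b^2}\sum_{i=1}^k w_j^i a_i$ in the numerator over the common denominator $2k(n^2 - 4/b^2)$, which is exactly the claimed formula.

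There is no real obstacle here: the statement is an identity that follows from a short direct computation. The only point worth stating explicitly is that using $a_i^2 = 1$ collapses the per-$i$ denominators to the single factor $n^2 - 4/b^2$, which is what makes the sum telescope into the clean closed form.
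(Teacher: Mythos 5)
Your proof is correct and follows essentially the same direct computation as the paper: group rows $2i-1$ and $2i$, put each pair over the common denominator $2k(n^2 - 4/b^2)$ (using $a_i^2 = 1$), cancel the linear terms in the numerator, and sum over $i$. The paper's proof is the same calculation, merely written slightly more compactly.
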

\begin{proof}
    We have
    \begin{align*}
        c_j(\mat X \mat A^\sigma) & = \sum_{i=1}^{k} \left(\frac{1+w^i_j/b}{2k(n+2a_i/b)}+ \frac{1-w^i_j/b}{2k(n-2a_i/b)}\right) \\
        & = \frac{1}{2k(n^2 - 4/b^2)} \sum_{i=1}^{k} \left((1+w^i_j/b)(n-2a_i/b) + (1-w^i_j/b)(n+2a_i/b)\right) \\
        & = \frac{1}{2k(n^2 - 4/b^2)} \sum_{i=1}^{k} \left(2n - \frac{4w_j^i a_i}{b^2}\right) \\
        & = \frac{1}{2k(n^2 - 4/b^2)} \left(2kn - \frac{4}{b^2}\sum_{i=1}^{k} w^i_j a_i\right). \qedhere
    \end{align*}
\end{proof}

We now show that with high probability (over the choice of permutations) the column marginals are close to uniform. To do so, we first compute the expectation of $\sum_{i=1}^{k} w_j^i a_i$ (\cref{cor: expected col marginal}). This quantity allows us to obtain the desired concentration of the column marginals via Hoeffding's inequality (\cref{lem: concentration of col marginals}). 
\begin{lemma}
    Let $I = \{ i \in [k] : a_i = 1 \}$ and $I^c = [k] \setminus I$. Define random variables $W_j$, $W_j^c$ by
    \[
        W_j = \sum_{i \in I} w^i_j, \quad W_j^c = \sum_{i \in I^c} w^i_j.
    \]
    Then $\Exp[W_j] = \frac{2 \abs{I}}{n}$ and $\Exp[W_j^c] = -\frac{2 \abs{I^c}}{n}$.
\end{lemma}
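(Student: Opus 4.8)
The plan is to compute $\Exp[W_j]$ directly from the definition $w^i_j = z^i_{(\sigma^i)^{-1}(j)}$, using that the permutations $\sigma^i$ are independent and uniform. For a fixed $i$, the coordinate $w^i_j$ equals $z^i_\ell$ where $\ell = (\sigma^i)^{-1}(j)$ is a uniformly random index in $[n]$. Hence $\Exp[w^i_j] = \frac{1}{n}\sum_{\ell=1}^n z^i_\ell$, which is exactly $\frac{1}{n}$ times the ``signed Hamming weight'' of $z^i$: if $z^i$ has $|z^i| = n/2 + a_i$ ones (entries equal to $+1$) and $n/2 - a_i$ entries equal to $-1$, then $\sum_{\ell=1}^n z^i_\ell = (n/2 + a_i) - (n/2 - a_i) = 2a_i$, so $\Exp[w^i_j] = 2a_i/n$.

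From here the two claims follow by linearity of expectation. For $i \in I$ we have $a_i = 1$, so $\Exp[w^i_j] = 2/n$, and summing over $i \in I$ gives $\Exp[W_j] = \frac{2|I|}{n}$. Similarly, for $i \in I^c$ we have $a_i = -1$, so $\Exp[w^i_j] = -2/n$, and summing over $i \in I^c$ yields $\Exp[W_j^c] = -\frac{2|I^c|}{n}$. I would present this as a short two-line computation: first establish $\Exp[w^i_j] = 2a_i/n$ via the uniform-index observation, then sum.

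There is essentially no obstacle here — the only thing to be slightly careful about is the bookkeeping on the sign convention (that Hamming weight $n/2 + a_i$ refers to the number of $+1$ entries, as fixed in \cref{sec: definition}), and the fact that the claim about $W_j^c$ uses $a_i = -1$ on $I^c$ rather than reusing the $W_j$ computation verbatim. Concretely the proof reads:

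\begin{proof}
    Fix $i \in [k]$ and $j \in [n]$. Since $\sigma^i$ is a uniformly random permutation, $(\sigma^i)^{-1}(j)$ is a uniformly random element of $[n]$, so
    \[
        \Exp[w^i_j] = \Exp\bigl[z^i_{(\sigma^i)^{-1}(j)}\bigr] = \frac1n \sum_{\ell=1}^n z^i_\ell = \frac1n\bigl((n/2 + a_i) - (n/2 - a_i)\bigr) = \frac{2a_i}{n},
    \]
    where we used that $z^i$ has exactly $n/2 + a_i$ entries equal to $+1$ and $n/2 - a_i$ entries equal to $-1$. By linearity of expectation, $\Exp[W_j] = \sum_{i \in I} \Exp[w^i_j] = \sum_{i \in I} \frac2n = \frac{2\abs{I}}{n}$, since $a_i = 1$ for $i \in I$. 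Likewise $\Exp[W_j^c] = \sum_{i \in I^c} \Exp[w^i_j] = \sum_{i \in I^c} \bigl(-\frac2n\bigr) = -\frac{2\abs{I^c}}{n}$, since $a_i = -1$ for $i \in I^c$.
\end{proof}
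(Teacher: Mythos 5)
Your proof is correct and takes essentially the same route as the paper: both establish $\Exp[w^i_j] = 2a_i/n$ from the uniformity of $\sigma^i$ (the paper by computing $\Pr[w^i_j = 1] = 1/2 + a_i/n$, you by averaging the entries of $z^i$ — the same observation phrased slightly differently) and then conclude by linearity of expectation.
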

\begin{proof}
    Observe that each $w^i_j$ is $1$ with probability $\frac{1}{2} + \frac{a_i}{n}$ because $\sigma^i$ is chosen uniformly randomly from $S_n$, and is $-1$ with probability $\frac{1}{2} - \frac{a_i}{n}$. Therefore $\Exp[w^i_j] = \frac{2a_i}{n}$. By linearity of expectation, the result follows.
\end{proof}
\begin{corollary} \label{cor: expected col marginal}
    We have
    \[
        \Exp\left[\sum_{i=1}^k w^i_j a_i \right] = \Exp[W_j] - \Exp[W_j^c] = \frac{2 (\abs{I} + \abs{I^c})}{n} = \frac{2k}{n}. 
    \]
\end{corollary}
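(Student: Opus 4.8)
The plan is to reduce the claim to the preceding lemma by splitting the sum $\sum_{i=1}^k w^i_j a_i$ according to the sign of $a_i$. First I would use that $a_i \in \{\pm 1\}$ to partition $[k] = I \cup I^c$ with $a_i = 1$ for $i \in I$ and $a_i = -1$ for $i \in I^c$ (this is exactly the partition introduced in the previous lemma). Then $a_i w^i_j = w^i_j$ for $i \in I$ and $a_i w^i_j = -w^i_j$ for $i \in I^c$, so that $\sum_{i=1}^k w^i_j a_i = \sum_{i \in I} w^i_j - \sum_{i \in I^c} w^i_j = W_j - W_j^c$ by the definitions of $W_j$ and $W_j^c$.

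Next I would apply linearity of expectation to obtain $\Exp[\sum_{i=1}^k w^i_j a_i] = \Exp[W_j] - \Exp[W_j^c]$ and substitute the values $\Exp[W_j] = \tfrac{2\abs{I}}{n}$ and $\Exp[W_j^c] = -\tfrac{2\abs{I^c}}{n}$ from the preceding lemma. This gives $\tfrac{2\abs{I}}{n} + \tfrac{2\abs{I^c}}{n} = \tfrac{2(\abs{I} + \abs{I^c})}{n}$. Finally, since $I$ and $I^c$ partition $[k]$, we have $\abs{I} + \abs{I^c} = k$, yielding $\tfrac{2k}{n}$ as claimed.

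The only point requiring any care is the bookkeeping of signs in the $i \in I^c$ term: the minus from $a_i = -1$ combines with the minus in $\Exp[W_j^c] = -\tfrac{2\abs{I^c}}{n}$ to produce an overall plus, which is precisely what makes the two counts add rather than cancel. Beyond this there is no real obstacle — the statement is an immediate corollary of the previous lemma together with linearity of expectation.
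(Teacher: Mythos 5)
Your argument is correct and is exactly the route the paper takes: split the sum by the sign of $a_i$ to identify it with $W_j - W_j^c$, apply linearity of expectation together with the values $\Exp[W_j] = \tfrac{2|I|}{n}$ and $\Exp[W_j^c] = -\tfrac{2|I^c|}{n}$ from the preceding lemma, and use $|I| + |I^c| = k$. The paper treats this chain of equalities as immediate and does not even spell out the sign bookkeeping you carefully flag, so your write-up is if anything slightly more explicit than the original.
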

\begin{lemma}
    \label{lem: concentration of col marginals}
    For $t \geq 0$ and $j \in [n]$, with probability at least $1 - 2e^{-t^2 / 2}$, we have
    \[
         \abs*{c_j(\mat X \mat A^\sigma) - \frac1n} = \bigO{\frac{t}{b^2 n^2 \sqrt{k}}}.
    \]
\end{lemma}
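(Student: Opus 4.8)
The plan is to exploit the explicit formula for $c_j(\mat X \mat A^\sigma)$ from the previous lemma together with the expectation computed in \cref{cor: expected col marginal}, and then to apply Hoeffding's inequality to the only random quantity appearing, namely $S_j := \sum_{i=1}^k w^i_j a_i$.

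First I would observe that $\Exp[c_j(\mat X \mat A^\sigma)]$ is \emph{exactly} $1/n$: substituting $\Exp[S_j] = 2k/n$ into the formula gives $\frac{1}{2k(n^2-4/b^2)}\bigl(2kn - \frac{4}{b^2}\cdot\frac{2k}{n}\bigr) = \frac{n - 4/(b^2 n)}{n^2 - 4/b^2} = \frac1n$, since the numerator equals $\frac1n(n^2 - 4/b^2)$. Consequently,
\[
  c_j(\mat X \mat A^\sigma) - \frac1n = -\frac{4}{b^2}\cdot\frac{1}{2k(n^2-4/b^2)}\bigl(S_j - \Exp[S_j]\bigr) = -\frac{2}{b^2 k (n^2-4/b^2)}\bigl(S_j - \Exp[S_j]\bigr),
\]
so the whole task reduces to a tail bound on $S_j - \Exp[S_j]$.

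Next I would note that $S_j$ is a sum of $k$ \emph{independent} random variables: the randomness in $w^i_j$ comes only through the permutation $\sigma^i$, and the $\sigma^i$ are sampled independently over $i$; moreover each term $w^i_j a_i$ takes values in $\{-1,1\}$, hence lies in an interval of length $2$. Hoeffding's inequality therefore gives $\Pr[\abs{S_j - \Exp[S_j]} \geq s] \leq 2\exp(-s^2/(2k))$ for every $s \geq 0$; taking $s = t\sqrt{k}$ yields $\Pr[\abs{S_j - \Exp[S_j]} \geq t\sqrt{k}] \leq 2e^{-t^2/2}$. On the complementary event,
\[
  \abs*{c_j(\mat X \mat A^\sigma) - \frac1n} \leq \frac{2}{b^2 k (n^2-4/b^2)}\cdot t\sqrt{k} = \frac{2t}{b^2 \sqrt{k}\,(n^2 - 4/b^2)},
\]
and since $b \geq 2$ we have $n^2 - 4/b^2 \geq n^2 - 1 \geq n^2/2$ (using $n \geq 4$), so the right-hand side is $O\bigl(t/(b^2 n^2\sqrt{k})\bigr)$, as claimed.

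I do not expect a genuine obstacle here; the only two points requiring care are (i) checking that the mean of $c_j(\mat X \mat A^\sigma)$ is exactly $1/n$, so that no lower-order term survives and the deviation is purely a rescaled copy of $S_j - \Exp[S_j]$, and (ii) bookkeeping the independence structure correctly — the summands are independent across $i$, while for a fixed $i$ the coordinates $w^i_1,\dots,w^i_n$ are of course dependent, which is irrelevant because $j$ is fixed. One could equally well invoke a Chernoff/Bernstein-type bound, but Hoeffding is the cleanest given the uniform $[-1,1]$ bound on the summands.
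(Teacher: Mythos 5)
Your proposal is correct and follows essentially the same route as the paper: identify $S_j = \sum_i w^i_j a_i$ as the only source of randomness, use the independence of the $\sigma^i$ over $i$ to apply Hoeffding to $S_j$, and translate the tail bound into a bound on $\lvert c_j(\mat X \mat A^\sigma) - 1/n\rvert$. The only cosmetic difference is that you note up front that $\Exp[c_j(\mat X \mat A^\sigma)] = 1/n$ exactly, so the deviation is a pure rescaling of $S_j - \Exp[S_j]$, whereas the paper reaches the identical identity $\lvert c_j - 1/n\rvert = \frac{2}{b^2 k (n^2 - 4/b^2)}\lvert S_j - \Exp[S_j]\rvert$ by a short chain of algebraic manipulations; the resulting constants and probabilities match.
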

\begin{proof}
    Observe first that
    \begin{align*}
      \abs*{c_j(\mat X \mat A^\sigma) - \frac1n} & = \abs*{ \frac{1}{2k(n^2 - 4/b^2)} \left(2kn - \frac{4}{b^2}\sum_{i=1}^{k} w^i_j a_i\right) -\frac1n} \\
      & = \frac{1}{2k n (n^2 - 4/b^2)} \abs*{ n \left(2kn - \frac{4}{b^2}\sum_{i=1}^{k} w^i_j a_i\right) - 2k(n^2 - \frac{4}{b^2})} \\
      & = \frac{1}{2k n (n^2 - 4/b^2)} \abs*{ \frac{8k}{b^2} - \frac{4n}{b^2}\sum_{i=1}^{k} w^i_j a_i} \\
      & = \frac{4}{2k n(n^2-4/b^2)b^2} \abs*{ 2k - n \sum_{i=1}^{k} w^i_j a_i}.
    \end{align*}
    For fixed $j$ and distinct $i,i' \in [k]$, $w^i_j$ and $w^{i'}_j$ are independently distributed random variables because $\sigma^i$ and $\sigma^{i'}$ are independent. Therefore, $V_j := W_j - W_j^c = \sum_{i=1}^k w^i_ja_i$ is a sum of $k$ independent random variables, with each $a_i w^i_j \in [-1, 1]$, and Hoeffding's inequality yields for any $t \geq 0$ that
    \[
        \Pr[\abs{V_j - \Exp[V_j]} \geq t \cdot \sqrt{k}] \leq 2 \exp(-t^2 / 2).
    \]
    Assuming that $\abs{V_j - \Exp[V_j]} \leq t \sqrt{k}$, we have
    \begin{align*}
        \abs*{2k - n \sum_{i=1}^k a_i w^i_j} & = n \abs*{\Exp[V_j] - V_j} \leq n t \sqrt{k}.
    \end{align*}
    With this estimate, we see that
    \begin{equation*}
        \abs*{c_j(\mat X \mat A^\sigma) - \frac1n} \leq \frac{4}{2k n(n^2-4/b^2)b^2} \cdot n t \sqrt{k} = \frac{2 t}{b^2 (n^2 - 4/b^2) \sqrt{k}}. \qedhere
    \end{equation*}
\end{proof}
\begin{corollary} \label{cor: grad bound}
    For any $t \geq 0$, with probability $\geq 1 - 2n e^{-t^2/2}$, we have
    \begin{align*}
        \norm*{c(\mat X \mat A^\sigma) - \frac{\vec 1}{n}}_2 \leq \frac{2 \sqrt{n} t}{b^2 (n^2 - 4/b^2) \sqrt{k}} = \bigO{\frac{t}{b^2 n^2}}.
    \end{align*}
\end{corollary}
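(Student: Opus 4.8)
The plan is to apply a union bound over the $n$ columns to the per-column concentration estimate of \cref{lem: concentration of col marginals}, and then pass from a uniform coordinate-wise bound to an $\ell_2$-bound in the standard way.

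First I would fix $t \geq 0$ and invoke \cref{lem: concentration of col marginals} separately for each $j \in [n]$: the event $E_j$ that $\abs*{c_j(\mat X \mat A^\sigma) - \frac1n} \leq \frac{2t}{b^2(n^2 - 4/b^2)\sqrt{k}}$ fails with probability at most $2 e^{-t^2/2}$. By the union bound, the probability that some $E_j$ fails is at most $2n e^{-t^2/2}$, so with probability at least $1 - 2n e^{-t^2/2}$ all of $E_1, \dotsc, E_n$ hold simultaneously.

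On that event I would bound
\[
  \norm*{c(\mat X \mat A^\sigma) - \frac{\vec 1}{n}}_2^2 = \sum_{j=1}^n \abs*{c_j(\mat X \mat A^\sigma) - \frac1n}^2 \leq n \left(\frac{2t}{b^2(n^2 - 4/b^2)\sqrt{k}}\right)^{\!2},
\]
and take square roots to obtain the explicit bound $\frac{2\sqrt{n}\, t}{b^2(n^2-4/b^2)\sqrt{k}}$ claimed in the statement. Finally, recalling that $k = n/2$ (so $\sqrt{n}/\sqrt{k} = \sqrt{2}$) and that $b \geq 2$, $n \geq 4$ (so $4/b^2 \leq 1$ and hence $n^2 - 4/b^2 = \Theta(n^2)$), this simplifies to $\bigO{t/(b^2 n^2)}$.

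There is essentially no real obstacle here; the corollary is a routine consequence of \cref{lem: concentration of col marginals}. The only points requiring a little care are to apply that lemma with the \emph{same} parameter $t$ for every column before taking the union bound, and to keep track of the constant when converting the coordinate-wise estimate into an $\ell_2$-bound via the factor $\sqrt{n}$.
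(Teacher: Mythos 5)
Your proof is correct and matches the argument the paper leaves implicit: a union bound over the $n$ columns applied to \cref{lem: concentration of col marginals} followed by the standard $\sqrt{n}$ conversion from a uniform coordinate-wise bound to an $\ell_2$-bound. The paper states the corollary without proof precisely because the argument is this routine, and your bookkeeping (same $t$ for each column, $k = n/2$, $n^2 - 4/b^2 = \Theta(n^2)$) is all in order.
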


\subsection{Strong convexity properties of the potential} \label{sec: convexity}

For a $\lambda$-strongly convex function $f$, the set $\{\vec z : \|\nabla f(\vec z)\|_2 \leq \eps\}$ has a diameter that is bounded by a function of $\lambda$ (we make this well-known fact precise in \cref{lem:strong convexity dist to minimizer}). We show that our potential is strongly convex when viewed as a function from (a suitable subset of) the linear subspace $V = \{(\vec x, \vec y) \in \R^n \times \R^n : \ip{(\vec x, \vec y)}{(\vec 1_n, - \vec 1_n)} = 0\}$ to $\R$ (note that $f$ is invariant under translation by multiples of $(\vec 1_n, -\vec 1_n)$).
We use this to show that whenever $\norm{\grad f(\vec x, \vec y)}_2$ is small, $(\vec x, \vec y)$ is close to the minimizer of $f$ on $V$.
It is easy to verify that \cref{cor: grad bound} in fact gives an upper bound on the norm of the gradient at $(\ln(\mat X),\vec 0)$ (with $\mat X$ as in \eqref{eq: first step}). This implies that $(\ln(\mat X),\vec 0)$ is close to the minimizer of $f$ on $V$, and by the triangle inequality, is also close to any other $(\vec x, \vec y)$ for which $\norm{\grad f(\vec x, \vec y)}_2$ is small.
In the rest of this section we make the above precise. 

In \cref{lem: Hessian lb} we show that the Hessian of $f$ restricted to $V$ has smallest eigenvalue at least~$n\cdot\mu(\vec x, \vec y)$ where $\mu(\vec x, \vec y)$ is the smallest entry appearing in $(A_{ij} e^{x_i+y_j})_{i,j}$.
In \cref{lem: smallest entry lb} we show that $\mu(\vec x^*, \vec y^*) = \Theta(1/n^2)$. This implies that $\mu(\vec x, \vec y) = \Theta(1/n^2)$ for all $(\vec x, \vec y)$ that are a constant distance away from $(\vec x^*, \vec y^*)$ in the $\ell_\infty$-norm, in other words, $f$ is $\Theta(1/n)$-strongly convex around its minimizer. \cref{lem: grad to inf norm bound} summarizes these lemmas: it gives a quantitative bound on the distance to a minimizer, in terms of the gradient. 

\begin{lemma} \label{lem: Hessian lb}
    Let $\mat A$ be an entrywise non-negative $n \times n$ matrix with $\|\mat A\|_1 = 1$ and let $f\colon V \subset \R^n \times \R^n \to \R$ be the potential for this matrix as given in \eqref{def: f}, where $V$ is the orthogonal complement of $(\vec 1_{n}, -\vec 1_n)$.
    Then $\hess f(\vec x, \vec y) \succeq \mu(\vec x, \vec y) \cdot n \cdot \mat P_V$ where $\mat P_V$ is the projection onto $V$ and $\mu(\vec x, \vec y)$ is the smallest entry appearing in $\A x y $. In particular, $f$ is strictly convex on $V$.
\end{lemma}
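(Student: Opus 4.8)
The plan is to exploit the Laplacian-type structure of $\hess f$. First I would write out the quadratic form: with $\mat B := \A x y$ and $\vec u = (\vec x, \vec y) \in \R^n \times \R^n$, the displayed formula for $\hess f$ (or, equivalently, differentiating $f$ from \eqref{def: f} twice) gives
\[
\vec u^\top \hess f(\vec x, \vec y)\, \vec u = \sum_{i=1}^n r_i(\mat B)\, x_i^2 + 2 \sum_{i,j=1}^n B_{ij}\, x_i y_j + \sum_{j=1}^n c_j(\mat B)\, y_j^2 .
\]
Since $r_i(\mat B) = \sum_j B_{ij}$ and $c_j(\mat B) = \sum_i B_{ij}$, regrouping the right-hand side by pairs $(i,j)$ collapses it to the clean identity $\vec u^\top \hess f(\vec x, \vec y)\, \vec u = \sum_{i,j=1}^n B_{ij}\,(x_i + y_j)^2$. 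Taking $\vec u = (\vec 1_n, -\vec 1_n)$ makes every term vanish, so $(\vec 1_n, -\vec 1_n) \in \ker \hess f$ (reflecting the translation invariance of $f$); it therefore suffices to lower bound this form on $V$ and then transfer the estimate to all of $\R^n \times \R^n$ against $\mat P_V$.

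Second, I would bound entrywise and then restrict to $V$. Every entry satisfies $B_{ij} = A_{ij} e^{x_i + y_j} \geq \mu(\vec x, \vec y)$, so $\vec u^\top \hess f(\vec x, \vec y)\, \vec u \geq \mu(\vec x, \vec y) \sum_{i,j}(x_i + y_j)^2$. Expanding the square gives $\sum_{i,j}(x_i + y_j)^2 = n\norm{\vec x}_2^2 + n\norm{\vec y}_2^2 + 2\,\ip{\vec x}{\vec 1_n}\ip{\vec y}{\vec 1_n}$, and for $\vec u = (\vec x, \vec y) \in V$ the defining relation $\ip{(\vec x, \vec y)}{(\vec 1_n, -\vec 1_n)} = 0$ says exactly that $\ip{\vec x}{\vec 1_n} = \ip{\vec y}{\vec 1_n}$, so the cross term equals $2\,\ip{\vec x}{\vec 1_n}^2 \geq 0$ and can be discarded. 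Hence, for every $\vec u = (\vec x, \vec y) \in V$,
\[
\vec u^\top \hess f(\vec x, \vec y)\, \vec u \;\geq\; \mu(\vec x, \vec y) \cdot n \cdot \big(\norm{\vec x}_2^2 + \norm{\vec y}_2^2\big) \;=\; \mu(\vec x, \vec y) \cdot n \cdot \norm{\vec u}_2^2 .
\]

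Finally, I would promote this to the claimed operator inequality on the whole space. For arbitrary $\vec u \in \R^n \times \R^n$, decompose $\vec u = \mat P_V \vec u + \vec w$ with $\vec w \in \linspan\{(\vec 1_n, -\vec 1_n)\}$; since $\vec w \in \ker \hess f$, both $\hess f(\vec x, \vec y)\, \vec w = 0$ and the cross term $\vec w^\top \hess f(\vec x, \vec y)\, (\mat P_V \vec u)$ vanish, so $\vec u^\top \hess f(\vec x, \vec y)\, \vec u = (\mat P_V \vec u)^\top \hess f(\vec x, \vec y)\, (\mat P_V \vec u) \geq \mu(\vec x, \vec y) \cdot n \cdot \norm{\mat P_V \vec u}_2^2 = \mu(\vec x, \vec y) \cdot n \cdot \vec u^\top \mat P_V \vec u$, which is exactly $\hess f(\vec x, \vec y) \succeq \mu(\vec x, \vec y) \cdot n \cdot \mat P_V$. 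When $\mu(\vec x, \vec y) > 0$ --- in particular whenever $\mat A$ is entrywise positive --- the right-hand side restricts to $\mu(\vec x, \vec y) \cdot n$ times the identity on $V$, so $f$ is strictly convex on $V$. I do not expect a genuine obstacle here: the lemma reduces to the structural identity for the Hessian form together with a one-line computation. The one point that needs care is the reduction to $V$, where the exact defining equation of $V$ is what turns the cross term $2\,\ip{\vec x}{\vec 1_n}\ip{\vec y}{\vec 1_n}$, which has no definite sign in general, into a nonnegative square; after that, $(\vec 1_n, -\vec 1_n) \in \ker \hess f$ upgrades the bound on $V$ to the operator inequality against $\mat P_V$.
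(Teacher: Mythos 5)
Your proof is correct, and it reaches the same bound by a somewhat more elementary route than the paper. The paper conjugates $\hess f$ by $\diag(\mat I, -\mat I)$ to exhibit it as the Laplacian of a bipartite graph, lower bounds it entrywise by $\mu(\vec x,\vec y)$ times the Laplacian of $K_{n,n}$, and then invokes the known spectrum of $L(K_{n,n})$ (eigenvalues $2n, n, 0$ with multiplicities $1, 2n-2, 1$). You bypass both the conjugation and the spectral citation by writing the quadratic form directly as $\sum_{i,j} B_{ij}(x_i+y_j)^2$, bounding $B_{ij} \geq \mu(\vec x,\vec y)$, expanding $\sum_{i,j}(x_i+y_j)^2 = n\norm{\vec x}_2^2 + n\norm{\vec y}_2^2 + 2\ip{\vec x}{\vec 1}\ip{\vec y}{\vec 1}$, and observing that on $V$ the cross term becomes the nonnegative square $2\ip{\vec x}{\vec 1}^2$. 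In effect you reprove the needed fact $L(K_{n,n}) \succeq n\mat P_V$ from scratch rather than quoting it; your version is more self-contained and requires no spectral facts, while the paper's version foregrounds the Laplacian structure, which is thematically relevant to the solvers used later. One small additional merit of your write-up: you explicitly flag that the strict-convexity conclusion requires $\mu(\vec x,\vec y) > 0$, i.e.\ that $\mat A$ be entrywise positive, a hypothesis the lemma statement glosses over by only assuming non-negativity.
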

\begin{proof}
The Hessian of the potential $f(\vec x, \vec y) = \sum_{i,j=1}^n A_{ij} e^{x_i + y_j} - \ip{\vec r}{\vec x} - \ip{\vec c}{\vec y}$ is given by
    \begin{equation*}
        \hess f(\vec x, \vec y) = \begin{bmatrix}
        \diag(\vec r(\A x y) & \A x y \\
        \A x y^T & \diag(\vec c(\A x y))
        \end{bmatrix}.
    \end{equation*}
    We give a lower bound on the non-zero eigenvalues of the Hessian as follows. 
    Conjugating the Hessian with the $2n \times 2n$ matrix $\diag(\mat I, -\mat I)$ preserves the spectrum and yields the matrix 
    \begin{equation*}
        \begin{bmatrix}
            \diag(\vec r(\A  x y) & -\A x y \\
            -\A x y^T & \diag(\vec c(\A  x y))
        \end{bmatrix}.
    \end{equation*}
    One can recognize this as a weighted Laplacian of a complete bipartite graph.
    We denote by $\mu(\vec x, \vec y)$ the smallest entry of $\A x y$ and we use $\mat J$ for the $n \times n$ all-ones matrix. Then
    \begin{equation*}
        \begin{bmatrix}
            \diag(\vec r(\A  x y) & -\A x y \\
            -\A x y^T & \diag(\vec c(\A  x y))
        \end{bmatrix}
        \succeq
        \begin{bmatrix}
            n \mu(\vec x, \vec y) \mat I & - \mu(\vec x, \vec y) \mat J \\
            - \mu(\vec x, \vec y) \mat J & n \mu(\vec x, \vec y) \mat I
        \end{bmatrix}
        = \mu(\vec x, \vec y) \begin{bmatrix}
            n \mat I & -\mat J \\
            -\mat J & n \mat I
        \end{bmatrix},
    \end{equation*}
    where the PSD inequality follows because the difference of the terms is the weighted Laplacian of the bipartite graph with weighted bipartite adjacency matrix $\A x y - \mu(\vec x, \vec y) \mat J$, which has non-negative entries.
    Now observe that the last term $\begin{bmatrix}
            n \mat I & -\mat J \\
            -\mat J & n \mat I
        \end{bmatrix}$ is the (unweighted) Laplacian of the complete bipartite graph $K_{n,n}$, whose spectrum is $2n$, $n$, $0$ with multiplicities $1$, $2n-2$ and $1$ respectively.
    The zero eigenvalue corresponds to the all-ones vector of length $2n$ and it is easy to see that indeed $(\vec 1, -\vec 1)$ also lies in the kernel of $\nabla^2 f(\vec x, \vec y)$.
    This shows that the non-zero eigenvalues of $\hess f(\vec x, \vec y)$ are at least $n \cdot \mu(\vec x, \vec y)$, and that it has a one-dimensional eigenspace corresponding to $0$, spanned by the vector $(\vec 1, -\vec 1)$. Hence, $\hess f(\vec x, \vec y) \succeq \mu(\vec x, \vec y) \cdot n \cdot \mat P_V$. 
\end{proof}

We now bound the smallest entry of the rescaled matrix. For this we use the following lemma (cf.~\cite[Lem.~6.2]{klrs08},~\cite[Cor.~C.3 (arXiv)]{qscalingICALP}) which bounds the variation norm of the scaling vectors $(\vec x^*, \vec y^*)$ of an exact scaling. 
\begin{lemma}\label{cor: variation norm bound}
  Let $\mat A \in [\mu, \nu]^{n \times n}$ and let $(\vec x^*, \vec y^*) \in \R^n \times \R^n$ be such that $\mat A(\vec x^*, \vec y^*)$ is exactly $(\vec r, \vec c)$-scaled.
  Then
  \[
    x_{\max}^* - x_{\min}^* \leq \ln \frac{\nu}{\mu} + \ln \frac{r_{\max}}{r_{\min}},
  \]
  and
  \[
    y_{\max}^* - y_{\min}^* \leq \ln \frac{\nu}{\mu} + \ln \frac{c_{\max}}{c_{\min}}.
  \]
\end{lemma}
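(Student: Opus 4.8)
The plan is to exploit the fact that at an exact scaling the row- and column-marginals are fixed, which pins down each scaling factor relative to the others. First I would write out the row-sum condition: for every $i \in [n]$ we have $r_i = \sum_{j=1}^n A_{ij} e^{x_i^* + y_j^*} = e^{x_i^*} \sum_{j=1}^n A_{ij} e^{y_j^*}$, and hence $e^{x_i^*} = r_i / \bigl(\sum_{j} A_{ij} e^{y_j^*}\bigr)$. Taking the ratio of this identity for two indices $i$ and $i'$ gives
\[
  \frac{e^{x_i^*}}{e^{x_{i'}^*}} = \frac{r_i}{r_{i'}} \cdot \frac{\sum_{j} A_{i'j} e^{y_j^*}}{\sum_{j} A_{ij} e^{y_j^*}}.
\]
Now I would bound the ratio of the two sums: since $\mu \leq A_{ij}, A_{i'j} \leq \nu$, termwise we have $A_{i'j} e^{y_j^*} \leq \frac{\nu}{\mu} A_{ij} e^{y_j^*}$, so $\sum_j A_{i'j} e^{y_j^*} \leq \frac{\nu}{\mu} \sum_j A_{ij} e^{y_j^*}$, and therefore $e^{x_i^* - x_{i'}^*} \leq \frac{r_i}{r_{i'}} \cdot \frac{\nu}{\mu} \leq \frac{r_{\max}}{r_{\min}} \cdot \frac{\nu}{\mu}$.

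Taking logarithms, $x_i^* - x_{i'}^* \leq \ln\frac{\nu}{\mu} + \ln\frac{r_{\max}}{r_{\min}}$ for all $i, i'$; choosing $i$ to achieve $x_{\max}^*$ and $i'$ to achieve $x_{\min}^*$ gives the first claimed inequality. The second inequality follows by the symmetric argument: the column-sum condition gives $e^{y_j^*} = c_j / \bigl(\sum_i A_{ij} e^{x_i^*}\bigr)$, and the same termwise comparison of the denominators (now using that the entries $A_{ij}$ lie in $[\mu,\nu]$ as a function of $i$ for fixed columns $j, j'$) yields $y_j^* - y_{j'}^* \leq \ln\frac{\nu}{\mu} + \ln\frac{c_{\max}}{c_{\min}}$, and we maximize over $j, j'$.

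I do not anticipate a genuine obstacle here — the argument is a short direct computation. The only points requiring a little care are: (i) noting that all relevant quantities $e^{x_i^*}$, $e^{y_j^*}$, $\sum_j A_{ij} e^{y_j^*}$, and $r_i, c_j$ are strictly positive (guaranteed since $\mat A$ is entrywise positive — indeed entrywise at least $\mu > 0$ — and $\vec r, \vec c$ are entrywise positive), so that the divisions and logarithms are legitimate; and (ii) making sure the termwise bound is applied in the correct direction so that the ratio of sums, not just individual terms, is controlled. Once these are in place the inequalities follow immediately.
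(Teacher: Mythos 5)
Your proof is correct. Note that the paper does not actually prove this lemma itself: it is stated with a citation (cf.\ \cite{klrs08} Lem.~6.2 and \cite{qscalingICALP} Cor.~C.3) and no proof environment follows. The argument you give — divide out $e^{x_i^*}$ from the row-sum identity, take a ratio for two indices $i,i'$, and bound the ratio of the two remaining sums termwise by $\nu/\mu$ — is the standard short derivation and matches the one used in those references. The only hypothesis you are implicitly using is $\mu>0$ so that the denominators are nonzero and $\ln(\nu/\mu)$ is defined, which is the intended reading of $\mat A\in[\mu,\nu]^{n\times n}$ throughout the paper. Nothing further is needed.
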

\begin{lemma} \label{lem: smallest entry lb}
Let $\mat A \in [\mu,\nu]^{n \times n}$ be an entrywise-positive matrix with $\|\mat A\|_1 = 1$ and let $f\colon V \subset \R^{n} \times \R^n \to \R$ be the potential for this matrix as given in \eqref{def: f}, where $V$ is the orthogonal complement of $(\vec 1_{n}, -\vec 1_n)$. Let $(\vec x^*, \vec y^*) \in V$ be the unique minimizer of $f$ in $V$. Then $\mu(\vec x^*, \vec y^*) \geq \frac{1}{n^2 } \left(\frac{\mu}{\nu}\right)^3$. Moreover, for any $(\vec x, \vec y) \in V$ we have $\mu(\vec x, \vec y) \geq \mu(\vec x^*, \vec y^*)  e^{-2\norm{(\vec x, \vec y) - (\vec x^*, \vec y^*)}_\infty}$.
\end{lemma}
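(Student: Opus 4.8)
The plan is to prove the two assertions separately, starting with the ``moreover'' part since it is purely mechanical and will also motivate the main bound. For any $(\vec x, \vec y) \in V$, write $(\vec x, \vec y) = (\vec x^*, \vec y^*) + (\vec u, \vec v)$ where $(\vec u, \vec v) \in V$. Then $A_{ij} e^{x_i + y_j} = A_{ij} e^{x_i^* + y_j^*} \cdot e^{u_i + v_j}$, and since $\abs{u_i + v_j} \leq \abs{u_i} + \abs{v_j} \leq 2 \norm{(\vec u, \vec v)}_\infty = 2\norm{(\vec x,\vec y) - (\vec x^*, \vec y^*)}_\infty$, every entry of $\A x y$ is at least $e^{-2\norm{(\vec x,\vec y)-(\vec x^*,\vec y^*)}_\infty}$ times the corresponding entry of $\mat A(\vec x^*, \vec y^*)$. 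Taking the minimum over all $(i,j)$ gives the claimed inequality $\mu(\vec x, \vec y) \geq \mu(\vec x^*, \vec y^*) e^{-2\norm{(\vec x, \vec y) - (\vec x^*, \vec y^*)}_\infty}$.

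For the main bound I would argue as follows. Since $\mat A(\vec x^*, \vec y^*)$ is the (unique, as $\mat A$ is entrywise positive) exact $(\vec r, \vec c)$-scaling, \cref{cor: variation norm bound} applies with the given $\mu, \nu$ and tells us that $x^*_{\max} - x^*_{\min} \leq \ln(\nu/\mu) + \ln(r_{\max}/r_{\min})$ and similarly for $\vec y^*$. The smallest entry of the scaled matrix is $\mu(\vec x^*, \vec y^*) = \min_{i,j} A_{ij} e^{x_i^* + y_j^*} \geq \mu \cdot e^{x^*_{\min} + y^*_{\min}}$. So I need a lower bound on $x^*_{\min} + y^*_{\min}$, which I will get by combining the variation-norm bounds with a normalization-type estimate: since $\mat A(\vec x^*,\vec y^*)$ has all row sums summing to $1$, there is at least one pair $(i,j)$ with $A_{ij} e^{x_i^* + y_j^*} \geq 1/n^2$, hence $e^{x^*_{\max} + y^*_{\max}} \geq A_{ij} e^{x_i^* + y_j^*} / \nu \geq 1/(n^2 \nu)$ (using $A_{ij} \leq \nu$). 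Combining $e^{x^*_{\min} + y^*_{\min}} = e^{x^*_{\max} + y^*_{\max}} \cdot e^{-(x^*_{\max}-x^*_{\min})} e^{-(y^*_{\max}-y^*_{\min})}$ with the variation bounds and $r_{\max}/r_{\min} \le \nu/\mu$, $c_{\max}/c_{\min} \le \nu/\mu$ (which hold because the marginals of the positive matrix $\mat A$ lie in $[n\mu, n\nu]$), I get $e^{x^*_{\min}+y^*_{\min}} \geq \frac{1}{n^2\nu} \cdot \frac{\mu^2}{\nu^2} \cdot \frac{\mu^2}{\nu^2}$, which after multiplying by the leading factor $\mu$ yields $\mu(\vec x^*, \vec y^*) \geq \frac{1}{n^2}(\mu/\nu)^{5}$ — a bit weaker than the stated $(\mu/\nu)^3$, so I would need to be more careful with the bookkeeping.

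To recover the sharper exponent $3$, the key refinement is to avoid double-counting the $\nu/\mu$ factors: instead of bounding $r_{\max}/r_{\min}$ crudely by $\nu/\mu$, note that \cref{cor: variation norm bound} already pairs a variation bound with exactly one marginal-ratio term, and one should choose the index $(i,j)$ in the ``$\geq 1/n^2$'' step to lie in the row of largest row-sum and column of largest column-sum (or run the argument symmetrically from the row side only), so that only one of the two $\ln(\nu/\mu)$ contributions and one of the two marginal-ratio contributions actually appears. I expect the main obstacle to be precisely this constant-chasing: getting the exponent down to $3$ requires picking the witnessing entry carefully and tracking which bound controls which factor, rather than any conceptual difficulty. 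Once the exponent is pinned down, the lemma follows immediately by combining $\mu(\vec x^*,\vec y^*) \geq \mu e^{x^*_{\min} + y^*_{\min}}$ with the refined estimate.
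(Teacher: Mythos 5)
The ``moreover'' part of your proposal is correct and matches the paper's argument. For the main bound, your strategy --- write $\mu(\vec x^*,\vec y^*) \ge \mu\, e^{x^*_{\min}+y^*_{\min}}$, bound $e^{x^*_{\max}+y^*_{\max}} \ge 1/(n^2\nu)$ via the ``some entry $\ge 1/n^2$'' normalization argument, and close with the variation-norm bounds --- is conceptually the same as the paper's (the paper organizes it as a bound on the max/min entry ratio of $\mat A(\vec x^*,\vec y^*)$, but the ingredients and the arithmetic are identical).

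The genuine gap is in how you handle the terms $\ln(r_{\max}/r_{\min})$ and $\ln(c_{\max}/c_{\min})$ from \cref{cor: variation norm bound}. There, $\vec r$ and $\vec c$ are the \emph{target} marginals, not the row/column sums of $\mat A$. Your justification ``the marginals of the positive matrix $\mat A$ lie in $[n\mu,n\nu]$'' is therefore irrelevant: it bounds a different quantity, and in fact for arbitrary targets the ratio $r_{\max}/r_{\min}$ can be anything (and the lemma's conclusion $\frac{1}{n^2}(\mu/\nu)^3$ would then be false, so the statement cannot be target-independent). The paper implicitly assumes uniform targets $\vec r = \vec c = \vec 1/n$ --- this is the setting in which the lemma is actually applied (\cref{cor: grad bound A sigma}) and the only one under which the proof's step ``the variation norm of $\vec x^*$ and $\vec y^*$ are bounded by $\ln(\nu/\mu)$'' is valid. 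With uniform targets, $r_{\max}/r_{\min}=c_{\max}/c_{\min}=1$, the extra factors in \cref{cor: variation norm bound} vanish, and your own computation immediately yields $\mu(\vec x^*,\vec y^*)\ge \frac{1}{n^2}(\mu/\nu)^3$. So the obstacle you anticipate --- that one must pick the witnessing entry cleverly to avoid double-counting $\nu/\mu$ factors --- is a red herring; the fix is simply to invoke the uniform-target assumption rather than to refine the entry-selection argument.
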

\begin{proof}
By \cref{lem: Hessian lb} $f$ is strictly convex on $V$. We also know that $\mat A$ is exactly scalable. Hence $f$ has a unique minimizer $(\vec x^*, \vec y^*)$. By \cref{cor: variation norm bound} we know that the variation norm of $x^*$ and $y^*$ are bounded by $\ln(\nu/\mu)$. Hence, for every $i,i',j,j' \in [n]$ we have 
\[
\abs*{\ln\left(\frac{e^{x_i^* + y_j^*}}{e^{x_{i'}^* + y_{j'}^*}}\right)} \leq \abs{x_i^* - x_{i'}^*}  + \abs{y_j^* - y_{j'}^*} = 2\ln(\nu/\mu).
\]
Therefore, the ratio between entries of $\mat A(\vec x^*, \vec y^*)$ is bounded:
\[
\left|\frac{\mat A(\vec x^*, \vec y^*)_{ij}}{\mat A(\vec x^*, \vec y^*)_{i'j'}}\right| \leq \left|\frac{A_{ij}}{A_{i'j'}}\right| \left|\left(\frac{e^{x_i^* + y_j^*}}{e^{x_{i'}^* + y_{j'}^*}}\right)\right|
\leq \frac{\nu}{\mu} e^{2\ln(\nu/\mu)} = \left(\frac{\nu}{\mu}\right)^3.
\]
Since the sum of the entries of $\mat A(\vec x^*, \vec y^*)$ equals $1$, this implies that the smallest entry of $\mat A(\vec x^*, \vec y^*)$ is at least $\mu(\vec x^*, \vec y^*) \geq \frac{1}{n^2} \left(\frac{\mu}{\nu}\right)^3$.
Finally, for any $(\vec x, \vec y) \in V$ and any $i,j \in [n]$ we have 
\[
A_{ij}e^{x_i+y_j}\geq A_{ij}e^{x_i^*+y_j^* -2\norm{(\vec x, \vec y) - (\vec x^*, \vec y^*)}_\infty}
\]
which shows $\mu(\vec x, \vec y) \geq \mu(\vec x^*, \vec y^*)  e^{-2\norm{(\vec x, \vec y) - (\vec x^*, \vec y^*)}_\infty}$.
\end{proof}
Finally, to obtain a diameter bound for the set of points with a small gradient we will use the following (well-known) lemma. 
\begin{lemma}\label{lem:strong convexity dist to minimizer}
Assume $g\colon\R^d\to\R$ is a $C^2$ convex function such that $\grad g(\vec 0) = 0$, and assume that for all $\vec x \in \R^d$ with $\norm{\vec x}_\infty \leq r$, we have $\hess g(\vec x) \succeq \lambda I$.
Then 
\begin{equation*}
    \norm{\grad g(\vec x)}_2 \geq \lambda \norm{\vec x}_2 \min(1, r / \norm{\vec x}_\infty) \geq \lambda \min(\norm{\vec x}_\infty, r).
\end{equation*}
In particular, to guarantee that $\norm{\vec x}_\infty \leq C$ for $C \geq 0$, it suffices to show that $\norm{\grad g(\vec x)}_2 < \lambda \min(C, r)$ (note that the strict inequality is necessary here because it forces $\min(\norm{\vec x}_\infty, r) = \norm{\vec x}_\infty$).
\end{lemma}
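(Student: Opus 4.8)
The plan is the standard one for converting a local strong-convexity estimate into a quantitative distance-to-minimizer bound: integrate the Hessian along the segment from $\vec 0$ (which, since $\grad g(\vec 0)=\vec 0$ and $g$ is convex, is a global minimizer) to $\vec x$, and pair the resulting identity with $\vec x$. Concretely, the first thing I would write down is that, because $\grad g(\vec 0)=\vec 0$, the fundamental theorem of calculus applied to $t\mapsto\grad g(t\vec x)$ gives $\grad g(\vec x)=\int_0^1\hess g(t\vec x)\,\vec x\,dt$, and hence, taking the inner product with $\vec x$, that $\ip{\grad g(\vec x)}{\vec x}=\int_0^1 \vec x^T\hess g(t\vec x)\vec x\,dt$.

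The point to be careful about is that the hypothesis $\hess g\succeq\lambda I$ is only available on the $\ell_\infty$-ball of radius $r$, i.e.\ for $t\vec x$ with $t\le r/\norm{\vec x}_\infty$. So I would set $s:=\min(1,r/\norm{\vec x}_\infty)$, restrict the integral to $[0,s]$ (the integrand $\vec x^T\hess g(t\vec x)\vec x$ is pointwise nonnegative on all of $[0,1]$ by convexity of $g$, so discarding the tail $[s,1]$ only decreases the value), and bound the remaining piece below by $s\lambda\norm{\vec x}_2^2$ using $\hess g(t\vec x)\succeq\lambda I$ on $[0,s]$. This yields $\ip{\grad g(\vec x)}{\vec x}\ge s\lambda\norm{\vec x}_2^2$, and Cauchy–Schwarz then gives $\norm{\grad g(\vec x)}_2\ge s\lambda\norm{\vec x}_2=\lambda\norm{\vec x}_2\min(1,r/\norm{\vec x}_\infty)$, which is the first claimed inequality. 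For the second inequality I would use $\norm{\vec x}_2\ge\norm{\vec x}_\infty$ and split into the two cases $\norm{\vec x}_\infty\le r$ and $\norm{\vec x}_\infty>r$; in both cases $\norm{\vec x}_2\min(1,r/\norm{\vec x}_\infty)\ge\min(\norm{\vec x}_\infty,r)$ is immediate. (The case $\vec x=\vec 0$ is trivial and handled separately.)

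For the final "in particular" assertion I would reason as follows: if $\norm{\grad g(\vec x)}_2<\lambda\min(C,r)$, then combining with $\norm{\grad g(\vec x)}_2\ge\lambda\min(\norm{\vec x}_\infty,r)$ forces $\min(\norm{\vec x}_\infty,r)<\min(C,r)\le r$; since this is strictly less than $r$, it cannot be achieved by the value $r$, so $\min(\norm{\vec x}_\infty,r)=\norm{\vec x}_\infty$, and therefore $\norm{\vec x}_\infty<\min(C,r)\le C$. This is exactly where the strictness of the hypothesis is used — it rules out the degenerate case $\min(\norm{\vec x}_\infty,r)=r$ — matching the parenthetical remark in the statement.

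I do not expect a genuine obstacle here: everything reduces to elementary calculus and convexity. The only step that needs a moment's thought is the truncation of the integral at $s$, where no Hessian lower bound holds for $t\vec x$ outside the radius-$r$ ball; it is precisely the \emph{global} convexity of $g$, which makes $\vec x^T\hess g(t\vec x)\vec x\ge 0$, that lets one drop those contributions while keeping the inequality, at the cost of the $\min(1,r/\norm{\vec x}_\infty)$ factor rather than a clean $\lambda\norm{\vec x}_2$.
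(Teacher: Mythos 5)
Your argument is correct and is essentially the paper's proof: both integrate the second derivative along the segment from $\vec 0$ to $\vec x$, truncate the integral at $\min(1, r/\norm{\vec x}_\infty)$ (using global convexity to drop the nonnegative tail), apply the Hessian lower bound on the truncated range, and finish with Cauchy--Schwarz. The only cosmetic difference is that the paper routes through the univariate restriction $h(t) = g(t\vec x)$ while you apply the fundamental theorem of calculus directly to $t \mapsto \grad g(t\vec x)$; these are the same computation.
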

\begin{proof}
  Fix $\vec x \in \R^n$ and consider $h\colon\R\to\R$ defined by $h(t) = g(t \vec x)$.
  Then $h$ is convex, $\partial_{t=0} h(t) = 0$ and $\partial^2_{t=s} h(t) \geq 0$ for all $s \in \R$. 
  Now assume for $s \in \R$ that $\abs{s} \norm{\vec x}_\infty \leq r$. Then
  \begin{align*}
      \partial_{t=s}^2 h(t) & = \partial_{t=s} (Dg(t\vec x)[\vec x]) \\
      & = D^2g(s \vec x)[\vec x, \vec x] =  \vec x^T \hess g(s\vec x) \vec x \geq \lambda \norm{\vec x}_2^2.
  \end{align*}
  We use this to further estimate, for $s \geq 0$, that
  \begin{align*}
      \ip{\grad g(s\vec x)}{\vec x} = \partial_{t=s} h(t) & = \int_0^s \partial^2_{t=\tau} h(t) \, d \tau \\
      & \geq \int_0^{\min(s, r / \norm{\vec x}_\infty)} \partial^2_{t=\tau} h(t) \, d \tau \\
      & \geq \int_0^{\min(s, r / \norm{\vec x}_\infty)} \lambda \norm{\vec x}_2^2 \, d \tau \\
      & = \lambda \norm{\vec x}_2^2 \min(s, r / \norm{\vec x}_\infty),
  \end{align*}
  where the first inequality follows from the convexity of $h$.
  Setting $s = 1$ and using the Cauchy--Schwarz inequality gives
  \begin{align*}
      \norm{\grad g(\vec x)}_2 \norm{\vec x}_2 \geq \lambda \norm{\vec x}_2^2 \min(1, r / \norm{\vec x}_\infty)
  \end{align*}
  so
  \begin{align*}
    \norm{\grad g(\vec x)}_2 & \geq \lambda \norm{\vec x}_2 \min(1, r / \norm{\vec x}_\infty) \\
    & \geq \lambda \norm{\vec x}_\infty \min(1, r / \norm{\vec x}_\infty) \\
    & = \lambda \min(\norm{\vec x}_\infty, r). \qedhere
  \end{align*}
\end{proof}

\begin{lemma} \label{lem: grad to inf norm bound}
Let $\mat A \in [\mu,\nu]^{n \times n}$ be an entrywise non-negative matrix with $\|\mat A\|_1 = 1$ and let $f\colon V \subset \R^{n} \times \R^n \to \R$ be the potential for this matrix as given in \eqref{def: f}, where $V$ is the orthogonal complement of $(\vec 1_{n}, -\vec 1_n)$. Let $(\vec x^*, \vec y^*)$ be the unique minimizer of $f$ in $V$ and let $0<\delta<1$. Let $(\vec x, \vec y) \in V$ be such that $\|\nabla f(\vec x, \vec y)\|_2 < \delta \cdot \frac{1}{n} \left(\frac{\mu}{\nu}\right)^3 e^{-2}$. Then $\|(\vec x, \vec y) - (\vec x^*, \vec y^*)\|_\infty \leq \delta$.
\end{lemma}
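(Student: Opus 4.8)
The plan is to combine the lower bound on the Hessian from \cref{lem: Hessian lb}, the lower bound on the smallest rescaled entry from \cref{lem: smallest entry lb}, and the generic strong-convexity-to-diameter bound \cref{lem:strong convexity dist to minimizer}. The quantity $\frac{1}{n}\left(\frac{\mu}{\nu}\right)^3 e^{-2}$ appearing in the hypothesis is exactly $\lambda \cdot \min(C,r)$ for the choice $r=1$, $C=\delta<1$, and $\lambda = n\cdot\mu(\vec x^*,\vec y^*)$ using the bound $\mu(\vec x^*,\vec y^*)\geq \frac{1}{n^2}(\mu/\nu)^3$; so the statement is essentially a direct instantiation once the hypotheses of \cref{lem:strong convexity dist to minimizer} are verified.

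First I would pass to coordinates on $V$: write $g(\vec u) = f((\vec x^*,\vec y^*) + \vec u)$ for $\vec u$ ranging over (a copy of) $V \cong \R^{2n-1}$, so that $\grad g(\vec 0)=0$ since $(\vec x^*,\vec y^*)$ minimizes $f$ on $V$, and $g$ is $C^2$. The point $(\vec x,\vec y)\in V$ corresponds to $\vec u = (\vec x,\vec y)-(\vec x^*,\vec y^*)$, and $\norm{\grad g(\vec u)}_2 = \norm{\grad_V f(\vec x,\vec y)}_2 \leq \norm{\grad f(\vec x,\vec y)}_2$ (projecting onto $V$ only decreases the norm, and the gradient of $f$ restricted to $V$ is the orthogonal projection of the ambient gradient onto $V$). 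Next I would establish the uniform Hessian lower bound on the $\ell_\infty$-ball of radius $r=1$ around $\vec 0$: for any $\vec u\in V$ with $\norm{\vec u}_\infty\leq 1$, \cref{lem: smallest entry lb} gives $\mu((\vec x^*,\vec y^*)+\vec u) \geq \mu(\vec x^*,\vec y^*) e^{-2} \geq \frac{1}{n^2}(\mu/\nu)^3 e^{-2}$, and then \cref{lem: Hessian lb} (applied at the point $(\vec x^*,\vec y^*)+\vec u$) gives $\hess f \succeq n\cdot\mu((\vec x^*,\vec y^*)+\vec u)\cdot\mat P_V \succeq \frac{1}{n}(\mu/\nu)^3 e^{-2}\cdot\mat P_V$, i.e.\ $\hess g(\vec u)\succeq \lambda\mat I_V$ with $\lambda = \frac{1}{n}(\mu/\nu)^3 e^{-2}$ as an operator on $V$.

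With these two facts in hand, \cref{lem:strong convexity dist to minimizer} applies with dimension $d=2n-1$, radius $r=1$, and this $\lambda$: since the hypothesis gives $\norm{\grad f(\vec x,\vec y)}_2 < \delta\lambda = \lambda\min(\delta,1)$ (using $\delta<1$), the lemma's final assertion yields $\norm{\vec u}_\infty \leq \delta$, which is exactly $\norm{(\vec x,\vec y)-(\vec x^*,\vec y^*)}_\infty\leq\delta$. I do not anticipate a serious obstacle here; the only points requiring a little care are (i) checking that \cref{lem:strong convexity dist to minimizer}, stated for $g\colon\R^d\to\R$, applies verbatim after identifying $V$ with $\R^{2n-1}$ via an orthonormal basis (the Hessian bound $\succeq\lambda\mat P_V$ on $V$ becomes $\succeq\lambda\mat I$ in these coordinates), and (ii) the elementary observation that restricting/projecting the gradient to $V$ does not increase its $2$-norm, so the ambient bound on $\norm{\grad f}_2$ suffices. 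Everything else is bookkeeping with the constant $\frac{1}{n}(\mu/\nu)^3 e^{-2}$.
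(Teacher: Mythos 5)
Your proposal is correct and follows essentially the same route as the paper: combine \cref{lem: Hessian lb} and \cref{lem: smallest entry lb} to get a uniform Hessian lower bound on a unit $\ell_\infty$-ball around $(\vec x^*,\vec y^*)$, then invoke \cref{lem:strong convexity dist to minimizer}. One small imprecision: identifying $V$ with $\R^{2n-1}$ via an orthonormal basis preserves $\norm{\cdot}_2$ but \emph{not} $\norm{\cdot}_\infty$, so the strong-convexity lemma does not apply ``verbatim'' after that change of coordinates; the cleaner justification (which the paper implicitly uses) is that the lemma's proof --- integrating $\hess g$ along the segment $t\vec x$ and applying Cauchy--Schwarz --- works without modification when $g$ is defined on the subspace $V$ equipped with the $\ell_\infty$- and $\ell_2$-norms inherited from $\R^{2n}$.
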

\begin{proof}
\cref{lem: Hessian lb} shows that $\nabla^2 f(\vec x, \vec y) \succeq n \cdot \mu(\vec x, \vec y) \cdot \mat P_V$, where $\mat P_V$ is the orthogonal projector on $V$. \cref{lem: smallest entry lb} shows that $\mu(\vec x, \vec y) \geq \mu(\vec x^*, \vec y^*)  e^{-2\norm{(\vec x, \vec y) - (\vec x^*, \vec y^*)}_\infty} \geq  \frac{1}{n^2 } \left(\frac{\mu}{\nu}\right)^3 e^{-2\norm{(\vec x, \vec y) - (\vec x^*, \vec y^*)}_\infty}$. Hence, for $(\vec x, \vec y)$ with $\|(\vec x, \vec y)- (\vec x^*, \vec y^*)\|_\infty \leq 1$, we have $\nabla^2 f(\vec x, \vec y) \succeq \frac{1}{n}\left(\frac{\mu}{\nu}\right)^3 e^{-2} \cdot \mat P_V$. It then follows from \cref{lem:strong convexity dist to minimizer} that if $\|\nabla f(\vec x, \vec y)\|_2 < \delta \cdot \frac{1}{n}\left(\frac{\mu}{\nu}\right)^3 e^{-2}$, then $\|(\vec x, \vec y)-(\vec x^*, \vec y^*)\|_\infty \leq \delta$.
\end{proof}
Observe that for $\mat A^\sigma$ the ratio between its largest and smallest entry is $\frac{b+1}{b-1} \leq 3$. This gives the following corollary. 
\begin{corollary} \label{cor: grad bound A sigma}
Let $\mat A^\sigma$ be as in \cref{sec: definition} and let $f$ be the associated potential. Let $(\vec x^*, \vec y^*)$ be the unique exact scaling of $\mat A^\sigma$ in $V$. If $(\vec x, \vec y) \in V$ is such that $\|\nabla f(\vec x, \vec y)\|_2 < \frac{\delta}{27n e^2}$, then $\norm{(\vec x, \vec y)-(\vec x^*, \vec y^*)}_\infty \leq \delta$.
\end{corollary}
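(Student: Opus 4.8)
The plan is to obtain this immediately from \cref{lem: grad to inf norm bound} by specializing to $\mat A^\sigma$, for which the ratio $\nu/\mu$ between the largest and smallest entry is controlled. The entries of $\mat A^\sigma$ are $1\pm w^i_j/b$, hence lie in $[1-1/b,\,1+1/b]$, and since $b\ge 2$ this interval is contained in some $[\mu,\nu]$ with $\nu/\mu=\frac{1+1/b}{1-1/b}=\frac{b+1}{b-1}\le 3$ (this ratio is scale-invariant, so normalizing $\mat A^\sigma$ does not change it). Moreover $\mat A^\sigma$ is entrywise positive, hence exactly $(\tfrac{\vec 1}{n},\tfrac{\vec 1}{n})$-scalable, and by \cref{lem: Hessian lb} the associated potential $f$ is strictly convex on $V$; so the exact scaling $(\vec x^*,\vec y^*)\in V$ appearing in the statement is well-defined and unique, and \cref{lem: grad to inf norm bound} is applicable.

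I would then simply plug $\nu/\mu\le 3$, i.e.\ $(\mu/\nu)^3\ge\tfrac1{27}$, into the gradient threshold of \cref{lem: grad to inf norm bound}: that lemma guarantees $\norm{(\vec x,\vec y)-(\vec x^*,\vec y^*)}_\infty\le\delta$ (for $0<\delta<1$) as soon as $\norm{\nabla f(\vec x,\vec y)}_2<\delta\cdot\frac1n(\mu/\nu)^3 e^{-2}$, and since $\delta\cdot\frac1n(\mu/\nu)^3 e^{-2}\ge\frac{\delta}{27ne^2}$, the weaker hypothesis $\norm{\nabla f(\vec x,\vec y)}_2<\frac{\delta}{27ne^2}$ of the corollary already suffices. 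That is the whole argument; the only "content" is the substitution $3^3=27$.

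The single point requiring a little care — essentially the only place the argument could slip — is a normalization convention: $\mat A^\sigma$ has $\norm{\mat A^\sigma}_1=2kn=n^2$, whereas \cref{lem: grad to inf norm bound} (through \cref{cor: variation norm bound,lem: smallest entry lb}) is phrased for matrices of unit $\ell_1$-norm. I would resolve this by noting that the hypothesis $\norm{\mat A}_1=1$ there is used only to conclude that the exactly scaled matrix $\mat A(\vec x^*,\vec y^*)$ has total mass $1$, which is exactly what produces the bound $\mu(\vec x^*,\vec y^*)\ge\frac1{n^2}(\mu/\nu)^3$; for the uniform targets $(\tfrac{\vec 1}{n},\tfrac{\vec 1}{n})$ this holds automatically, since the total mass of the scaled matrix equals $\norm{\vec r}_1=1$ regardless of the normalization of $\mat A^\sigma$. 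Alternatively one applies the lemma to $\mat A^\sigma/\norm{\mat A^\sigma}_1$, whose potential differs from $f$ only by the translation $(\vec x,\vec y)\mapsto(\vec x-(\ln n)\vec 1,\vec y-(\ln n)\vec 1)$, which preserves $V$, gradient $2$-norms, and $\ell_\infty$-distances, so the conclusion transfers unchanged. Either way, no new estimate beyond $\nu/\mu\le 3$ is needed.
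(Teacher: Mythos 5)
Your proof matches the paper's argument, which is the single observation that the entries of $\mat A^\sigma$ lie in $[1-1/b,\,1+1/b]$ so that $\nu/\mu = \frac{b+1}{b-1}\leq 3$ (using $b\geq 2$), followed by substituting $(\mu/\nu)^3\geq 1/27$ into \cref{lem: grad to inf norm bound}. You additionally flag and correctly resolve the normalization mismatch ($\norm{\mat A^\sigma}_1 = n^2 \neq 1$) that the paper leaves implicit; both of your fixes are sound — the bound $\mu(\vec x^*,\vec y^*)\geq \tfrac{1}{n^2}(\mu/\nu)^3$ in \cref{lem: smallest entry lb} really only uses $\norm{\mat A(\vec x^*,\vec y^*)}_1 = \norm{\vec r}_1 = 1$, and your translation by $(\ln n)(\vec 1,\vec 1)\in V$ preserves gradients and $\ell_\infty$-distances as claimed.
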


\subsection{Concluding the lower bound for matrix scaling} \label{sec: concluding the lb}
Let $(\vec{\bar x}, \vec{\bar y}) \in V$ be the unique vector such that $(\vec{\bar x}, \vec{\bar y}) - (\vec x, \vec y)$ is a multiple of $(\vec 1_n, - \vec 1_n)$, where $(\vec x, \vec y)$ are the scaling vectors of the first step of Sinkhorn.
By choosing $t$ and $b$ appropriately we obtain, with high probability over the choice of permutations, a bound on the distance between $(\vec{\bar x},\vec{\bar y})$ and the unique scaling vectors $(\vec x^*, \vec y^*) \in V$ of an exact scaling of $\mat A^\sigma$. This allows us to conclude that, with high probability, all sufficiently precise scalings of $\mat A^\sigma$ encode the Hamming weights $a_i$.  
\begin{corollary}
\label{cor:scaling encodes hamming weights}
    There exists a constant $C>0$ such that for $b = C \sqrt{\ln n}$  the following holds. 
    With probability $\geq 2/3$ (over the choice of $\sigma$) we have for the exact scaling vectors $(\vec x^*, \vec y^*) \in V$ of $\mat A^\sigma$ that 
    \[
    a_i = \sign(x^*_{2i} - x^*_{2i-1}) \quad \text{ for all } i.
    \]
      Furthermore, there exists a constant $C'>0$ such that  for any $(x', y')$ that yield a $(C'/n^2 b)$-$\ell_2$-scaling of $\mat A^\sigma$, $a_i$ can be recovered from $x'$ as $a_i = \sign(x_{2i}-x_{2i-1}) = \sign (x_{2i}' - x_{2i-1}')$ .
\end{corollary}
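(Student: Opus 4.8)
The plan is to combine the concentration result from \cref{sec: concentration} with the strong-convexity machinery of \cref{sec: convexity}, using the fact that $(\ln(\mat X), \vec 0)$ (equivalently its translate $(\vec{\bar x}, \vec{\bar y}) \in V$) is approximately a minimizer of $f$ because its gradient is small. First I would compute $\nabla f$ at the point $(\ln(\mat X), \vec 0)$ corresponding to the first Sinkhorn step: by construction of $\mat X$ the row-marginals of $\mat X \mat A^\sigma$ are exactly uniform, so the row-block of the gradient vanishes, and the column-block equals $c(\mat X \mat A^\sigma) - \vec 1/n$ (after the appropriate $\ell_1$-normalization of $\mat A^\sigma$, which only rescales everything by a fixed $\poly(n)$ factor). \cref{cor: grad bound} then bounds this by $O(t/(b^2 n^2))$ with probability $\geq 1 - 2n e^{-t^2/2}$. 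Since the gradient is translation-invariant along $(\vec 1_n, -\vec 1_n)$, the same bound holds at $(\vec{\bar x}, \vec{\bar y}) \in V$.

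Next I would apply \cref{cor: grad bound A sigma}: to conclude $\norm{(\vec{\bar x}, \vec{\bar y}) - (\vec x^*, \vec y^*)}_\infty \leq \delta$ it suffices that $\norm{\nabla f(\vec{\bar x}, \vec{\bar y})}_2 < \delta/(27 n e^2)$. Matching this against the $O(t/(b^2 n^2))$ bound, I need $t/(b^2 n^2) \lesssim \delta/n$, i.e.\ $t \lesssim \delta b^2 n$; and to make the failure probability $2n e^{-t^2/2}$ at most $1/3$ it suffices to take $t = \Theta(\sqrt{\ln n})$. Choosing $\delta$ to be a small enough constant (say $\delta = 1$, or any constant below the threshold where \cref{lem: smallest entry lb}'s $e^{-2\norm\cdot_\infty}$ estimate is used), the requirement becomes $\sqrt{\ln n} \lesssim b^2 n$, which is satisfied with huge room for $b = C\sqrt{\ln n}$; this is where the choice $b = C\sqrt{\ln n}$ enters, and in fact even a constant $b$ would do for this step — the logarithmic $b$ is needed to make $\delta$ genuinely smaller than a fixed constant with probability $2/3$, so I would set $\delta$ to be the constant from \cref{cor: grad bound A sigma} that guarantees the sign pattern is preserved.

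For the first claim, combining $\norm{(\vec{\bar x}, \vec{\bar y}) - (\vec x^*, \vec y^*)}_\infty \leq \delta$ with \cref{lem: row marginals step 1} (which gives $\abs{\bar x_{2i-1} - \bar x_{2i}} = \abs{x_{2i-1} - x_{2i}} \geq 4/(nb)$, the difference being translation-invariant) shows that if $2\delta < 4/(nb)$, i.e.\ $\delta < 2/(nb)$, then the sign of $x^*_{2i} - x^*_{2i-1}$ agrees with the sign of $\bar x_{2i} - \bar x_{2i-1}$, which by \cref{lem: row marginals step 1} equals $a_i$. So I actually need the quantitative version: take $\delta = c/(nb)$ for a suitable constant $c < 2$, which forces $t = \Theta(\sqrt{\ln n} \cdot b / n \cdot n \cdot \ldots)$ — recomputing, $t \lesssim \delta b^2 n = c b / 1$, so $t = \Theta(b) = \Theta(\sqrt{\ln n})$, and the failure probability $2n e^{-t^2/2} = 2n \cdot n^{-\Theta(C^2)}$ is $\leq 1/3$ for $C$ a large enough constant. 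This is precisely the role of $b = C\sqrt{\ln n}$.

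For the second claim, suppose $(x', y')$ is a $(C'/(n^2 b))$-$\ell_2$-scaling of $\mat A^\sigma$; then $\norm{\nabla f(x', y')}_2 \leq C'/(n^2 b)$ (again up to the $\poly(n)$ normalization factor, absorbed into $C'$), so by \cref{cor: grad bound A sigma} with $\delta' = 27 e^2 C'/(nb)$ we get $\norm{(x',y') - (\vec x^*, \vec y^*)}_\infty \leq \delta'$ after translating $(x',y')$ into $V$ — note the sign $x'_{2i} - x'_{2i-1}$ is unchanged by this translation. Choosing $C'$ small enough that $\delta + \delta' < 2/(nb)$ (here $\delta$ is the bound from the first part, also $\Theta(1/(nb))$), the triangle inequality gives $\norm{(\vec{\bar x}, \vec{\bar y}) - (x', y')}_\infty < 2/(nb) - 2\delta$ — wait, more cleanly: both $(\vec{\bar x}, \vec{\bar y})$ and the $V$-translate of $(x',y')$ are within $\Theta(1/(nb))$ of $(\vec x^*, \vec y^*)$ in $\ell_\infty$, hence within $O(1/(nb))$ of each other, and choosing constants so this is $< 2/(nb) = \abs{\bar x_{2i-1} - \bar x_{2i}}$ forces $\sign(x'_{2i} - x'_{2i-1}) = \sign(\bar x_{2i} - \bar x_{2i-1}) = a_i$, and since the translate of $(x',y')$ differs from $(x',y')$ by a multiple of $(\vec 1_n, -\vec 1_n)$ we also have $\sign(x'_{2i} - x'_{2i-1})$ unchanged, giving the stated chain of equalities. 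The main obstacle is purely bookkeeping: tracking the $\ell_1$-normalization factor of $\mat A^\sigma$ (it has entry sum $\Theta(n^2)$, so scaling to $\norm\cdot_1 = 1$ multiplies $f$ and its gradient by $\Theta(1/n^2)$, which is why the gradient bound $O(t/(b^2 n^2))$ and the scaling error $C'/(n^2 b)$ both carry the extra $1/n^2$) and choosing the constants $C, C', c$ consistently so that all the $\Theta(1/(nb))$ quantities fit inside the $4/(nb)$ margin from \cref{lem: row marginals step 1} with total failure probability $\leq 1/3$.
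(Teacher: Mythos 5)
Your proposal follows essentially the same route as the paper: evaluate $\nabla f$ at the first Sinkhorn step (row block vanishes, column block controlled by \cref{cor: grad bound} with $t=\Theta(\sqrt{\ln n})$), convert the gradient bound into an $\ell_\infty$-proximity statement via \cref{cor: grad bound A sigma}, and read off the signs using the $4/(nb)$ margin from \cref{lem: row marginals step 1}; the second claim repeats this for $(x',y')$ translated into $V$. The one slip is purely notational: near the end you write ``$<2/(nb)=\abs{\bar x_{2i-1}-\bar x_{2i}}$'', but \cref{lem: row marginals step 1} gives a \emph{lower bound} $\abs{\bar x_{2i-1}-\bar x_{2i}}\geq 4/(nb)$ rather than an equality; the intended inequality (that the perturbation of the difference $x_{2i-1}-x_{2i}$ is strictly below $4/(nb)$) is what the paper actually uses, and your constants are chosen to make that work, so the argument goes through.
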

\begin{proof}
Applying \cref{cor: grad bound} with $t = 10 \sqrt{\ln n}$ shows that with probability at least $2/3$ we have $\norm{\grad f(\vec{\bar x},\vec{\bar y})}_2 = \norm{\grad f(\vec x, \vec y)}_2 = \frac{t}{b}\frac{2 \sqrt{n}}{b (n^2 - 4/b^2) \sqrt{k}}$. Hence, there exists a constant $C>0$ such that for $b=C t$ we have 
\[
\norm{\grad f(\vec{\bar x},\vec{\bar y})}_2 \leq \frac{1}{nb} \frac{1}{27ne^2}.
\]
\cref{cor: grad bound A sigma} then implies that $\norm{(\vec{\bar x},\vec{\bar y})-(\vec x^*, \vec y^*)}_\infty \leq \frac{1}{nb}$ and hence $\abs{(x^*_{2i-1} - x^*_{2i}) - (x_{2i-1} - x_{2i})} \leq \frac{2}{nb}$. Together with \cref{lem: row marginals step 1} (which shows that $\abs{x_{2i-1} - x_{2i}} \geq \frac{4}{nb}$) this means that $a_i = \sign(x^*_{2i} - x^*_{2i-1})$. Moreover, $\abs{x^*_{2i-1} - x^*_{2i}} \geq \frac{2}{nb}$. 

Now consider approximate scalings of $\mat A^\sigma$. Without loss of generality we may assume that the $(x',y')$ that yield a $(\frac{1}{2nb} \frac{1}{27ne^2})$-$\ell_2$-scaling of $\mat A^\sigma$ belong to $V$ (otherwise we shift it by an appropriate multiple of $(\vec 1_n, -\vec 1_n)$). Then, again due to \cref{cor: grad bound A sigma}, we obtain that  $\norm{(\vec x', \vec y')-(\vec x^*, \vec y^*)}_\infty \leq \frac{1}{2nb} \leq \frac{1}{4}\abs{x^*_{2i-1} - x^*_{2i}}$ and hence $\abs{(x'_{2i-1} - x'_{2i}) - (x^*_{2i-1} - x^*_{2i})} \leq \frac{1}{2}\abs{x^*_{2i-1} - x^*_{2i}}$ which means that $\sign(x'_{2i}-x'_{2i-1}) = \sign(x^*_{2i-1} - x^*_{2i}) = a_i$.
\end{proof}

\begin{theorem} \label{thm: scaling lb}
There exists a constant $C>0$ such that any matrix scaling algorithm that, with probability $\geq \frac32 \exp(-n/100)$, finds scalings for $n \times n$-matrices with $\ell_2$-error $C/(n^2 \sqrt{\ln n})$ must make at least $\Omega(n^2)$ queries to the matrix. This even holds for uniform targets and entrywise-positive matrices with smallest entry $\Omega(1/n^2)$.
\end{theorem}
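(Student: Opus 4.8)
The plan is to prove this by a reduction from \cref{thm:basic lb}, instantiated with $k = n/2$ and $\tau = 1/n$, where it asserts that recovering $99\%$ of the bits $a\in\{\pm1\}^k$ from $z^1,\dots,z^k$ requires $\Omega(k/\tau)=\Omega(n^2)$ quantum queries. Essentially all of the analytic content is already in place: from such an instance and a tuple $\sigma=(\sigma^1,\dots,\sigma^k)$ of permutations, \cref{sec: definition,sec: concentration} build the $2k\times n=n\times n$ matrix $\mat A^\sigma$, and \cref{sec: convexity,sec: concluding the lb} establish \cref{cor:scaling encodes hamming weights}: for $b = C_0\sqrt{\ln n}$ there are constants $C_0, C'>0$ such that, with probability $\geq 2/3$ over $\sigma$, \emph{every} $(C'/(n^2 b))$-$\ell_2$-scaling $(\vec x',\vec y')$ of $\mat A^\sigma$ to uniform targets satisfies $a_i=\sign(x'_{2i}-x'_{2i-1})$ for all $i$. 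In particular there is a fixed ``good'' $\sigma$ for which this holds deterministically, and I would hard-code one into the reduction: quantum query lower bounds are insensitive to such non-uniform advice, and hard-coding avoids paying the factor $2/3$.

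First I would set up the reduction's scaling instance. Take $\widetilde{\mat A}^\sigma := \mat A^\sigma/n^2$: its entries lie in $[\tfrac{1-1/b}{n^2},\tfrac{1+1/b}{n^2}]\subseteq[0,1]$, so $\widetilde{\mat A}^\sigma$ is entrywise positive (hence asymptotically scalable to any positive marginals), has smallest entry $\Omega(1/n^2)$, and has all column sums exactly $1/n$, so $(\vec 1_n/n,\vec 1_n/n)$ are legitimate uniform targets, exactly as the theorem requires. Since multiplying $\mat A^\sigma$ by a positive scalar only shifts the scaling vectors by a multiple of $(\vec 1_n,\vec 1_n)$ -- leaving the rescaled matrix, the two marginal errors, and the signs $\sign(x'_{2i}-x'_{2i-1})$ unchanged -- it suffices to run the algorithm on $\widetilde{\mat A}^\sigma$. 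I would also check that one quantum query to $\widetilde{\mat A}^\sigma$ costs only $O(1)$ quantum queries to the strings $z^i$: to answer a query at row $r\in\{2i-1,2i\}$ and column $j$, reversibly compute the de-permuted index $p=(\sigma^i)^{-1}(j)$ (the $\sigma^i$ are fixed), query $z^i_p$, write the entry $\tfrac1{n^2}(1\pm z^i_p/b)$ into the output register, and uncompute.

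The reduction is then: hard-code a good $\sigma$; run the hypothesized scaling algorithm on $\widetilde{\mat A}^\sigma$ with targets $(\vec 1_n/n,\vec 1_n/n)$ and $\ell_2$-error $\eps := C/(n^2\sqrt{\ln n})$, where $C := C'/C_0$ so that $\eps\leq C'/(n^2 b)$; from its output $(\vec x',\vec y')$ return $\tilde a_i := \sign(x'_{2i}-x'_{2i-1})$ for $i\in[k]$. Whenever the scaling algorithm succeeds, $(\vec x',\vec y')$ is a $(C'/(n^2 b))$-$\ell_2$-scaling, so by \cref{cor:scaling encodes hamming weights} we recover $a$ \emph{exactly}, in particular on $\geq 99\%$ of coordinates. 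Thus a scaling algorithm making $q$ queries with success probability $\geq\tfrac32\exp(-n/100)$ yields an $O(q)$-query quantum algorithm for the task of \cref{thm:basic lb} (with $k=n/2$, $\tau=1/n$) with at least that success probability, whence $O(q)=\Omega(k/\tau)=\Omega(n^2)$, i.e.\ $q=\Omega(n^2)$.

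The step I expect to be the main obstacle -- essentially the only nontrivial point left, since the strong-convexity estimate underlying \cref{cor:scaling encodes hamming weights} (that \emph{approximate}, not just exact, minimizers of $f$ encode the $a_i$) is the real crux and is already done -- is reconciling the success-probability thresholds. To invoke \cref{thm:basic lb} the reduction's success probability must exceed its threshold, and $\tfrac32\exp(-n/100)$ does not obviously beat the stated $\exp(-k/100)=\exp(-n/200)$. The fix is to use the bound actually proven there rather than its rounded statement: the choice $K=3$, $\mu=0.99$, $\delta=0.1$ in the strong direct product theorem yields an exponent $-ck$ with $c\approx 0.03>1/100$, and $\tfrac32\exp(-n/100)=\tfrac32\exp(-k/50)>\exp(-ck)$ for all sufficiently large $n$; hard-coding $\sigma$ (rather than sampling it) keeps the reduction at this success probability. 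Since the query-simulation overhead is $O(1)$ the $\Omega(n^2)$ bound is preserved, and verifying the ``entrywise-positive, smallest entry $\Omega(1/n^2)$, uniform targets'' conditions is routine.
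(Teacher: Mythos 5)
Your outline follows the paper's route closely -- reduction from \cref{thm:basic lb} with $k=n/2$, $\tau=1/n$, via the construction $\mat A^\sigma$ and \cref{cor:scaling encodes hamming weights}, with the query simulation argument -- and your normalization $\widetilde{\mat A}^\sigma = \mat A^\sigma/n^2$ is a careful way to ensure entries in $[0,1]$ with smallest entry $\Theta(1/n^2)$ and column sums $1/n$. You also correctly flag the mismatch between the stated threshold $\exp(-k/100)$ in \cref{thm:basic lb} and the $\exp(-n/100)$ that appears in \cref{thm: scaling lb}, and you are right that the cure is to refer to the actual exponent ($\approx -0.03k$) coming out of the strong direct product theorem rather than the rounded one.

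However, the hard-coding of $\sigma$ is a genuine gap. \Cref{cor:scaling encodes hamming weights} states that for each \emph{fixed} tuple $(z^1,\dots,z^k)$, a $\geq 2/3$ fraction of $\sigma$'s are ``good'' in the sense that every sufficiently precise scaling of $\mat A^\sigma$ recovers the $a_i$. The set of good $\sigma$'s depends on the $z^i$'s (the concentration argument in \cref{lem: concentration of col marginals} is with respect to the randomness of $\sigma$ for fixed $z$, and the concrete column-sum fluctuations it controls are functions of both). Non-uniform advice may indeed be hard-coded into a query algorithm, but only advice that is independent of the (hidden) input; a good $\sigma$ here is input-dependent, so no single $\sigma$ can be hard-coded to make the reduction succeed on all $z$. (For instance, with $\sigma$ the identity, all $z^i$ equal, and all $a_i=1$, the quantity $\sum_i w^i_j a_i = k z_j = \pm k$ is maximally far from its expectation $2k/n$, so the column marginals of $\mat X\mat A^\sigma$ are not close to uniform and \cref{cor:scaling encodes hamming weights} does not apply.) The paper therefore samples $\sigma$ as part of the reduction, accepting the factor-$2/3$ loss, which turns $\frac32 e^{-n/100}$ into $e^{-n/100}=e^{-k/50}$; since $e^{-k/50} > e^{-0.03k}$, this still lies above the direct-product threshold and the $\Omega(n^2)$ bound follows. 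Your fix of the threshold arithmetic is valid, but it should be deployed with the $\sigma$-sampling step retained, not in lieu of it.
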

\begin{proof}
We construct a set of hard instances as in \cref{sec: definition}. Let $n \geq 4$ be even. Let $k =n/2$ and let $z^1, \dotsc, z^k \in \{\pm 1 \}^n$ have Hamming weight $\abs{z^i} = \abs{\{ j : z_j^i = 1 \}} = n/2 + a_i$ for $a_i \in \{\pm 1\}$. By \cref{thm:basic lb}, finding at least $99\%$ of the $a_i$'s with probability $\geq \exp(-n/100)$ takes $\Omega(n^2)$-queries to the $z^i_j$. 
One can recover the $a_i$'s with probability $\geq 2/3$ as follows. First, sample the $\sigma^1, \dotsc, \sigma^{n/2}$ uniformly from $S_n$.
A single query to $\mat A^\sigma$ takes a single query to some $w^i$, which takes a single query to $z^i$.
Using \cref{cor:scaling encodes hamming weights}, there exists a constant $C>0$ such that, with probability $\geq 2/3$, any scaling of $\mat A^\sigma$ with $\ell_2$-error $C/(n^2 \sqrt{\ln n})$ recovers \emph{all} $a_i$'s.
Therefore any matrix scaling algorithm finding such a scaling with probability $\geq \exp(-n/100)$ allows us to find all $a_i$'s with probability $\geq \exp(-n/100)$.
\end{proof}

\begin{corollary}
\label{cor: sparse lb}
There exist constants $C_0,C_1>0$ such that any matrix scaling algorithm that, with probability $\geq \exp(-C_0 n/\ln(n))$, finds scalings for $n \times n$-matrices with at most $m$ non-zero entries and $\ell_2$-error $C_1/(m \sqrt{\ln(m/n)})$ must make at least $\widetilde \Omega(m)$ queries to the matrix. This even holds for uniform targets and matrices with smallest non-zero entry $\Omega(1/m)$.
\end{corollary}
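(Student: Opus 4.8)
The plan is to mimic the proof of \cref{thm: scaling lb} but to pack many independent copies of the dense hard instance from \cref{sec: definition} into a single \emph{sparse} matrix via a block-diagonal construction, reducing directly from \cref{thm:basic lb} with $\Theta(n)$ strings instead of $\Theta(d)$. Let $d$ be the largest even integer with $d\le m/n$; I may assume $d\ge 4$, since otherwise $m=\Theta(n)$ and the claimed bound $\widetilde\Omega(m)=\widetilde\Omega(n)$ is immediate. Put $p:=\lfloor n/d\rfloor$, so $pd=\Theta(n)$, and pad the at most $d-1$ leftover coordinates with trivial $1\times1$ diagonal blocks (a single entry equal to $1/n$, already scaled to the target), so that WLOG every coordinate belongs to a block. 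I then invoke \cref{thm:basic lb} with $N:=pk$ strings, $k:=d/2$, each of length $d$ and Hamming weight $d/2\pm1$, i.e.\ with $\tau=1/d$ (so $d\tau=1$ is an integer); since $N=pd/2=n/2$, recovering $99\%$ of the bits $a\in\{\pm1\}^N$ with probability $\ge\exp(-N/100)=\exp(-n/200)$ costs $\Omega(N/\tau)=\Omega(nd)=\Omega(m)$ queries, which is in particular $\widetilde\Omega(m)$.

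For the reduction I would group the $N$ strings into $p$ groups of $k$, build from the $\ell$-th group the $d\times d$ matrix $\mat A^{\sigma,(\ell)}$ exactly as in \cref{sec: definition} (with $d$ in the role of ``$n$'' there and $b=C\sqrt{\ln d}=\Theta(\sqrt{\ln(m/n)})$ for the constant $C$ of \cref{cor:scaling encodes hamming weights}), and set $\mat A:=\bigoplus_{\ell=1}^p\frac1m\mat A^{\sigma,(\ell)}$. Since $\|\mat A^{\sigma,(\ell)}\|_1=d^2$ and its entries lie in $[\tfrac12,\tfrac32]$, the matrix $\mat A$ lies in $[0,1]^{n\times n}$, has at most $m$ non-zero entries (after possibly shrinking $d$ by a constant factor), and has smallest non-zero entry $\ge\tfrac1{2m}=\Omega(1/m)$. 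The structural fact I would establish is that scaling respects the block-diagonal decomposition: $\A x y$ is block-diagonal with blocks $\frac1m\mat A^{\sigma,(\ell)}(\vec x^{(\ell)},\vec y^{(\ell)})$, so its marginal-error vectors relative to the uniform targets $\vec1/n$ are the concatenations over $\ell$ of the block marginal-error vectors; and scaling the $\ell$-th block of $\mat A$ to uniform marginals $\vec1/n$ is, up to translation by a multiple of $(\vec1_d,\vec1_d)$, the same problem as scaling the unit-mass matrix $\mat A^{\sigma,(\ell)}/\|\mat A^{\sigma,(\ell)}\|_1$ to uniform targets $\vec1/d$, with all $\ell_2$-marginal-errors multiplied by $p$. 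Hence, if $(\vec x,\vec y)$ is an $\eps$-$\ell_2$-scaling of $\mat A$ with $\eps=C_1/(m\sqrt{\ln(m/n)})$, then for every $\ell$ the induced scaling is a $(p\eps)$-$\ell_2$-scaling of the corresponding unit-mass block, and $p\eps=C_1/(d^2\sqrt{\ln d})$, which for a small enough constant $C_1$ falls below the error threshold of \cref{cor:scaling encodes hamming weights}.

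To finish I would apply \cref{cor:scaling encodes hamming weights} to each block: with $t=10\sqrt{\ln d}$ in \cref{cor: grad bound}, block $\ell$ is ``good'' --- every sufficiently precise scaling of $\mat A^{\sigma,(\ell)}$ reveals all $k$ of its bits through $\sign(x_{2i}-x_{2i-1})$ --- with probability $\ge1-2d^{-49}$ over its own permutation tuple. As the $p$ permutation tuples are independent, Markov's inequality bounds the probability that more than a $1\%$-fraction of blocks are bad by $200\,d^{-49}\le\tfrac12$. On the joint (independent) event that the algorithm produces an $\eps$-$\ell_2$-scaling of $\mat A$ and that $\ge99\%$ of the blocks are good, we recover $\ge99\%$ of all $N$ bits; this has probability $\ge q/2$, where $q$ is the algorithm's success probability. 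Since $\exp(-C_0 n/\ln n)>2\exp(-n/200)$ for a suitable constant $C_0$ and all large enough $n$, any algorithm with $q\ge\exp(-C_0 n/\ln n)$ therefore solves the \cref{thm:basic lb} instance within its query budget, and --- one query to $\mat A$ being answered by one query to some $z^i$ --- we conclude that $\widetilde\Omega(m)$ queries are needed.

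The hard part is not the query bound itself but making the two normalizations line up: the blocks of $\mat A$ carry target mass $d/n$ rather than $1$, so one must track carefully how the $\ell_2$-errors and the scaling vectors transform when passing to the unit-mass instances to which \cref{sec: definition,sec: convexity,sec: concluding the lb} apply, and confirm that the induced per-block error budget $C_1/(d^2\sqrt{\ln(m/n)})$ is exactly what $b=\Theta(\sqrt{\ln(m/n)})$ affords --- this is what pins down the $\sqrt{\ln(m/n)}$ in the statement. A secondary subtlety is that a direct union bound over the $p=\Theta(n/d)$ blocks would fail for super-polynomially large $n$ when $d$ is constant; applying Markov's inequality to the number of bad blocks makes the argument uniform in $n$, and this is available precisely because \cref{thm:basic lb} only asks for $99\%$ of the bits.
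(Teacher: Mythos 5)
Your proof is correct and, while it uses the same block-diagonal construction from direct sums of the dense hard instances of \cref{sec: definition}, it takes a genuinely different route at the crucial amplification step. The paper handles the fact that \cref{cor:scaling encodes hamming weights} only succeeds with constant probability (over $\sigma$) per block by running the scaling algorithm $\Theta(\log n)$ times with fresh permutations and taking a per-block majority, which raises the algorithm's success probability to the power $\Theta(\log n)$ --- this is precisely why the paper's hypothesis demands success probability $\geq \exp(-C_0 n/\ln n)$ rather than $\exp(-\Omega(n))$, and why it obtains only $\widetilde\Omega(m)$ rather than $\Omega(m)$. You instead make a single run, compute (via Hoeffding as in \cref{cor: grad bound} with $t = 10\sqrt{\ln d}$) that each block is ``good'' with probability $\geq 1 - 2d^{-49}$, and apply Markov's inequality to the number of bad blocks; this works uniformly in $n$ precisely because \cref{thm:basic lb} only requires $99\%$ of the bits. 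Since the algorithm's success guarantee $\geq q$ holds conditionally on every input, and the $99\%$-good event depends only on $\sigma$, you get the joint event with probability $\geq q/2$ without any repetition. Your argument thus dispenses with the $\log n$ repetitions entirely, yields the cleaner bound $\Omega(m)$, and in fact tolerates success probability $\exp(-\Omega(n))$ rather than $\exp(-\Omega(n/\ln n))$; the statement as given is a special case. The normalization bookkeeping you flag as the ``hard part'' is indeed the only place requiring care (the paper sidesteps it by taking each block to already be unit-mass and scaling by $s/n$), but your computation that a $p\eps$-$\ell_2$-error per unit-mass block is what the budget $\eps = C_1/(m\sqrt{\ln(m/n)})$ affords is the right accounting and matches the paper's.
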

\begin{proof}
We construct a set of sparse hard instances by taking direct sums of the hard instances used in the proof of \cref{thm: scaling lb}. Concretely, let $s \geq 4$ be even and assume that $n$ is a multiple of $s$. Let $\mat A_1^{\sigma_1},\ldots, \mat A_{n/s}^{\sigma_{n/s}} \in [0,1]^{s \times s}$ be $n/s$ independently drawn hard instances from the set constructed in the proof of \cref{thm: scaling lb}. We use this to create a sparse instance $\mat A = \frac{s}{n}\oplus_{i=1}^{n/s} \mat A_i^{\sigma_i}$. Note that $\|\mat A\|_1=1$ and each row of $\mat A$ has exactly $s$ non-zero entries (which means $m=ns$). Let $(\vec x, \vec y)$ be an $\eps$-$\ell_2$-scaling of $\mat A$ to uniform marginals. Then we have 
\[
\eps^2 \geq \|\vec 1/n - r(\A x y)\|_2^2  = \sum_{i=1}^{n/s} \|\vec 1/n - \frac{s}{n} \mat A_i^{\sigma_i}(\vec x|_{i},\vec y|_{i})\|_2^2 =  \sum_{i=1}^{n/s} (\frac{s}{n})^2 \|\vec 1/s - \mat A_i^{\sigma_i}(\vec x|_{i},\vec y|_{i})\|_2^2
\]
where $(\vec x|_{i},\vec y|_{i})$ is the restriction of $(\vec x, \vec y)$ to the coordinates corresponding to the $i$th block. In particular, for each $i \in [n/s]$, the pair $(\vec x|_{i},\vec y|_{i})$ forms an $\frac{\eps n}{s}$-$\ell_2$-scaling of $\mat A_i^{\sigma_i}$ to marginals $\vec 1/s$. Hence, for $\eps= C/(ns\sqrt{\ln(s)})$ we recover for each block a scaling with $\ell_2$-error $C/(s^2 \sqrt{\ln(s)})$. For each block this allows us, with probability $\geq 2/3$ over the choice of $\sigma_i$, to compute the Hamming weights of the associated bit strings. Hence, for a suitably large constant $c_0$, using $c_0 \ln(n)$ successful runs of the scaling algorithm with independently drawn choices of the $\sigma_i$'s allows us to compute the Hamming weights of all $n$ bit strings with probability at least $2/3$. The probability that all the runs of the scaling algorithm are successful is at least $(\exp(-C_0 n/\ln(n)))^{c_0\ln(n)} = \exp(-C_0 c_0 n) \geq \frac{3}{2}\exp(-n/100)$, where the last inequality determines the choice of $C_0$. Hence, we compute the Hamming weights of all $n$ bit strings with probability at least $\exp(-n/100)$ and by \cref{thm:basic lb} this requires at least $\Omega(n s)$ quantum queries to the bit strings. 
\end{proof}

\subsection{Lower bound for computing the row marginals} \label{sec: row marginal lb}
In this section we show that computing an $\eps$-$\ell_1$-approximation of the row (or column marginals) of an entrywise-positive $n \times n$ matrix takes $\Omega(n/\eps)$ queries to its entries (for $\eps = \Omega(1/n)$). As a consequence, the same holds for computing an approximation of the gradient of common (convex) potential functions used for matrix scaling -- among which is the potential we use in \cref{sec: upper bound} -- takes as many queries. Although the bound does not imply that testing whether a matrix is $\eps$-$\ell_1$-scaled takes at least $\Omega(n/\eps)$ queries, it gives reasonable evidence that this should be the case.
\begin{theorem} \label{lb: original}
Let $\tau\in[1/n,1/2]$.
Suppose we have a quantum algorithm that, given query access to a positive $n\times n$ matrix $\mat A$ with row-sums $\vec r=(r_1,\ldots,r_n)$ and column-sums $\vec c=(1/n,\ldots,1/n)$, outputs (with probability $\geq \exp(-n/100)$) a vector $\vec{\tilde{r}}\in\mathbb{R}_+^n$ such that $\norm{\vec{\tilde{r}}- \vec r}_1<\tau/100$.
Then this algorithm uses $\Omega(n/\tau)$ queries.
\end{theorem}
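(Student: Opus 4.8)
The plan is to reduce from \cref{thm:basic lb} with the parameters $k = n/2$ and $\tau$ as in the statement, exactly paralleling the strategy used for the scaling lower bound in \cref{sec: definition,sec: concentration}. Given bit strings $z^1, \dots, z^{k} \in \{\pm 1\}^n$ with $\abs{z^i} = n/2 + a_i\tau n$, I would build a positive matrix whose row-sum vector, when approximated to $\ell_1$-error $\tau/100$, reveals $99\%$ of the signs $a_i$. Concretely, stack the rows $1 + w^i/b$ and $1 - w^i/b$ for a suitable constant $b\geq 2$ (say $b = 3$), where $w^i$ is a uniformly random permutation of $z^i$, and then normalize so that $\norm{\mat A}_1 = 1$; this is a $2k \times n = n \times n$ positive matrix. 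As computed in \cref{sec: definition}, before normalization the row sums are $n \pm \frac{2\tau n}{b}a_i = n(1 \pm \frac{2\tau}{b}a_i)$ and the column sums are all equal to $2k = n$. After dividing by $\norm{\mat A}_1 = 2kn = n^2$, the column sums become $1/n$ as required, and the two row sums in block $i$ are $\frac1n(1 \pm \frac{2\tau}{b}a_i)$, which differ by $\frac{4\tau}{bn}$.

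Next I would argue that an $\ell_1$-approximation $\vec{\tilde r}$ with $\norm{\vec{\tilde r} - \vec r}_1 < \tau/100$ lets one recover most of the $a_i$. The obstacle is that a single block's row-sum gap is only $\Theta(\tau/n)$, while the total $\ell_1$ budget $\tau/100$ is spread over all $n$ coordinates — so one cannot hope to read off \emph{every} $a_i$ from $\vec{\tilde r}$, only a large fraction. The key observation is an averaging/Markov argument: the total error $\sum_i \bigl(\abs{\tilde r_{2i-1} - r_{2i-1}} + \abs{\tilde r_{2i} - r_{2i}}\bigr) < \tau/100$, so the number of blocks $i$ for which $\abs{\tilde r_{2i-1} - r_{2i-1}} + \abs{\tilde r_{2i} - r_{2i}} \geq \frac{2\tau}{bn}$ (i.e.\ at least half the per-block gap) is at most $\frac{\tau/100}{2\tau/(bn)} = \frac{bn}{200} = \frac{n}{\Theta(1)}$, which is a small fraction of $k = n/2$ once $b$ is a small enough constant; for $b=3$ this is $\leq n/200 = k/100$ blocks. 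For every other block the signs of $\tilde r_{2i-1} - \tilde r_{2i}$ and of $r_{2i-1} - r_{2i}$ agree, and since $\sign(r_{2i-1} - r_{2i}) = a_i$, we recover $a_i$ on $\geq 99\%$ of the blocks, hence a valid output for \cref{thm:basic lb}.

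There is one subtlety: unlike the scaling reduction, here no permutation or concentration argument is actually needed — the row sums of the (unnormalized) matrix are \emph{exactly} $n(1 \pm \frac{2\tau}{b}a_i)$ regardless of the permutation, and the column sums are exactly $1/n$ after normalization, so I can dispense with the Hoeffding step entirely and the reduction is deterministic apart from the algorithm's own randomness. (Alternatively, permute anyway to stay uniform with the earlier sections; it does not matter.) Putting it together: a query to $\mat A$ costs one query to some $z^i$, so an algorithm that outputs a $\tau/100$-$\ell_1$-approximation of $\vec r$ with probability $\geq \exp(-n/100)$ yields an algorithm computing $99\%$ of $a = f^{(k)}(z^1,\dots,z^k)$ with the same probability, which by \cref{thm:basic lb} requires $\Omega(k/\tau) = \Omega(n/\tau)$ quantum queries. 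This gives the claimed $\Omega(n/\tau)$ lower bound. The main thing to get right is the constant bookkeeping in the averaging step — ensuring $b$ is chosen so that the fraction of "bad" blocks stays below $1\%$ while $b$ is still a constant (so that the smallest entry of $\mat A$ is $\Theta(1/n^2)$ and the matrix is genuinely positive).
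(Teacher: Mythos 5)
Your strategy is the paper's: reduce from \cref{thm:basic lb} with $k=n/2$, build an $n\times n$ entrywise-positive matrix whose columns sum to exactly $1/n$ and whose row sums encode the $a_i$, and recover $\geq 99\%$ of the $a_i$ from any $\tau/100$-$\ell_1$-approximation of $\vec r$ by a Markov argument. The paper's concrete instance is exactly your $b=2$ case (entries $1.5/n^2$ and $0.5/n^2$; the last $n/2$ rows are the negations of the first $n/2$), and the paper likewise does not use random permutations or concentration here, confirming your observation that the Hoeffding step from the scaling lower bound is unnecessary. Your pairing $\tilde a_i = \sign(\tilde r_{2i-1}-\tilde r_{2i})$ is a slightly cleaner recovery than the paper's $\tilde a_i=\sign(\tilde r_i-1/n)$ and in fact buys a factor of two in the constant bookkeeping, which you need.

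However, the constants you actually wrote down do not close the loop. The per-block gap is $\abs{r_{2i-1}-r_{2i}}=\frac{4\tau}{bn}$, and to preserve $\sign(\tilde r_{2i-1}-\tilde r_{2i})$ you only need $\abs{\tilde r_{2i-1}-r_{2i-1}}+\abs{\tilde r_{2i}-r_{2i}}$ to be strictly less than the \emph{full} gap $\frac{4\tau}{bn}$, not half of it; your half-gap threshold is needlessly conservative, and with it the Markov count of bad blocks is $\frac{\tau/100}{2\tau/(bn)}=\frac{bn}{200}$, a $\frac{b}{100}$-fraction of $k=n/2$, so no constant $b>1$ gets you to $\leq 1\%$. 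Separately, $\frac{bn}{200}$ with $b=3$ equals $\frac{3n}{200}$, not the $\frac{n}{200}$ you wrote. The fix: take $b=2$ and use the full-gap threshold, so the bad count is at most $\frac{\tau/100}{4\tau/(bn)}=\frac{bn}{400}=\frac{n}{200}=k/100$, and since $\norm{\vec{\tilde r}-\vec r}_1<\tau/100$ is strict this is strictly under $1\%$ of the blocks, which is what \cref{thm:basic lb} requires.
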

\begin{proof}
The strategy is to reduce instances of \cref{thm:basic lb} to the $\ell_1$-approximation of~$\vec r$.
Starting from such an instance, we define a strictly positive $n\times n$ matrix $\mat A$ as we did for the scaling lower bound: first define an $n\times n$ Boolean matrix $\mat B$ by taking $x^{(1)},\ldots,x^{(n/2)}$ as its first $n/2$ rows, and the \emph{negated} versions of those strings as  the last $n/2$ rows (this ensures that each column has exactly $n/2$ 1s and $n/2$ 0s).
Now $\mat A$ is obtained from~$\mat B$ by replacing each 1 by $1.5/n^2$ and each 0 by $0.5/n^2$.
One query to $\mat A$ can be implemented by one query to the $x^{(i)}$s.
Note that each column-sum in $\mat A$ is exactly
\[
  n/2\cdot 0.5/n^2 + n/2\cdot 1.5/n^2=1/n.
\]
For $i\in[n/2]$, the $i$-th row-sum is $r_i=1/n+a_i\tau/n$ (and $r_{n/2+i}=1/n-a_i\tau/n$).
Approximating this row-sum up to additive error $<\tau/n$ by some $\vec{\tilde{r}}_i$ tells us what $a_i$ is: $a_i=\sign(\tilde{r}_i-1/n)$.

Now suppose we have an algorithm for $\ell_1$-approximation of $r$ as in the theorem statement. Using some $T$ queries to $\mat A$ it produces (with probability $\geq \exp(-n/100)$) a vector $\vec{\tilde{r}}$ such that $\norm{\vec{\tilde{r}}-\vec r}_1< \tau/100$.
Then for at least 99\%\ of the $i$ we must have $|\tilde{r}_i-r_i|<\tau/n$.
Defining $\tilde{a}_i=\sign(\tilde{r}_i-1/n)$ for all $i\in[n/2]$, we obtain an $\tilde{a}$ that agrees with $a$ for 99\%\ of the $i$, so by \cref{thm:basic lb}, $T = \Omega(n/\tau)$.
\end{proof}

\section{Quantum box-constrained Newton method for matrix scaling}
\label{sec: upper bound}

In this section, we show how to obtain a quantum speedup based on the box-constrained Newton method for matrix scaling from \cite{cmtv17}, with the main result being \cref{thm:upper bound}, and its consequences for matrix scaling given in \cref{cor: non-positive,cor: positive}.
We first recall some of the concepts that are used in the algorithm, including the definition of second-order robust convex functions, the notion of a $k$-oracle, and a theorem regarding efficient (classical) implementation of a $k$-oracle for the class of symmetric diagonally-dominant matrices with non-positive off-diagonal entries.
We then show that for a second-order robust function $g\colon \R^n \to \R$ and a given $\vec x \in \R^n$ such that the sublevel set $\{\vec x' : g(\vec x') \leq g(\vec x) \}$ is bounded, one can use a $k$-oracle and approximations to the gradient and Hessian of $g$ to find a vector $\vec x'$ such that the potential gap $g(\vec x') - g(\vec x^*)$ is smaller than $g(\vec x) - g(\vec x^*)$ where $\vec x^*$ is a minimizer of $g$.
This result extends \cite[Thm.~3.4]{cmtv17} to a setting where one can only obtain rough approximations of the gradient and Hessian of $g$.
We then show that this applies to a regularized version $\tilde f$ of the potential $f$ discussed in the introduction; to approximate the Hessian of $\tilde f$, we use a quantum algorithm for graph sparsification, whereas we approximate the gradient of $\tilde f$ using quantum approximate summing.
One challenge is that the quality of the gradient approximation is directly related to the $1$-norm of the matrix $\A x y$, so we must control this throughout the algorithm, which we achieve by manually shifting $\vec x$ when the norm becomes too large, and showing that this does not increase the regularized potential under suitable circumstances.

\subsection{Minimizing second-order robust convex functions} \label{sec: minimizing second-order robust}

In what follows we will minimize a convex function (potential) that satisfies a certain regularity condition: its Hessian can be approximated well on an infinity-norm ball. 
\begin{definition}[{\cite[Def.~3.1]{cmtv17}}]
  A convex function $g\colon \R^n \to \R$ is called \emph{second-order robust} with respect to $\ell_\infty$ if for any $\vec x, \vec y \in \R^n$ with $\norm{\vec x - \vec y}_\infty \leq 1$, 
  \begin{equation*}
    \frac{1}{e^2} \hess g(\vec x) \preceq \hess g(\vec y) \preceq e^2 \hess g(\vec x).
  \end{equation*}
\end{definition}

This implies that the local quadratic approximation to $g$ has a good quality on a small $\ell_\infty$-norm ball. It is therefore natural to consider the problem of minimizing a convex quadratic function over an $\ell_\infty$-norm ball. We will use the following notion. 

\begin{definition}[$k$-oracle]
  \label{def:k-oracle}
  An algorithm $\mathcal A$ is called a \emph{$k$-oracle} for a class of matrices $\mathcal M \subseteq \R^{n \times n}$ if for input $(\mat H, \vec b)$ with $\mat H \in \mathcal M$, $\vec b \in \R^n$, it returns a vector $\vec x \in \R^n$ such that $\norm{x}_\infty \leq k$ and 
  \begin{equation}
    \label{eq:localopt}
    \frac12 \vec x^T \mat H \vec x + \ip{\vec b}{\vec x} \leq \frac12 \cdot \min_{\norm{\vec z}_\infty \leq 1} (\frac12 \vec z^T \mat H \vec z + \ip{\vec b}{\vec z}).
  \end{equation}
\end{definition}
\begin{definition}[SDD matrix]
  \label{def:SDD matrix}
  A matrix $\mat A \in \R^{n \times n}$ is called \emph{symmetric diagonally-dominant} if it is symmetric, and for every $i \in [n]$, one has $A_{ii} \geq \sum_{j \neq i} \abs{A_{ij}}$.
\end{definition}

In \cite{cmtv17} it is shown how to efficiently implement an $\bigO{\log(n)}$-oracle for the class of SDD matrices $\mat H$ whose off-diagonal entries are non-positive. Their algorithm uses an efficient construction of a \emph{vertex sparsifier chain} of $\mat H$ due to~\cite{lee2015sparsified,DBLP:conf/stoc/KyngLPSS16}. 
\begin{theorem}[{\cite[Thm.~5.11]{cmtv17}}] \label{thm: log n oracle}
Given a classical description of an SDD matrix $\mat H \in \R^{n \times n}$ with $\bigOt{m}$ non-zero entries, such that $ H_{i,j} \leq 0$ for $i \neq j$, and a classical vector $\vec b \in \R^n$, we can find in time $\tilde O(m)$ a vector $\vec x \in \R^n$ such that $\norm{\vec x}_\infty = O(\log n)$ and
\[
\frac12 \vec x^T\mat H \vec x + \ip{\vec b}{\vec x} \leq \frac12 \cdot \min_{\norm{\vec z}_\infty \leq 1} (\frac12 \vec z^T\mat H \vec z + \ip{\vec b}{\vec z}).
\]
\end{theorem}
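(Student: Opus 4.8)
The plan is to follow the development in~\cite{cmtv17}; I indicate the main ingredients and the point where the real work lies. Write $\mat H = \mat L + \mat D$, where $\mat L$ is the Laplacian of the weighted graph whose edge weights are the negated off-diagonal entries of $\mat H$ and $\mat D \succeq 0$ is a non-negative diagonal (this decomposition uses exactly the hypotheses that $\mat H$ is SDD with non-positive off-diagonals). Minimizing the quadratic $q(\vec z) = \tfrac12 \vec z^T \mat H \vec z + \ip{\vec b}{\vec z}$ over the unit $\ell_\infty$-ball is then the box-constrained Laplacian-type problem treated there. Since $\vec 0$ is feasible and $q(\vec 0) = 0$, the box-minimum $q^* := \min_{\norm{\vec z}_\infty \leq 1} q(\vec z)$ is non-positive, so \eqref{eq:localopt} is asking us to realize at least half of the optimal decrease while permitting the returned point to leave the unit box by a factor $O(\log n)$.

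The core of the construction is a recursive vertex-elimination scheme built on a \emph{vertex sparsifier chain} of $\mat H$, which can be produced in time $\bigOt{m}$ using~\cite{lee2015sparsified,DBLP:conf/stoc/KyngLPSS16}: a sequence $\mat H = \mat H^{(0)}, \mat H^{(1)}, \dots, \mat H^{(d)}$ with $d = O(\log n)$, in which each $\mat H^{(\ell+1)}$ is a sparse spectral $(1 \pm \epsilon)$-approximation (for a small absolute constant $\epsilon$) of the Schur complement of $\mat H^{(\ell)}$ onto a vertex subset $C_\ell$, where the eliminated set $F_\ell = V_\ell \setminus C_\ell$ is "strongly" diagonally dominant inside $\mat H^{(\ell)}$, where $\abs{C_\ell} \leq (1 - \Omega(1))\abs{V_\ell}$, and where the total number of non-zero entries over the whole chain is $\bigOt{m}$.

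Given the chain, I would solve the box-constrained problem bottom-up. The coarsest instance has $O(1)$ variables and is solved directly. Given an approximate box-minimizer $\vec x^{(\ell+1)}$ for $(\mat H^{(\ell+1)}, \vec b^{(\ell+1)})$, lift it to level $\ell$ by keeping the $C_\ell$-coordinates and setting the $F_\ell$-coordinates equal to the \emph{unconstrained} partial minimizer of the level-$\ell$ quadratic with $\vec x_{C_\ell}$ held fixed. Strong diagonal dominance of $F_\ell$ ensures simultaneously (i) that this partial minimizer coincides with the box-constrained one once the radius exceeds $\norm{\vec x_{C_\ell}}_\infty + O(1)$, and (ii) that its $\ell_\infty$-norm exceeds $\norm{\vec x^{(\ell+1)}}_\infty$ by only an additive $O(1)$; iterating over the $d = O(\log n)$ levels yields $\norm{\vec x}_\infty = O(\log n)$ as required. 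That the factor-$\tfrac12$ decrease in \eqref{eq:localopt} survives each lifting step follows from the $(1 \pm \epsilon)$ spectral-approximation quality of the chain together with feasibility of $\vec 0$, with the constant slack budgeted across the $O(\log n)$ levels. Finally, the bottom-up pass touches each non-zero entry $O(1)$ times per level, so together with the $\bigOt{m}$ cost of building the chain the total running time is $\bigOt{m}$.

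I expect the main obstacle — and the part genuinely specific to~\cite{cmtv17} rather than standard Laplacian/sparsification technology — to be reconciling the $\ell_\infty$ box with Schur-complement elimination. Schur complements are the natural object for \emph{unconstrained} quadratic minimization, so one must choose the elimination order (equivalently, the sparsifier chain) in such a way that on every eliminated block the optimal coordinates are a priori bounded (this is exactly the strong-diagonal-dominance requirement), and then argue that a constant fraction of the remaining vertices can always be eliminated in this controlled way. Without this the output $\ell_\infty$-norm could be $\poly(n)$ rather than $O(\log n)$, which would be useless for the box-constrained Newton method.
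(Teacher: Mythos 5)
This theorem is a black-box citation of \cite[Thm.~5.11]{cmtv17}; the paper gives no proof of it, so there is nothing internal to compare your proposal against. Your sketch does track the actual architecture of the proof in \cite{cmtv17}: the SDD-to-Laplacian-plus-diagonal decomposition, the vertex sparsifier chain (built via \cite{lee2015sparsified,DBLP:conf/stoc/KyngLPSS16}), recursive Schur-complement elimination of a strongly diagonally dominant vertex subset at each of $O(\log n)$ levels, and the per-level additive $O(1)$ growth in $\ell_\infty$-norm that yields the $O(\log n)$ final bound. You also correctly identify the genuine subtlety — making Schur complements compatible with the $\ell_\infty$-box — as the part that is specific to \cite{cmtv17} and not just off-the-shelf sparsification.

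One place where your sketch is vaguer than the real argument is the claim that ``the factor-$\tfrac12$ decrease survives each lifting step \dots with the constant slack budgeted across the $O(\log n)$ levels.'' A naive per-level multiplicative loss compounded over $\Theta(\log n)$ levels would give a polynomial, not constant, degradation, so the bookkeeping in \cite{cmtv17} has to be more careful than ``budget a constant slack''; the approximation factor one carries through the recursion cannot simply accumulate multiplicatively. If you were to write this up in full, that is the step I would scrutinize: you need a precise invariant (in \cite{cmtv17} this is tied to the $(1\pm\epsilon)$ spectral quality of each level together with the strong-diagonal-dominance structure of the eliminated block) showing that the quality of the returned point is controlled by the chain's end-to-end spectral error, not by the number of levels. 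As it stands your proposal names the right ingredients and the right hazard, but leaves that central quantitative claim as an assertion.
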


A $k$-oracle $\mathcal A$ gives rise to an iterative method for minimizing a second-order robust function $g$: starting from $x_0 \in \R^n$, we define a sequence $\vec x^{(0)}, \vec x^{(1)}, \vec x^{(2)},\ldots$ by
 \begin{equation*}
    \vec x^{(i+1)} = \vec x^{(i)} + \frac1k \Delta_i, \quad \Delta_i = \mathcal A\left(\frac{e^2}{k^2} \mat H_i, \frac{1}{k} \vec b_i\right)
  \end{equation*}
  where $\mat H_i$ is an approximate Hessian at $\vec x^{(i)}$, and $\vec b_i$ is an approximate gradient at $\vec x^{(i)}$. The following theorem, which is an adaptation of \cite[Thm.~3.4]{cmtv17}, upper bounds the progress made in each iteration.

\begin{theorem} \label{thm: second order robust update}
    Let $g\colon \R^n \to \R$ be a second-order robust function with respect to $\ell_\infty$, let $\vec x \in \R^n$ be a starting point, and suppose $\vec x^*$ is a minimizer of $g$.
    Assume that we are given 
    \begin{enumerate}[label=(\arabic*)]
        \item a vector $\vec b \in \R^n$ such that
          \begin{equation*}
            \norm{\vec b - \grad g(\vec x)}_1 \leq \delta,
          \end{equation*}
        \item two SDD matrices $\mat H_m$ and $\mat H_a$ with non-positive off-diagonal entries, such that there exists $\delta_a \geq 0$ and symmetric $\mat H_m'$ and $\mat H_a'$ satisfying $\hess g(\vec x) = \mat H_m' + \mat H_a'$ and
        \begin{equation*}
             \frac23 \mat H_m \preceq \mat H_m' \preceq \frac43 \mat H_m, \quad \norm{\mat H_a - \mat H_a'}_1 \leq \delta_a.
        \end{equation*}
    \end{enumerate}
    Let $k=\bigO{\log n}$ be such that there exists a $k$-oracle $\mathcal A$ for the class of SDD-matrices with non-positive off-diagonal entries (cf.~\cref{thm: log n oracle}).
    Then for $\mat H = \mat H_m + \mat H_a$ and $\Delta = \mathcal A\left(\frac{4e^2}{3k^2} \mat H, \frac{1}{k} \vec b\right)$, the vector $\vec x' = \vec x + \frac1k \Delta$ satisfies
    \begin{equation*}
      g(\vec x') - g(\vec x^*) \leq \left(1 - \frac{1}{4e^4 \max(k R_\infty, 1)}\right) (g(\vec x) - g(\vec x^*)) + \frac{e^2 \delta_a}{k^2} + \frac32\delta,
    \end{equation*}
    where $R_\infty$ is the $\ell_\infty$-radius of the sublevel set $\{\vec x' : g(\vec x') \leq g(\vec x)\}$ about $\vec x$.
\end{theorem}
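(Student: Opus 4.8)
\textbf{Proof proposal for \cref{thm: second order robust update}.}

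The plan is to mimic the proof of \cite[Thm.~3.4]{cmtv17}, tracking all the error terms introduced by using approximate gradients and Hessians instead of exact ones. First I would set up notation: write $\vec x^* - \vec x = \rho \vec v$ with $\norm{\vec v}_\infty = 1$ and $\rho = \norm{\vec x^* - \vec x}_\infty \leq R_\infty$ (the sublevel-set bound applies since $g(\vec x^*) \leq g(\vec x)$), and let $\mu = 1/(k\max(\rho,1/k)) = \min(1/(k\rho),1)$, so that the point $\vec x + \mu(\vec x^* - \vec x)$ lies within $\ell_\infty$-distance $1$ of $\vec x$. By second-order robustness the quadratic model $q(\vec z) = \frac12 \vec z^T \hess g(\vec x) \vec z + \ip{\grad g(\vec x)}{\vec z}$ satisfies, for $\norm{\vec z}_\infty \leq 1$, the two-sided bound $q(\vec z) - e^2 \cdot (\text{second-order term}) \leq g(\vec x + \vec z) - g(\vec x) \leq q(\vec z) + e^2 \cdot(\text{quadratic correction})$; more precisely I would use the standard consequence that $g(\vec x+\vec z) - g(\vec x) \le \ip{\grad g(\vec x)}{\vec z} + \frac{e^2}{2}\vec z^T \hess g(\vec x)\vec z$ for $\norm{\vec z}_\infty\le 1$, and a matching lower bound with $1/e^2$, exactly as in \cite{cmtv17}.

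The core chain of inequalities then runs as follows. (i) \emph{Upper-bound the value at the new point} $\vec x' = \vec x + \frac1k \Delta$: since $\norm{\frac1k\Delta}_\infty \le 1$ (because $\norm{\Delta}_\infty \le k$ from the $k$-oracle), second-order robustness gives $g(\vec x') - g(\vec x) \le \frac1k\ip{\grad g(\vec x)}{\Delta} + \frac{e^2}{2k^2}\Delta^T\hess g(\vec x)\Delta$. Now I replace $\grad g(\vec x)$ by $\vec b$ (error $\le \frac1k\norm{\vec b - \grad g(\vec x)}_1\norm{\Delta}_\infty \le \delta$ by H\"older) and $\hess g(\vec x) = \mat H_m' + \mat H_a'$ by $\frac43\mat H_m + \mat H_a = \frac43(\mat H_m + \mat H_a) - \frac13 \mat H_a$: the $\mat H_m'\preceq \frac43\mat H_m$ bound handles the multiplicative part, while $\norm{\mat H_a - \mat H_a'}_1 \le \delta_a$ contributes $\le \frac{e^2}{2k^2}\delta_a\norm{\Delta}_\infty^2 \le \frac{e^2\delta_a}{2}$. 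This shows $g(\vec x') - g(\vec x) \le \frac1k\ip{\vec b}{\Delta} + \frac{2e^2}{3k^2}\Delta^T\mat H\Delta + (\text{errors})$, and the right-hand side (before errors) is exactly $\frac{e^2}{2}\bigl(\frac{4}{3k^2}\Delta^T\mat H\Delta \cdot \frac{1}{e^2} + \frac{2}{ke^2}\ip{\vec b}{\Delta}\bigr)$; rescaling so it matches the objective $\frac12\Delta^T(\frac{4e^2}{3k^2}\mat H)\Delta + \ip{\frac1k\vec b}{\Delta}$ that the oracle minimizes, I get $g(\vec x') - g(\vec x) \le \frac12\bigl(\frac12 \Delta^T(\tfrac{4e^2}{3k^2}\mat H)\Delta + \ip{\tfrac1k\vec b}{\Delta}\bigr) + \ldots$ — here I will need to be careful with the constants, but this is the routine bookkeeping. (ii) \emph{Use the oracle guarantee} \eqref{eq:localopt}: the bracketed quantity is at most $\frac12\min_{\norm{\vec z}_\infty\le 1}(\ldots)$, and in particular at most the value at $\vec z = \mu(\vec x^* - \vec x)/1$ suitably scaled into the unit ball. (iii) \emph{Evaluate the model at that test point} and run the same Hessian/gradient approximations \emph{in reverse} (now picking up the lower bound $\frac23\mat H_m \preceq \mat H_m'$ and again $\delta, \delta_a$ errors) to relate it back to $g(\vec x + \mu(\vec x^*-\vec x)) - g(\vec x)$, and finally (iv) use convexity of $g$: $g(\vec x + \mu(\vec x^* - \vec x)) \le (1-\mu)g(\vec x) + \mu g(\vec x^*)$, so $g(\vec x + \mu(\vec x^*-\vec x)) - g(\vec x) \le -\mu(g(\vec x) - g(\vec x^*))$. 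Combining, $g(\vec x') - g(\vec x^*) \le (1 - c\mu)(g(\vec x) - g(\vec x^*)) + O(\delta) + O(e^2\delta_a/k^2)$ for the appropriate constant $c$, and since $\mu = 1/(k\max(\rho,1/k)) \ge 1/\max(kR_\infty,1)$, plugging in $c = 1/(4e^4)$ and collecting the error constants ($\tfrac32\delta$ and $\tfrac{e^2\delta_a}{k^2}$, after the factor-$\tfrac12$ from the oracle and the factor-$e^2$ from robustness are accounted for) gives the claimed bound.

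The main obstacle I anticipate is \emph{not} any single step but the careful propagation of the two approximations through both the ``forward'' direction (bounding $g(\vec x')$) and the ``backward'' direction (lower-bounding the oracle's optimum via the test point $\mu(\vec x^*-\vec x)$), making sure the multiplicative Hessian error only ever helps in the direction needed — $\mat H_m' \preceq \frac43\mat H_m$ when upper-bounding a PSD quadratic form in $\Delta$, and $\frac23\mat H_m \preceq \mat H_m'$ when the model is evaluated at a point whose quadratic term we want to keep small — and that the additive error $\delta_a$ always enters with $\norm{\cdot}_\infty^2 \le 1$ or $\le k^2$ factors that land it at the stated $e^2\delta_a/k^2$ scale, and likewise $\delta$ at the stated $\tfrac32\delta$ scale after H\"older with $\norm{\Delta}_\infty \le k$ and the normalizations by $1/k$. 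The choice of the scaling factor $\tfrac{4e^2}{3k^2}$ in the oracle call (rather than $\tfrac{e^2}{k^2}$) is precisely what absorbs the $\tfrac43$ in $\mat H_m' \preceq \tfrac43\mat H_m$, and verifying that this is the right constant so that everything closes up is where I would spend the most care. The rest — convexity, H\"older, and the definition of $\mu$ in terms of $R_\infty$ — is exactly as in \cite[Thm.~3.4]{cmtv17}.
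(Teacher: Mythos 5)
Your high-level plan matches the paper's proof (and that of \cite[Thm.~3.4]{cmtv17}): bound $g(\vec x')-g(\vec x)$ by the quadratic model via second-order robustness, swap in the approximate $\vec b$ and $\mat H$ with $\bigO{\delta}$ and $\bigO{\delta_a}$ losses, invoke the $k$-oracle guarantee, evaluate the oracle's objective at a test point of the form $\mu(\vec x^*-\vec x)$, and finish with convexity. Steps (i)--(ii) and (iv) of your sketch are sound, and your bookkeeping of the approximation errors (H\"older for $\delta$, $\norm{\cdot}_1\norm{\cdot}_\infty^2$ for $\delta_a$, using $\mat H_a \succeq 0$ to move between $e^2\mat H_a$ and $\tfrac{4e^2}{3}\mat H_a$) is the right way to go.

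However, step (iii) as stated has a genuine gap. Write $Q(\mat M,\vec b,\vec z)=\ip{\vec b}{\vec z}+\tfrac12\vec z^T\mat M\vec z$. After the ``backward'' approximations, at the test point $\vec z_0=\mu(\vec x^*-\vec x)$ what you obtain is
\[
Q\bigl(\tfrac{4e^2}{3}\mat H,\vec b,\vec z_0\bigr)\le Q\bigl(2e^2\hess g(\vec x),\grad g(\vec x),\vec z_0\bigr)+\text{(errors)},
\]
and you then want to ``relate it back to $g(\vec x+\vec z_0)-g(\vec x)$.'' But the robustness \emph{lower} bound only gives $Q\bigl(\tfrac{1}{e^2}\hess g(\vec x),\grad g(\vec x),\vec z_0\bigr)\le g(\vec x+\vec z_0)-g(\vec x)$, and since $\hess g(\vec x)\succeq 0$, the quantity $Q(2e^2\hess g(\vec x),\grad g(\vec x),\vec z_0)$ is \emph{larger} than $Q(\tfrac{1}{e^2}\hess g(\vec x),\grad g(\vec x),\vec z_0)$ --- same linear term, a quadratic term $2e^4$ times bigger --- so it is not dominated by $g(\vec x+\vec z_0)-g(\vec x)$, and the chain does not close. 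The missing idea, which is where the factor $1/(2e^4)$ (and hence the $1/(4e^4)$ in the statement) actually comes from, is to instead evaluate the oracle's objective at the \emph{rescaled} test point $\vec z_0/(2e^4)$ (still inside the admissible $\ell_\infty$-ball) and use the exact rescaling identity $Q\bigl(2e^2\mat M,\vec b,\vec z/(2e^4)\bigr)=\tfrac{1}{2e^4}Q\bigl(\tfrac{1}{e^2}\mat M,\vec b,\vec z\bigr)$ (a consequence of $Q(\mat M,\vec b,c\vec z)=c\ip{\vec b}{\vec z}+\tfrac{c^2}{2}\vec z^T\mat M\vec z$), which converts the $2e^2$-weighted quadratic into the $\tfrac{1}{e^2}$-weighted one at the cost of a factor $1/(2e^4)$; only then does the robustness lower bound apply. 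The paper packages this via the auxiliary minimizers $\vec v_L,\vec v_U$ of the $\tfrac{1}{e^2}$- and $2e^2$-weighted quadratics over the ball of radius $1/k$ and the inequality $Q(2e^2\hess g,\grad g,\vec v_U)\le Q(2e^2\hess g,\grad g,\vec v_L/(2e^4))=\tfrac{1}{2e^4}Q(\tfrac{1}{e^2}\hess g,\grad g,\vec v_L)$. Combined with the oracle's factor $\tfrac12$, this yields the $\tfrac{1}{4e^4}$ in the statement; attributing it to ``collecting error constants'' or ``routine bookkeeping'' understates its structural role --- without the rescaling, your steps (iii)--(iv) would fail.
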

Before giving the proof, we introduce the following notation. For a symmetric matrix $\mat H$ and $\vec b, \vec z \in \R^n$, we denote
\begin{equation*}
  Q(\mat H, \vec b, \vec z) = \ip{\vec b}{\vec z} + \frac12 \vec z^T \mat H \vec z.
\end{equation*}
We will use the following easily-verified properties of $Q$ repeatedly. 
\begin{lemma}
    For symmetric matrices $\mat H, \mat H'$ and vectors $\vec b, \vec b', \vec z$, we have the following estimates:
    \begin{enumerate}
        \item If $\mat H \preceq \mat H'$, then $Q(\mat H, \vec b, \vec z) \leq Q(\mat H', \vec b, \vec z)$.
        \item If $\norm{\mat H - \mat H'}_1 \leq \delta_a$, then 
        \[
            \abs*{Q(\mat H, \vec b, \vec z) - Q(\mat H', \vec b, \vec z)} \leq \frac12 \delta_a \norm{\vec z}_\infty^2.
        \]
        \item We have
        \[
                \abs*{Q(\mat H, \vec b, \vec z) - Q(\mat H, \vec b', \vec z)} = \abs*{\ip{\vec b - \vec b'}{\vec z}} \leq \norm{\vec b - \vec b'}_1 \norm{\vec z}_\infty.
        \]
    \end{enumerate}
\end{lemma}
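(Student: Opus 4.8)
The plan is to prove all three bounds by the same elementary route: write out $Q(\mat H, \vec b, \vec z) = \ip{\vec b}{\vec z} + \frac12 \vec z^T \mat H \vec z$ for each of the two arguments being compared, cancel the terms they share, and reduce each inequality to a one-line estimate.

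For part (1), I would subtract to get $Q(\mat H', \vec b, \vec z) - Q(\mat H, \vec b, \vec z) = \frac12 \vec z^T(\mat H' - \mat H)\vec z$; the linear parts are identical, and the right-hand side is nonnegative because $\mat H' - \mat H \succeq 0$ by hypothesis, which is exactly the claimed monotonicity.

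For part (2), the linear parts again cancel, so $Q(\mat H, \vec b, \vec z) - Q(\mat H', \vec b, \vec z) = \frac12 \vec z^T (\mat H - \mat H') \vec z$, and with $\mat E := \mat H - \mat H'$ I would bound the quadratic form entrywise, $\abs{\vec z^T \mat E \vec z} \leq \sum_{i,j}\abs{E_{ij}}\abs{z_i}\abs{z_j} \leq \norm{\vec z}_\infty^2 \sum_{i,j}\abs{E_{ij}} = \norm{\vec z}_\infty^2\,\norm{\mat E}_1 \leq \delta_a\norm{\vec z}_\infty^2$, and divide by $2$. For part (3), both quadratic terms cancel, leaving $Q(\mat H, \vec b, \vec z) - Q(\mat H, \vec b', \vec z) = \ip{\vec b}{\vec z} - \ip{\vec b'}{\vec z} = \ip{\vec b - \vec b'}{\vec z}$, and H\"older's inequality ($\ell_1$ against $\ell_\infty$) finishes it.

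There is no genuine obstacle here; the only point to keep straight is that in part (2) the symbol $\norm{\cdot}_1$ refers to the entrywise $\ell_1$-norm of a matrix (the sum of absolute values of all entries), since the estimate above uses precisely that quantity — the analogous bound phrased with an operator norm would instead involve $\norm{\vec z}_2$ rather than $\norm{\vec z}_\infty$. With that convention fixed, all three parts are immediate.
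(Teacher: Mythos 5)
Your proof is correct, and it is exactly the routine verification the paper has in mind (the lemma is stated without proof, labelled "easily-verified"). Your remark that $\norm{\cdot}_1$ for matrices must mean the entrywise $\ell_1$-norm is also right and is the one detail worth keeping explicit, since it is what makes $\norm{\vec z}_\infty^2$ the correct factor in part (2).
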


\begin{proof}[Proof of~\cref{thm: second order robust update}]
  We follow the proof of \cite[Thm.~3.4]{cmtv17}, and use their implementation of a $k$-oracle $\mathcal A$ for $k=\bigO{\log n}$, as detailed in \cref{thm: log n oracle}.
  That is, $\mathcal A$ takes as input an SDD matrix $\mat H$ with $\bigOt{m}$ non-zero entries (off-diagonal entries $\leq 0$) and a vector $\vec b$, and outputs a vector $\vec z$ such that $\norm{\vec z}_\infty \leq k$ and
  \begin{equation*}
    Q(\mat H, \vec b, \vec z) \leq \frac{1}{2} \inf_{\norm{\vec z'}_\infty \leq 1} Q(\mat H, \vec b, \vec z').
  \end{equation*}
Then for   
  \begin{equation*}
    \vec x' = \vec x + \frac1k \Delta, \quad \Delta = \mathcal A\left(\frac{4e^2}{3k^2} \mat H, \frac{1}{k} \vec b\right)
  \end{equation*}
we have 
  \begin{align*}
    Q\left(\frac{4e^2}{3} \mat H, \vec b, \frac{1}{k} \Delta\right) & = 
      Q\left(\frac{4e^2}{3k^2} \mat H, \frac{1}{k} \vec b, \Delta \right) \\
      & \leq \frac{1}{2} \inf_{\norm{z}_\infty \leq 1} Q\left(\frac{4e^2}{3k^2} \mat H, \frac{1}{k} \vec b, \vec z \right) \\
      & = \frac{1}{2} \inf_{\norm{z}_\infty \leq 1} Q\left(\frac{4e^2}{3} \mat H, \vec b, \vec z/k \right)\\
      & = \frac{1}{2} \inf_{\norm{z}_\infty \leq \frac{1}{k}} Q\left(\frac{4e^2}{3} \mat H, \vec b, \vec z \right).
  \end{align*}
  Note that the second-order robustness of $g$ implies that for $\vec{\tilde x} \in \R^n$ with $\norm{\vec x - \vec{\tilde x}}_\infty \leq 1$, we have quadratic lower and upper bounds
  \begin{equation}
    \label{eq:g lower and upper bound on box}
      Q\left(\frac{1}{e^2} \hess g(\vec x), \grad g(\vec x), \vec{\tilde x} - \vec x\right) \leq g(\vec{\tilde x}) - g(\vec x) \leq Q\left(e^2 \hess g(\vec x), \grad g(\vec x), \vec{\tilde x} - \vec x\right).
  \end{equation}
  The remainder of the proof is structured as follows. We first compare quadratics involving $\hess g(\vec x)$ and $\nabla g(\vec x)$ to quadratics involving the approximations $\mat H$ and $\vec b$ in \cref{eq:long local inequality chain pt1,eq:long local inequality chain pt2}. Using these estimates we then obtain a local progress bound over an $\ell_\infty$-ball of radius $1/k$, see \cref{eq:intermediateestimate}. Finally, we convert this local bound into a more global estimate.
  
  The properties of the approximate Hessian and gradient guarantee that
  \begin{equation}
    \label{eq:long local inequality chain pt1}
    \begin{split}
      & Q\left(e^2 \hess g(\vec x), \grad g(\vec x), \vec{\tilde x} - \vec x\right) \\
      & \leq Q\left(e^2 \hess g(\vec x), \vec b, \vec{\tilde x} - \vec x\right) + \delta \\
      & = Q\left(e^2 \mat H_m', \vec b, \vec{\tilde x} - \vec x\right) + Q\left(e^2 \mat H_a', \vec b, \vec{\tilde x} - \vec x\right) - \ip{\vec b}{\vec {\tilde x} - \vec x} + \delta \\
      & \leq Q\left(\frac{4e^2}{3} \mat H_m, \vec b, \vec{\tilde x} - \vec x\right) + Q\left(e^2 \mat H_a, \vec b, \vec{\tilde x} - \vec x\right) + \frac{e^2}{2} \delta_a \norm{\vec{\tilde x} - \vec x}_\infty^2 - \ip{\vec b}{\vec {\tilde x} - \vec x} + \delta \\
      & \leq Q\left(\frac{4e^2}{3} \mat H_m, \vec b, \vec{\tilde x} - \vec x\right) + Q\left( \frac{4e^2}{3} \mat H_a, \vec b, \vec{\tilde x} - \vec x\right) + \frac{e^2}{2} \delta_a \norm{\vec{\tilde x} - \vec x}_\infty^2 - \ip{\vec b}{\vec {\tilde x} - \vec x} + \delta \\
      & = Q\left(\frac{4e^2}{3} \mat H, \vec b, \vec{\tilde x} - \vec x\right) + \frac{e^2}{2} \delta_a \norm{\vec{\tilde x} - \vec x}_\infty^2 + \delta.
    \end{split}
  \end{equation}
  Furthermore, we also have the upper bound
  \begin{equation}
      \label{eq:long local inequality chain pt2}
      \begin{split}
          & Q\left(\frac{4e^2}{3} \mat H, \vec b, \vec{\tilde x} - \vec x\right) \\
          & = Q\left(\frac{4e^2}{3} \mat H_m, \vec b, \vec{\tilde x} - \vec x\right) + Q\left( \frac{4e^2}{3} \mat H_a, \vec b, \vec{\tilde x} - \vec x\right) - \ip{\vec b}{\vec {\tilde x} - \vec x} \\
          & \leq Q\left(2e^2 \mat H_m', \vec b, \vec{\tilde x} - \vec x\right) + Q\left( 2 e^2 \mat H_a, \vec b, \vec{\tilde x} - \vec x\right) - \ip{\vec b}{\vec {\tilde x} - \vec x}  \\
          & \leq Q\left(2e^2 \mat H_m', \vec b, \vec{\tilde x} - \vec x\right) + Q\left( 2 e^2 \mat H_a', \vec b, \vec{\tilde x} - \vec x\right) + e^2 \delta_a \norm{\vec{\tilde x} - \vec x}_\infty^2 - \ip{\vec b}{\vec {\tilde x} - \vec x}  \\
          & \leq Q\left(2 e^2 \mat H_m', \vec b, \vec{\tilde x} - \vec x\right) + Q\left( 2 e^2 \mat H_a', \vec b, \vec{\tilde x} - \vec x\right) + e^2 \delta_a \norm{\vec{\tilde x} - \vec x}_\infty^2 - \ip{\vec b}{\vec {\tilde x} - \vec x}  \\
          & = Q\left(2 e^2 \hess g(\vec x), \vec b, \vec{\tilde x} - \vec x\right) + e^2 \delta_a \norm{\vec{\tilde x} - \vec x}_\infty^2  \\
          & \leq Q\left(2 e^2 \hess g(\vec x), \grad g(\vec x), \vec{\tilde x} - \vec x\right) + e^2 \delta_a \norm{\vec{\tilde x} - \vec x}_\infty^2  + \delta.
      \end{split}
  \end{equation}
  Let $\vec v_L$ and $\vec v_U$ be the minimizers of quadratics over the $\ell_\infty$-ball of radius $1/k$: 
  \begin{equation*}
      \vec v_L = \argmin_{\norm{\vec v}_\infty \leq 1/k} Q(\frac{1}{e^2} \hess g(\vec x), \grad g(\vec x), \vec v),
      \quad
      \vec v_U = \argmin_{\norm{\vec v}_\infty \leq 1/k} Q(2 e^2 \hess g(\vec x), \grad g(\vec x), \vec v).
  \end{equation*}
  Then by the guarantees of the $k$-oracle, we have
  \begin{align*}
      Q\left(\frac{4e^2}{3} \mat H, \vec b, \frac{1}{k} \Delta \right) & \leq \frac{1}{2} \inf_{\norm{\vec v}_\infty \leq 1/k} Q\left(\frac{4e^2}{3} \mat H, \vec b, \vec v \right) \\
                                                                          & \leq \frac{1}{2} \inf_{\norm{\vec v}_\infty \leq 1/k} (Q\left(2 e^2 \hess g(\vec x), \grad g(\vec x), \vec v \right) + e^2 \delta_a \norm{\vec v}_\infty^2 + \delta) \\
                                                                          & \leq \frac{1}{2} Q\left(2 e^2 \hess g(\vec x), \grad g(\vec x), \vec v_U \right) + \frac{e^2\delta_a}{2 k^2} + \frac12 \delta,
  \end{align*}
  where the second inequality uses \cref{eq:long local inequality chain pt2}, and the norm bounds $\norm{\vec v}_\infty \leq 1/k \leq 1$ (to apply the inequality).
  Using the quadratic upper bound from \cref{eq:g lower and upper bound on box} on $g(\vec x + \frac{1}{k} \Delta) - g(\vec x)$ and \cref{eq:long local inequality chain pt1}, this yields
  \begin{align*}
      g(\vec x + \frac{1}{k} \Delta) - g(\vec x) & \leq Q(e^2 \hess g(\vec x), \grad g(\vec x), \frac{1}{k} \Delta) \\
      & \leq Q\left(\frac{4e^2}{3} \mat H, \vec b, \frac{1}{k} \Delta \right) + \frac{e^2}{2} \delta_a + \delta\\
        & \leq \frac{1}{2} Q\left(2 e^2 \hess g(\vec x), \grad g(\vec x), \vec v_U \right) + \frac{e^2 \delta_a}{k^2} + \frac32 \delta,
  \end{align*}
  We can then further upper bound this using 
  \begin{align*}
      Q\left(2 e^2 \hess g(\vec x), \grad g(\vec x), \vec v_U \right) & \leq Q\left(2 e^2 \hess g(\vec x), \grad g(\vec x), \frac{\vec v_L}{2 e^4} \right) \\
      & = \frac{1}{2 e^4} Q\left(\frac{1}{e^2} \hess g(\vec x), \grad g(\vec x), \vec v_L\right) 
  \end{align*}
  where the inequality uses that $\vec v_U = \argmin_{\|v\|_\infty \leq 1/k} Q(2 e^2 \hess g(\vec x), \grad g(\vec x), v)$ and $\|\vec v_L\|_\infty \leq 1/k$. 
  Collecting estimates, we obtain
  \begin{equation}
      \label{eq:intermediateestimate}
      g(\vec x + \frac{1}{k} \Delta) - g(\vec x) \leq \frac{1}{4 e^4} Q\left(\frac{1}{e^2} \hess g(\vec x), \grad g(\vec x), \vec v_L\right) + \frac{e^2 \delta_a}{k^2} + \frac32 \delta.
  \end{equation}
  
  We now convert this to a more global estimate.
  Let $\vec x^*$ be a global minimizer of $g$. Set $\vec y = \vec x + \frac{1}{\max(k R_\infty, 1)} (\vec x^* - \vec x)$, so that $\norm{\vec y - \vec x}_\infty \leq \frac{1}{k}$.
  For the lower bound 
  \begin{equation*}
    g_L(\vec{\tilde x}) = g(\vec x) + Q(\frac{1}{e^2} \hess g(\vec x), \grad g(\vec x), \vec{\tilde x} - \vec x)
  \end{equation*}
  on $g(\vec{\tilde x})$ we see that $g_L(\vec x + \vec v_L) \leq g_L(\vec y) \leq g(\vec y)$ since $\vec x + \vec v_L$ minimizes $g_L \leq g$ over the $\ell_\infty$-ball of radius $1/k$ around $\vec x$.
  By convexity of $g$ we get
  \begin{align*}
      g(\vec y) & = g(\vec x + \frac{1}{\max(k R_\infty, 1)} (\vec x^* - \vec x)) \\
                & \leq (1 - \frac{1}{\max(k R_\infty, 1)}) g(\vec x) + \frac{1}{\max(k R_\infty, 1)} g(\vec x^*)
  \end{align*}
  so
  \begin{equation*}
      g(\vec x) - g_L(\vec x + \vec v_L) \geq g(\vec x) - g(\vec y)  \geq \frac{1}{\max(k R_\infty, 1)} (g(\vec x) - g(\vec x^*)).
  \end{equation*}
  Using this estimate in \cref{eq:intermediateestimate}, this gives
  \begin{equation} \notag
      g(\vec x) - g(\vec x + \frac{1}{k} \Delta) \geq \frac{1}{4 e^4 \max(k R_\infty, 1)} (g(\vec x) - g(\vec x^*)) - (\frac{e^2 \delta_a}{k^2} + \frac32 \delta),
  \end{equation}
  which after rearranging and rewriting $\vec x' = \vec x + \frac1k \Delta$ reads
  \begin{equation*}
    g(\vec x') - g(\vec x^*) \leq \left(1 - \frac{1}{4e^4 \max(k R_\infty, 1)}\right) (g(\vec x) - g(\vec x^*)) + \frac{e^2 \delta_a}{k^2} + \frac32 \delta. \qedhere
  \end{equation*}
\end{proof}

\subsection{A second-order robust potential for matrix scaling and its properties}

Given a sparse matrix $\mat A \in \R_{\geq 0}^{n \times n}$, a desired error $\eps > 0$, and some number $B > 0$, we consider the regularized potential function $\tilde f(x,y)$ given by
\begin{equation*}
  \tilde f(\vec x, \vec y) = f(\vec x, \vec y) + \frac{\eps^2}{n e^B} \left( \sum_i (e^{x_i} + e^{-x_i}) + \sum_j (e^{y_j} + e^{-y_j}) \right),
\end{equation*}
where $f$ is the commonly-used potential function from \cref{def: f}. In \cite{cmtv17}, the same regularization term is used, but with a different weight (since they aim for $\ell_2$-scaling and we aim for $\ell_1$-scaling). The following is then an adaptation of \cite[Lem.~4.10]{cmtv17}.
\begin{lemma}
\label{lem:basic regularized properties}
Assume $\mat A$ is asymptotically scalable, with $\norm{\mat A}_1 \leq 1$, and $\mu > 0$ its smallest non-zero entry.
Let $B > 0$ and $\eps > 0$ be given. Then the regularized potential $\tilde f$ satisfies the following properties:
\begin{enumerate}
    \item $\tilde f$ is second-order robust with respect to $\ell_\infty$, and its Hessian is SDD;
    \item we have $f(\vec z) \leq \tilde f(\vec z)$ for any $\vec z = (\vec x,\vec y)$,
    \item for all $\vec z$ such that $\tilde f(\vec z) \leq \tilde f(\vec 0)$, we have $\norm{\vec z}_\infty \leq B + \ln(4n + (n \ln(1/\mu) / \eps^2))$, and 
    \item for any $\vec z_\eps$ such that $f(\vec z_\eps) \leq f^* + \eps^2$ and $\norm{\vec z_\eps}_\infty \leq B$, one has $\tilde f(\vec z_\eps) \leq f^* + 5 \eps^2$. 
    In particular, if such a $\vec z_\eps$ exists, then $\abs{f^* - \tilde f^*} \leq 5 \eps^2$.
\end{enumerate}
\end{lemma}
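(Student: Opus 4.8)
The plan is to decompose $\tilde f = f + R$, where $R(\vec x,\vec y) = \frac{\eps^2}{ne^B}\bigl(\sum_i(e^{x_i}+e^{-x_i}) + \sum_j(e^{y_j}+e^{-y_j})\bigr)$; all four items are statements about this decomposition, and I would prove them in the order $(2),(4),(1),(3)$. Item $(2)$ is immediate since $R\ge 0$ pointwise. For item $(4)$, the bound $\norm{\vec z_\eps}_\infty\le B$ gives $e^{\pm x_i},e^{\pm y_j}\le e^{B}$, so each of the $2n$ summands in the bracket defining $R$ is at most $2e^{B}$ and $R(\vec z_\eps)\le \frac{\eps^2}{ne^B}\cdot 2n\cdot 2e^B = 4\eps^2$; adding $f(\vec z_\eps)\le f^*+\eps^2$ yields $\tilde f(\vec z_\eps)\le f^*+5\eps^2$, and then $f^*\le\tilde f^*\le\tilde f(\vec z_\eps)$ (the first inequality from item $(2)$, since $\tilde f\ge f\ge f^*$ everywhere) gives $\abs{f^*-\tilde f^*}\le 5\eps^2$.

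For item $(1)$ I would read off the SDD property from the explicit Hessian: $\hess R$ is a strictly positive diagonal matrix, while in the row of $\hess f$ indexed by $x_i$ the diagonal entry $r_i(\A x y)=\sum_j A_{ij}e^{x_i+y_j}$ equals exactly the sum of the moduli of the off-diagonal entries $A_{ij}e^{x_i+y_j}$ in that row (and symmetrically for the $y_j$-rows), so $\hess f$ is SDD with equality and $\hess\tilde f=\hess f+\hess R$ is strictly SDD. For second-order robustness I would treat the two Hessians separately under a shift of $(\vec x,\vec y)$ of $\ell_\infty$-norm at most $1$: conjugating $\hess f(\vec x,\vec y)$ by $\diag(\mat I,-\mat I)$ (which preserves $\preceq$) gives the weighted bipartite Laplacian $\sum_{i,j}A_{ij}e^{x_i+y_j}\mat L_{ij}$ with each elementary edge Laplacian $\mat L_{ij}\succeq 0$, so the shift multiplies each edge weight—and hence the whole Loewner form—by a factor in $[e^{-2},e^{2}]$; for the diagonal $\hess R$, the elementary inequality $e^{-\abs{s}}(e^{t}+e^{-t})\le e^{t+s}+e^{-t-s}\le e^{\abs{s}}(e^{t}+e^{-t})$ shows each entry scales by a factor in $[e^{-1},e]\subseteq[e^{-2},e^{2}]$. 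Summing the two two-sided Loewner bounds gives $e^{-2}\hess\tilde f(\vec x,\vec y)\preceq\hess\tilde f(\vec x',\vec y')\preceq e^{2}\hess\tilde f(\vec x,\vec y)$.

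Item $(3)$ is the substantive one. Given $\vec z=(\vec x,\vec y)$ with $\tilde f(\vec z)\le\tilde f(\vec 0)$, set $t=\norm{\vec z}_\infty$. I would lower-bound the regularizer by keeping only the coordinate attaining $t$, $R(\vec z)\ge\frac{\eps^2}{ne^B}e^{t}$, and upper-bound it by $R(\vec z)=\tilde f(\vec z)-f(\vec z)\le\tilde f(\vec 0)-f(\vec z)\le\tilde f(\vec 0)-f^*$, where $\tilde f(\vec 0)=\norm{\mat A}_1+\frac{4\eps^2}{e^B}\le 1+4\eps^2$. The step that makes this useful is the dimension-free lower bound $f^*\ge-\ln(1/\mu)$, which I would obtain from the standard dual characterization $f^*=-\min\{D(\mat P\,\|\,\mat A): \mat P\ge 0,\ \operatorname{supp}(\mat P)\subseteq\operatorname{supp}(\mat A),\ \mat P\text{ has marginals }(\vec r,\vec c)\}$ (the feasible set being non-empty precisely because $\mat A$ is asymptotically scalable), together with $D(\mat P\,\|\,\mat A)=\sum P_{ij}\ln P_{ij}-\sum P_{ij}\ln A_{ij}\le 0+\ln(1/\mu)$, using $P_{ij}\le\sum_{j'} P_{ij'}=r_i\le 1$ and $A_{ij}\ge\mu$ on the support. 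Combining the two estimates gives $e^{t}\le\frac{ne^B}{\eps^2}\bigl(1+4\eps^2+\ln(1/\mu)\bigr)$, and taking logarithms and folding the $O(1)$-sized additive terms into $n\ln(1/\mu)/\eps^2$ (using $\eps\le 1$ and the standing assumption $\mu\le 1/\poly(n)$, so $1/\eps^2\le\ln(1/\mu)/\eps^2$) yields $t\le B+\ln\bigl(4n+n\ln(1/\mu)/\eps^2\bigr)$.

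The main obstacle is precisely this lower bound $f^*\ge-\ln(1/\mu)$: if one only uses the Cauchy--Schwarz estimate $f(\vec z)\ge-2\norm{\vec z}_\infty$, then the $\ln t$ self-reference in the displayed inequality forces an extra factor in front of $B$, so the clean form of the lemma genuinely needs either the Fenchel/Kullback--Leibler duality above, or—when $\mat A$ is exactly scalable—the identity $f^*=1-\langle\vec r,\vec x^*\rangle-\langle\vec c,\vec y^*\rangle$ at a (suitably normalized) minimizer combined with \cref{cor: variation norm bound}, together with a limiting argument in the merely asymptotically-scalable case. Everything else is bookkeeping: the constant-tracking inside the logarithm in item $(3)$ and the conjugation-plus-Laplacian-decomposition observation in item $(1)$ are the only non-mechanical points.
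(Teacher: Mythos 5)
Your proof follows the same overall structure as the paper's, and all four items are handled correctly in spirit. Items~(2) and~(4) match the paper exactly. For item~(1) the paper only asserts the claim without proof; your Laplacian decomposition for $\hess f$ plus the elementary inequality $e^{-\abs{s}}(e^t+e^{-t})\le e^{t+s}+e^{-t-s}\le e^{\abs{s}}(e^t+e^{-t})$ for the regularizer is a clean way to fill that in.

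The interesting divergence is in item~(3), and specifically in where the potential lower bound comes from. The paper simply invokes $f(\vec 0)-f^*\le\ln(1/\mu)$ as a standard fact (valid when $\norm{\mat A}_1\le 1$; in general $\norm{\mat A}_1-1+\ln(1/\mu)$). You instead derive a lower bound on $f^*$ from the Kullback--Leibler dual, which is a genuine self-contained argument and a nice addition. However, you undersell your own bound: the correct dual identity (for exactly scalable $\mat A$, with the limiting argument you mention otherwise) is $f^*=\norm{\A{x^*}{y^*}}_1-\langle\vec r,\vec x^*\rangle-\langle\vec c,\vec y^*\rangle=1-D(\mat P^*\,\|\,\mat A)$, not $f^*=-\min D(\mat P\,\|\,\mat A)$; the missing ``$+1$'' matters. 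With it, $f(\vec 0)-f^*=\norm{\mat A}_1-1+D(\mat P^*\,\|\,\mat A)\le D(\mat P^*\,\|\,\mat A)\le\ln(1/\mu)$, and then the calculation closes to give exactly $e^{\abs{x_i}}\le e^{x_i}+e^{-x_i}\le\frac{n e^B\ln(1/\mu)}{\eps^2}+4n$ and $\norm{\vec z}_\infty\le B+\ln\bigl(4n+n\ln(1/\mu)/\eps^2\bigr)$ with the advertised constants. As written, your looser $f^*\ge -\ln(1/\mu)$ gives $e^t\le\frac{n e^B}{\eps^2}\bigl(1+4\eps^2+\ln(1/\mu)\bigr)$, i.e.\ $t\le B+\ln\bigl(4n+n(1+\ln(1/\mu))/\eps^2\bigr)$, and the ``folding'' you describe only absorbs the extra $n/\eps^2$ term up to a multiplicative constant inside the logarithm---you would end up with something like $B+\ln\bigl(4n+2n\ln(1/\mu)/\eps^2\bigr)$, not the lemma's stated bound. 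That is harmless for the algorithmic conclusions but it is a real discrepancy with the precise claim. The fix is simply to keep the $+1$ in the dual formula; then your approach reproduces the paper's bound exactly.
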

\begin{proof}
The first point is easy to verify, as is the second point (the regularization term is always positive).
For the third point, suppose we have a $\vec z$ such that $\tilde f(\vec z) \leq \tilde f(0)$. Then
\begin{equation} \label{eq: sum exp bound}
    \frac{\eps^2}{ne^B} \left( \sum_i (e^{x_i} + e^{-x_i}) + \sum_j (e^{y_j} + e^{-y_j}) \right) \leq f(\vec 0) - f(\vec z) + \frac{\eps^2}{ne^B} \cdot 4n \leq \ln(1/\mu) + \frac{4 \eps^2}{e^B}.
\end{equation}
where the last inequality follows from the potential bound $f(\vec 0) - f^* \leq \ln(1/\mu)$ (which depends on $\norm{\mat A}_1 \leq 1$; in general the upper bound is $\norm{\mat A}_1 - 1 + \ln(1/\mu)$).
Since each of the regularization terms is positive, we may restrict ourselves to a single term and see that 
\begin{equation*}
    e^{x_i} + e^{-x_i} \leq \frac{e^B n \ln(1/\mu)}{\eps^2} + 4n,
\end{equation*}
from which we may deduce
\begin{equation*}
    \abs{x_i} \leq \ln\left(\frac{e^B n \ln(1/\mu)}{\eps^2} + 4n \right) = B + \ln\left(\frac{n \ln(1/\mu)}{\eps^2} + \frac{4n}{e^B} \right) \leq B + \ln\left(\frac{n \ln(1/\mu)}{\eps^2} + 4n \right),
\end{equation*}
where the last inequality uses $e^B \geq 1$ (recall $B > 0$). The same upper bound holds for $\abs{y_j}$.

For the last point, note that if $\vec z_\eps = (\vec x, \vec y)$, then $e^{x_i} + e^{-x_i} \leq 2 e^B$ and similarly for $y$, so 
\begin{equation*}
    \tilde f(\vec z_\eps) \leq f(\vec z_\eps) + \frac{\eps^2}{n e^B} \cdot 4n e^B = f(\vec z_\eps) + 4 \eps^2 \leq f^* + 5 \eps^2. 
\end{equation*}
If such a $\vec z_\eps$ exists, then 
\begin{equation*}
    f^* \leq \tilde f^* \leq \tilde f(\vec z_\eps) \leq f^* + 5 \eps^2. \qedhere
\end{equation*}
\end{proof}

In order to use \cref{thm: second order robust update} to minimize $f$, we need to show how to approximate both the gradient and Hessian of $\tilde f$. We first consider the Hessian of $\tilde f$, which can be written as the sum of the Hessian of $f$ and the Hessian of the regularizer $\tilde f -f$. We have 
\begin{equation}
\begin{split}
\nabla^2f(\vec x,\vec y) &= \begin{bmatrix} \diag(\vec r(\A  x y)) & \A x y \\ {\A x y}^T & \diag(\vec c(\A  x y)) \end{bmatrix}, \\
\nabla^2 (\tilde f-f)(\vec x,\vec y) &= \frac{\eps^2}{n e^B}\begin{bmatrix} \diag(e^{\vec x} + e^{-\vec x})  & \mat 0 \\ \mat 0 & \diag(e^{\vec y} + e^{-\vec y}) \end{bmatrix}.
\end{split}
\end{equation}
Note that computing $\nabla^2 \tilde f(\vec x, \vec y)$ up to high precision can be done using $\bigOt{m}$ classical queries to $\mat A$, $\vec x$, and $\vec y$. Below we show how to obtain a sparse approximation of $\nabla^2 \tilde f(\vec x, \vec y)$ using only $\bigOt{\sqrt{mn}}$ quantum queries. We will do so in the sense of condition (2) of \cref{thm: second order robust update} where we take $\mat H_m'$ to be a (high-precision) additive approximation of $\nabla^2f(\vec x,\vec y)$, and $\mat H_a' = \nabla^2 \tilde f(\vec x,\vec y) - \mat H_m'$.

We first obtain a multiplicative spectral approximation of (a high-precision additive approximation of) $\nabla^2f(\vec x, \vec y)$. In order to do so we use its structure: it is similar to a Laplacian matrix. This allows us to use the recent quantum Laplacian sparsifier of Apers and de Wolf~\cite{Apers2020QLaplacian}.
\begin{lemma}
  \label{lem:approximating potential Hessian}
  Given quantum query access to $\vec x, \vec y$ and sparse quantum query access to $\mat A$, such that $\norm{\A x y}_1 \leq C$, we can compute an SDD matrix $\mat H_m$ with $\bigOt{n}$ non-zero entries, each off-diagonal entry non-negative, such that there exist symmetric $\mat H_m'$ and $\mat H_{a,f}'$ satisfying $\mat H_m' + \mat H_{a,f}' = \hess f(\vec x, \vec y)$, and 
  \[
    0.9 \mat H_m \preceq \mat H_m' \preceq 1.1 \mat H_m, \quad \norm{\mat H_{a,f}'}_1 \leq \delta_a,
  \]
  in time $\bigOt{\sqrt{mn} \polylog(C/\delta_a)}$.
\end{lemma}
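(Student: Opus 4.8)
The plan is to split $\hess f(\vec x, \vec y)$ into a Laplacian-type part that we spectrally sparsify, and a small residual that we bound in $\ell_1$-norm. Recall from \cref{lem: Hessian lb} that after conjugation by $\diag(\mat I, -\mat I)$ the matrix $\hess f(\vec x, \vec y)$ becomes the weighted Laplacian $\mat L$ of the complete bipartite graph with bipartite adjacency matrix $\A x y$; conjugation by a signed permutation preserves both the spectrum and the SDD/non-positive-off-diagonal structure, and it maps $\ell_1$-norm approximations to $\ell_1$-norm approximations, so it suffices to work with $\mat L$ and undo the conjugation at the end. The first step is therefore: using sparse quantum query access to $\mat A$ together with quantum query access to $\vec x, \vec y$, build (implicitly) query access to the $O(m)$ edge weights $A_{ij} e^{x_i + y_j}$ of this bipartite graph, whose total weight is $\norm{\A x y}_1 \leq C$.

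Second, I would invoke the quantum Laplacian sparsifier of Apers--de Wolf~\cite{Apers2020QLaplacian}: given query access to a graph on $\Theta(n)$ vertices with $m$ edges and total edge weight at most $C$, it produces, in time $\bigOt{\sqrt{mn}\polylog(C/\eta)}$, a spectral sparsifier $\widetilde{\mat L}$ with $\bigOt{n}$ edges such that $(1-\eta)\mat L \preceq \widetilde{\mat L} \preceq (1+\eta)\mat L$ — but there is a subtlety. A multiplicative spectral approximation of $\mat L$ only controls the "off-diagonal" part of $\hess f$ up to a multiplicative factor; the diagonal $\diag(\vec r(\A x y), \vec c(\A x y))$ is handled automatically because $\mat L$'s diagonal is exactly the weighted degrees. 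However, the quantum sparsifier outputs \emph{weights of a sparse subgraph}, and to get a genuine \emph{SDD} matrix I should take $\mat H_m$ to be the Laplacian of that sparse weighted subgraph (with the diagonal equal to the sparsified degrees), which is automatically SDD with non-positive off-diagonal entries, and satisfies $(1-\eta)\mat L \preceq \mat H_m \preceq (1+\eta)\mat L$; choosing $\eta$ a small constant (say $\eta = 1/11$) gives $0.9\,\mat H_m \preceq \mat L \preceq 1.1\,\mat H_m$, i.e.\ we set $\mat H_m' = \mat L$ (after undoing the conjugation, $\mat H_m' = \hess f(\vec x,\vec y)$ exactly) — wait, that would force $\mat H_{a,f}' = 0$, which is fine but then we don't actually need the $\delta_a$ slack for the $f$-part at all.

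Let me reconsider: the cleanest route is to let $\mat H_m$ be the (un-conjugated) Laplacian of the sparsified bipartite graph, set $\mat H_m' := \hess f(\vec x,\vec y)$ and $\mat H_{a,f}' := 0$, so that the decomposition $\mat H_m' + \mat H_{a,f}' = \hess f(\vec x, \vec y)$ holds trivially with $\norm{\mat H_{a,f}'}_1 = 0 \leq \delta_a$, and $0.9\,\mat H_m \preceq \mat H_m' \preceq 1.1\,\mat H_m$ follows from the spectral guarantee with constant $\eta$. The reason the statement is phrased with a nonzero-allowed $\mat H_{a,f}'$ and a $\delta_a$ is to absorb the finite-precision error: the sparsifier and the arithmetic only give us the weights $A_{ij} e^{x_i+y_j}$ to additive precision, and one must truncate $\vec x, \vec y$ and the exponentials to $\polylog(C/\delta_a)$ bits. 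I would carry this out by first rounding $\vec x, \vec y$ to enough bits that the induced perturbation of every edge weight is at most $\delta_a/m$ in absolute value (achievable with $O(\log(m C/\delta_a))$ bits since weights are at most $C$), running the sparsifier on the rounded graph to get $\mat H_m$ with constant $\eta$, and letting $\mat H_m'$ be the \emph{exact} Hessian of $f$ at the rounded point and $\mat H_{a,f}' = \hess f(\vec x, \vec y) - \mat H_m'$; the entrywise perturbation bound gives $\norm{\mat H_{a,f}'}_1 \leq \delta_a$, while $\mat H_m'$ is still within a constant spectral factor of $\mat H_m$ after inflating $\eta$ slightly.

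The main obstacle I anticipate is the bookkeeping around precision: one has to verify that (i) the rounded bipartite graph still has total weight $O(C)$ so the sparsifier's running time bound applies with the claimed $\polylog(C/\delta_a)$ overhead, (ii) the sparsifier's output weights, themselves only computed approximately, can be re-rounded so that $\mat H_m$ is exactly SDD with non-positive off-diagonals and exactly $\bigOt{n}$-sparse without destroying the constant-factor spectral guarantee, and (iii) the signed-permutation conjugation commutes with all of this. None of these is deep, but they are exactly the kind of "careful analysis of approximation errors" the introduction warns about, and getting the quantifiers in the right order (first fix $\delta_a$, then choose the rounding precision, then run the sparsifier) is where care is needed. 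The running time $\bigOt{\sqrt{mn}\polylog(C/\delta_a)}$ then comes directly from the Apers--de Wolf bound plus the $\bigOt{1}$-overhead arithmetic on $\polylog(C/\delta_a)$-bit numbers.
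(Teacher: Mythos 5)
Your proposal is correct and takes essentially the same route as the paper: recognize that conjugating $\hess f$ by $\diag(\mat I, -\mat I)$ yields the Laplacian of the bipartite graph with weights $A_{ij}e^{x_i+y_j}$, round to finite precision to control the $\ell_1$-error, run the Apers--de Wolf quantum Laplacian sparsifier with a constant spectral-approximation parameter, and undo the conjugation. The only cosmetic difference is that the paper rounds the individual edge weights $A_{ij}e^{x_i+y_j}$ (by truncating $e^{x_i+y_j}$ before multiplying by $A_{ij}$) rather than rounding $\vec x,\vec y$ themselves, and it sets $\mat H_m'$ equal to the conjugated rounded Laplacian directly (rather than the exact Hessian at a rounded point) with $\mat H_{a,f}'$ the residual — both choices give $\|\mat H_{a,f}'\|_1\le\delta_a$ and the same complexity.
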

\begin{proof}
  The key observation is that $\hess f(\vec x, \vec y)$ satisfies
  \[
    \mat H = \begin{bmatrix} \mat I & \mat 0 \\ \mat 0 & -\mat I \end{bmatrix}\nabla^2 f(\vec x, \vec y) \begin{bmatrix} \mat I & \mat 0 \\ \mat 0 & -\mat I \end{bmatrix} = \begin{bmatrix} \diag(\vec r(\A  x y)) & -\A x y \\ -{\A x y}^T & \diag(\vec c(\A  x y)) \end{bmatrix},
  \]
  which is the Laplacian of the bipartite graph whose bipartite adjacency matrix is given by $\A x y$.
  Any off-diagonal entry of $\mat H$ can be computed with additive error $\delta_a/2(2m + 2n)$ using a single query to $\mat A$, to $\vec x$ and to $\vec y$: the $(i,j)$-th entry of $\A x y$ is $A_{ij} e^{x_i + y_j}$, which is at most $C$ (since $\norm{\A x y}_1 \leq C$ by assumption), so we can compute $e^{x_i + y_j}$ to sufficient precision ($\lceil\log_2(C/\mu)\rceil + O(1)$ leading bits and $\lceil \log_2(2(2m+2n)/\delta_a) \rceil + O(1)$ trailing bits) and multiply it with $A_{ij}$.
  We can do this in such a way that if $A_{ij} = 0$, then the resulting entry is $0$, and such that the approximation of $A_{ij} e^{x_i + y_j}$ is always non-negative.
  
  Let $\mat H'$ be the matrix whose off-diagonal entries are given by these approximations of the corresponding entries of $\mat H$, and whose diagonal entries are such that $\mat H'$ is Laplacian.
  Then $\norm{\mat H' - \mat H}_1 \leq \delta_a$ by the chosen precision for the additive approximation.
  Furthermore, as described before, a single query to off-diagonal entries of $\mat H'$ can be implemented using a single query to $\mat A$, $\vec x$ and $\vec y$.
  Theorem~1 of \cite{Apers2020QLaplacian} gives a quantum algorithm that uses $\bigOt{\sqrt{mn}}$ queries to the off-diagonal entries of $\mat H'$ and outputs a $0.1$-spectral sparsification $\mat {\tilde H}$ of $\mat H'$ that has $\bigOt{n}$ non-zero entries.
  Note that every non-zero entry of $\mat {\tilde H}$ was already non-zero in $\mat H'$ because it is the Laplacian of a reweighted subgraph of the graph described by $\mat H'$; hence any non-zero off-diagonal entry in $\mat {\tilde H}$ is contained in either the upper right or lower left $n \times n$ block, and each such entry is non-positive.
  Then the matrix $\mat H_m = \diag(I, -I) \mat{\tilde H} \diag(I, -I)$ satisfies the conclusion in the lemma, with $\mat H_m' = \diag(I, -I) \mat H' \diag(I, -I)$ and $\mat H_{a,f}' = \diag(I, -I) (\mat H' - \mat H) \diag(I, -I)$.
\end{proof}

We now show how to compute an additive approximation of the Hessian of the regularization term in $\tilde f$.
\begin{lemma}
\label{lem:approximating regularization Hessian}
Given quantum query access to $\vec x, \vec y$ with $\norm{\vec x}_\infty, \norm{\vec y}_\infty \leq B + \ln(4n + (n \ln(1/\mu) / \eps^2))$, we can compute a non-negative diagonal matrix $\mat H_{a, \tilde f}$ that satisfies $\norm{\mat H_{a, \tilde f} - \nabla^2(\tilde f- f)(\vec x, \vec y)}_1 \leq \delta_a$, in time $\bigOt{n\log(1/\delta_a \mu) \polylog(\eps)}$.
\end{lemma}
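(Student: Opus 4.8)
The plan is to compute every diagonal entry of $\nabla^2(\tilde f - f)(\vec x, \vec y)$ directly to a suitable additive precision, which is possible because, as recorded in the displayed formula preceding this lemma, $\nabla^2(\tilde f - f)(\vec x, \vec y)$ is the $2n \times 2n$ non-negative diagonal matrix $\tfrac{\eps^2}{ne^B}\diag(e^{\vec x} + e^{-\vec x}, e^{\vec y} + e^{-\vec y})$; that is, its diagonal entries are the $2n$ numbers $\tfrac{\eps^2}{ne^B}(e^{x_i} + e^{-x_i})$ and $\tfrac{\eps^2}{ne^B}(e^{y_j} + e^{-y_j})$ with $i,j \in [n]$. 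Since the (entrywise) $\ell_1$-norm of a diagonal matrix is the sum of the absolute values of its $2n$ diagonal entries, it suffices to produce, for each coordinate, an approximation of the corresponding entry with additive error at most $\delta_a/(2n)$, and to let $\mat H_{a, \tilde f}$ be the diagonal matrix collecting these $2n$ approximations; replacing any negative approximation by $0$ (clamping to $[0,\infty)$) only decreases the per-entry error, since every true entry is at least $2\eps^2/(ne^B) > 0$, so the output is non-negative as required.

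The remaining question is how many bits of precision are needed. Write $M = B + \ln(4n + n\ln(1/\mu)/\eps^2)$ for the assumed bound on $\norm{\vec x}_\infty$ and $\norm{\vec y}_\infty$; then $e^M = e^B(4n + n\ln(1/\mu)/\eps^2)$, so each true entry is at most $\tfrac{\eps^2}{ne^B}\cdot 2e^M = 2\eps^2(4 + \ln(1/\mu)/\eps^2) = O(\eps^2 + \ln(1/\mu))$. To get additive error $\delta_a/(2n)$ in an entry it suffices to approximate $e^t + e^{-t}$ (for $t = x_i$ or $t = y_j$, with $\abs{t} \leq M$) to additive error $\eta := \tfrac{e^B \delta_a}{2\eps^2}$, and since $e^t + e^{-t} \leq 2e^M$ this needs about $\log_2(2e^M) + \log_2(1/\eta) = O(M + \log(1/\delta_a))$ bits of precision (using $\eta^{-1} = 2\eps^2/(e^B\delta_a) \leq 2/\delta_a$ because $e^B \geq 1$ and $\eps \leq 1$), where $M = O(B + \log n + \log\log(1/\mu) + \log(1/\eps))$.

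Finally, the algorithm makes one query to each of $x_1,\dots,x_n,y_1,\dots,y_n$, hence $\bigOt{n}$ queries to $\vec x, \vec y$ and none to $\mat A$, and for each of the $2n$ coordinates evaluates $e^t + e^{-t}$ to the precision above using a standard fast algorithm for the exponential, whose running time is $\widetilde O$ of the bit-length, i.e.\ $\bigOt{\log(1/\delta_a\mu)\polylog(\eps)}$ per coordinate once the $\polylog(n)$ and $\polylog(B)$ factors are absorbed into $\widetilde O$ (using $\log(1/\delta_a\mu) = \log(1/\delta_a) + \log(1/\mu) \geq \log(1/\delta_a) + \log\log(1/\mu)$); multiplying by the known constant $\eps^2/(ne^B)$ and clamping to $[0,\infty)$ costs no more. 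Summing over the $2n$ coordinates gives the claimed running time $\bigOt{n\log(1/\delta_a\mu)\polylog(\eps)}$. There is no real obstacle beyond this bookkeeping: unlike the spectral approximation of $\nabla^2 f$ in \cref{lem:approximating potential Hessian}, no quantum speedup is sought or possible here (the query count $\widetilde O(n)$ already matches the classical one), and the role of the lemma is only to furnish, alongside the sparse approximation $\mat H_m$ of \cref{lem:approximating potential Hessian}, the diagonal additive-error term of the Hessian of $\tilde f$ demanded by condition~(2) of \cref{thm: second order robust update}.
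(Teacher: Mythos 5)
Your proof is correct and follows essentially the same route as the paper's: both observe that $\nabla^2(\tilde f - f)$ is the explicit non-negative diagonal matrix $\frac{\eps^2}{ne^B}\diag(e^{\vec x}+e^{-\vec x}, e^{\vec y}+e^{-\vec y})$, bound each diagonal entry by $O(\eps^2 + \ln(1/\mu))$ using the assumed $\ell_\infty$-bound, and then compute each of the $2n$ entries to additive precision $\delta_a/(2n)$ so that the entrywise $\ell_1$-error is $\leq\delta_a$. The only cosmetic difference is that the paper additionally records the identity $\frac{\eps^2}{ne^B}(e^{x_i}+e^{-x_i}) = \frac{\eps^2}{n}(e^{x_i-B}+e^{-x_i-B})$ as the concrete way to evaluate each entry without forming the large intermediate quantity $e^B$; your bit-precision accounting achieves the same effect and is, if anything, spelled out a bit more carefully than the paper's.
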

\begin{proof}
Recall that $\nabla^2 (\tilde f-f)(\vec x,\vec y)$ is a diagonal matrix whose entries are of the form $\frac{\eps^2}{n e^B}( e^{x_i} + e^{-x_i})$ or $\frac{\eps^2}{n e^B}( e^{y_i} + e^{-y_i})$. Note that by assumption on the $\ell_\infty$-norms of $\vec x$ and $\vec y$, all diagonal entries are upper bounded by
\[
2 \frac{\eps^2}{n e^B}e^{B + \ln(4n + (n \ln(1/\mu) / \eps^2))} = 2 \frac{\eps^2}{n}e^{\ln(4n + (n \ln(1/\mu) / \eps^2))} = 2 \frac{\eps^2}{n} (4n + (n \ln(1/\mu) / \eps^2)) = 8\eps^2 +2\ln(1/\mu).
\]
Hence, it suffices to compute each diagonal entry using $\lceil \log_2(8\eps^2 +2\ln(1/\mu))\rceil$ leading bits and $\lceil \log_2(1/n\delta_a)\rceil$ trailing bits. We can do so efficiently by using the identity 
\[
\frac{\eps^2}{n e^B}( e^{x_i} + e^{-x_i}) = \frac{\eps^2}{n}(e^{x_i-B} + e^{-x_i-B})
\]
and the analogous one for $y_i$.
\end{proof}

\begin{theorem}
  \label{thm:approximating regularized potential Hessian}
  Given quantum query access to $\vec x, \vec y$ with $\norm{\vec x}_\infty, \norm{\vec y}_\infty \leq B + \ln(4n + (n \ln(1/\mu) / \eps^2))$, and sparse quantum query access to $\mat A$, if $\norm{\A x y}_1 \leq C$, then we can compute (classical descriptions of) an SDD matrix $\mat H_m$ with $\bigOt{n}$ non-zero entries, with all of the off-diagonal entries non-negative, and a non-negative diagonal matrix $\mat H_a$ such that there exist symmetric $\mat H_m'$, $\mat H_a'$ with $\mat H_m' + \mat H_a' = \hess \tilde f(\vec x, \vec y)$ and
  \[
    0.9 \mat H_m \preceq \mat H_m' \preceq 1.1 \mat H_m, \quad \norm{\mat H_a - \mat H_a'}_1 \leq \delta_a
  \]
  in quantum time $\bigOt{\sqrt{mn} \polylog(C/ \mu\delta_a)}$.
\end{theorem}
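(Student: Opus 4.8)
The plan is to reduce to the two preceding lemmas via the additive decomposition $\hess \tilde f(\vec x, \vec y) = \hess f(\vec x, \vec y) + \hess(\tilde f - f)(\vec x, \vec y)$. First I would invoke \cref{lem:approximating potential Hessian} with its error parameter set to $\delta_a/2$: since $\norm{\A x y}_1 \leq C$ and we have the required (sparse) quantum query access, this yields a classical description of an SDD matrix $\mat H_m$ with $\bigOt{n}$ non-zero entries, all off-diagonal entries non-negative, together with symmetric matrices $\mat H_m'$ and $\mat H_{a,f}'$ such that $\mat H_m' + \mat H_{a,f}' = \hess f(\vec x, \vec y)$, $0.9\, \mat H_m \preceq \mat H_m' \preceq 1.1\, \mat H_m$, and $\norm{\mat H_{a,f}'}_1 \leq \delta_a/2$, in quantum time $\bigOt{\sqrt{mn}\, \polylog(C/\mu\delta_a)}$. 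Next I would invoke \cref{lem:approximating regularization Hessian}, again with error parameter $\delta_a/2$: its hypothesis on $\norm{\vec x}_\infty, \norm{\vec y}_\infty$ is exactly the one assumed here, so it computes a non-negative diagonal matrix $\mat H_{a,\tilde f}$ with $\norm{\mat H_{a,\tilde f} - \hess(\tilde f - f)(\vec x, \vec y)}_1 \leq \delta_a/2$ in time $\bigOt{n \log(1/\delta_a\mu)\, \polylog(\eps)}$.

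It then remains to assemble the pieces. I would take $\mat H_a := \mat H_{a,\tilde f}$ as the computed matrix (which is non-negative diagonal, as required), keep $\mat H_m$ and $\mat H_m'$ as above, and set $\mat H_a' := \mat H_{a,f}' + \hess(\tilde f - f)(\vec x, \vec y)$, which is symmetric. Then
\[
  \mat H_m' + \mat H_a' = \bigl(\mat H_m' + \mat H_{a,f}'\bigr) + \hess(\tilde f - f)(\vec x, \vec y) = \hess f(\vec x, \vec y) + \hess(\tilde f - f)(\vec x, \vec y) = \hess \tilde f(\vec x, \vec y),
\]
the spectral sandwich $0.9\, \mat H_m \preceq \mat H_m' \preceq 1.1\, \mat H_m$ is inherited directly from \cref{lem:approximating potential Hessian}, and by the triangle inequality for $\norm{\cdot}_1$,
\[
  \norm{\mat H_a - \mat H_a'}_1 \leq \norm{\mat H_{a,\tilde f} - \hess(\tilde f - f)(\vec x, \vec y)}_1 + \norm{\mat H_{a,f}'}_1 \leq \tfrac{\delta_a}{2} + \tfrac{\delta_a}{2} = \delta_a.
\]

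Finally, the structural requirements on $\mat H_m$ (SDD, $\bigOt{n}$ non-zero entries, non-negative off-diagonal entries) hold verbatim by \cref{lem:approximating potential Hessian}, and the total running time is $\bigOt{\sqrt{mn}\, \polylog(C/\mu\delta_a)} + \bigOt{n \log(1/\delta_a\mu)\, \polylog(\eps)} = \bigOt{\sqrt{mn}\, \polylog(C/\mu\delta_a)}$, using that $n = O(\sqrt{mn})$ (an asymptotically scalable matrix with entrywise-positive targets has a non-zero entry in every row and column, so $m \geq n$) and that the remaining $\polylog$ factors in $n$ and $1/\eps$ are absorbed into $\bigOt{\cdot}$. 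The only point requiring a little care is the bookkeeping of the decomposition: the ``accurate part'' $\mat H_{a,f}'$ produced when sparsifying $\hess f$ is not diagonal, so it must be folded into the symmetric matrix $\mat H_a'$ rather than into the computed non-negative diagonal matrix $\mat H_a$, and the error budget $\delta_a$ therefore has to be split between the sparsification error and the error in approximating the Hessian of the regularizer.
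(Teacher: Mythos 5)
Your proposal is correct and follows essentially the same route as the paper's proof: apply \cref{lem:approximating potential Hessian} and \cref{lem:approximating regularization Hessian}, set $\mat H_a' := \mat H_{a,f}' + \hess(\tilde f - f)(\vec x, \vec y)$, and assemble. The paper's one-line proof leaves the $\delta_a/2$ error-budget split and the $n = O(\sqrt{mn})$ absorption implicit, but your spelling them out is exactly what is intended.
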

\begin{proof}
  Let $\mat H_m$ be the matrix obtained from \cref{lem:approximating potential Hessian}, and let $\mat H_a$ be the matrix $\mat H_{a,\tilde f}$ obtained from \cref{lem:approximating regularization Hessian}. Then $\mat H$ satisfies the desired properties, with $\mat H_m'$ as in \cref{lem:approximating potential Hessian}, and $\mat H_a' = \mat H_{a,f}' + \hess (\tilde f - f)(\vec x, \vec y)$ with $\mat H_{a,f}'$ as in \cref{lem:approximating potential Hessian}.
\end{proof}

In order to obtain a good approximation of the gradient of $\tilde f$, which is given by
\[
    \grad \tilde f(\vec x, \vec y) = \begin{bmatrix}
        r_1(\A x y) - r_1 \\
        \vdots \\
        r_n(\A x y) - r_n \\
        c_1(\A x y) - c_1 \\
        \vdots \\
        c_n(\A x y) - c_n
    \end{bmatrix} 
    + 
    \frac{\eps^2}{n e^B}
    \begin{bmatrix}
        e^{x_1} - e^{-x_1} \\
        \vdots \\
        e^{x_n} - e^{-x_n} \\
        e^{y_1} - e^{-y_1} \\
        \vdots \\
        e^{y_n} - e^{-y_n}
    \end{bmatrix}\!,
\]
we can use similar techniques as the prior work on quantum algorithms for matrix scaling~\cite{qscalingICALP}. For computing the $i$-th row marginal, these are based on a careful implementation of amplitude estimation on the unitary that prepares states that are approximately of the form 
\[
\sum_{j} \ket{0}\sqrt{A_{ij} e^{x_i+y_j}}\ket{j} + \ket{1}\sqrt{1-A_{ij} e^{x_i+y_j}}\ket{j},
\]
assuming that the $i$-th row of $\A x y$ is properly normalized. The output is an estimate of the $i$-th row marginal with multiplicative error $1 \pm \delta$, which translates into additive error $\delta \cdot r_i(\A x y)$; we refer to \cite[Thm.~4.5 (arXiv)]{qscalingICALP} for a more precise statement. The part of the gradient coming from the regularization term is dealt with similarly as in \cref{lem:approximating regularization Hessian}.
\begin{lemma}
  \label{lem:approxgrad}
  Given quantum query access to $\vec x, \vec y$ and sparse quantum query access to $\mat A$, if $\norm{\A x y}_1 \leq C$, we can find a classical description of a vector $\vec b \in \R^n$ such that
  \begin{equation*}
    \norm{\vec b - \grad \tilde f(\vec x, \vec y)}_1 \leq \delta \cdot C 
  \end{equation*}
  in quantum time $\bigOt{\sqrt{mn}/\delta \cdot \polylog(C/\mu)}$.
\end{lemma}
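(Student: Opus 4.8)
The plan is to approximate each of the $2n$ entries of $\grad\tilde f(\vec x,\vec y)$ separately, to additive error $O(\delta C/n)$, so that the total $\ell_1$-error is $O(\delta C)$; after rescaling $\delta$ by a constant this gives the claimed bound. The gradient splits into a ``matrix part'' (the $r_i(\A x y)-r_i$ and $c_j(\A x y)-c_j$) and a ``regularization part'' (the $\frac{\eps^2}{ne^B}(e^{\pm x_i}-e^{\mp x_i})$ terms). The regularization part is handled exactly as in the proof of \cref{lem:approximating regularization Hessian}: using the rewriting $\frac{\eps^2}{ne^B}(e^{x_i}-e^{-x_i})=\frac{\eps^2}{n}(e^{x_i-B}-e^{-x_i-B})$, each such entry has magnitude at most $O(\eps^2+\ln(1/\mu))$ and can be computed classically (from a single query to $x_i$ or $y_j$) to additive error $\delta C/n$ using $O(\polylog(C/\mu))$ bits of precision, contributing $\bigOt{n\polylog(C/\mu)}$ to the running time, which is dominated by the matrix part.

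For the matrix part, I would invoke the quantum amplitude-estimation subroutine for row and column marginals from~\cite{qscalingICALP} (cf.~\cite[Thm.~4.5 (arXiv)]{qscalingICALP}), as described in the text preceding the lemma. Because $\norm{\A x y}_1\le C$, after dividing conceptually by $C$ the matrix $\A x y/C$ has $1$-norm at most $1$, so the state-preparation unitary $\sum_j\ket 0\sqrt{A_{ij}e^{x_i+y_j}/C}\ket j+\ket1\sqrt{1-A_{ij}e^{x_i+y_j}/C}\ket j$ is (approximately) well-defined, and amplitude estimation returns $r_i(\A x y)/C$ with multiplicative error $1\pm\delta'$, i.e.\ $r_i(\A x y)$ with additive error $\delta' r_i(\A x y)\le \delta' C$. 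Running this for all $n$ rows and all $n$ columns and then subtracting the known targets $r_i$, $c_j$ yields a vector $\vec b$ with $\norm{\vec b-\grad\tilde f}_1\le \sum_i \delta' r_i(\A x y)+\sum_j\delta' c_j(\A x y)+\text{(regularization error)}=2\delta'\norm{\A x y}_1+O(\delta C)\le O(\delta' C)$. One call to the row-marginal estimator for a given precision $\delta'$ costs $\bigOt{\sqrt{mn}/\delta'}$ queries (this is where the $m$-to-$\sqrt{mn}$ speedup and the $1/\delta'$ overhead come from — amplitude estimation on a graph-based walk, with the $\sqrt m$ from the number of nonzeros and the extra $\sqrt n$ from the row/column structure, exactly as in~\cite{qscalingICALP}); doing this for all $n$ rows and columns and choosing $\delta'=\Theta(\delta/n)$ so that each individual multiplicative-error contribution is $O(\delta C/n)$ would cost $\bigOt{n\cdot\sqrt{mn}\cdot n/\delta}$, which is too much. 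The right choice is to note that the errors add \emph{in $\ell_1$}: taking $\delta'=\Theta(\delta)$ in each of the $2n$ estimations gives total error $2\delta'\norm{\A x y}_1\le O(\delta C)$ directly, at total cost $\bigOt{n}\cdot\bigOt{\sqrt{mn}/\delta}=\bigOt{n\sqrt{mn}/\delta}$ — still a factor $\sqrt n$ too large.

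The main obstacle, therefore, is precisely this: naively estimating the $n$ row-marginals one at a time loses a factor $n$, and to obtain the claimed $\bigOt{\sqrt{mn}/\delta}$ one must estimate \emph{all} marginals simultaneously in a single combined quantum routine. I expect the resolution is the one already used in~\cite{qscalingICALP}: there is a single quantum subroutine that, in time $\bigOt{\sqrt{mn}/\delta}$, produces (a quantum-accessible or fully classical description of) the entire vector of row-marginal estimates at once, each with multiplicative error $1\pm\delta$ — this is the content of the cited \cite[Thm.~4.5 (arXiv)]{qscalingICALP}, whose amplitude-estimation-based analysis already pays the $\sqrt n$ inside the $\sqrt{mn}$ rather than multiplicatively on the outside. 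Granting that statement, I would (i) run it once for the rows and once for the columns with precision parameter $\delta$, (ii) subtract the targets $\vec r,\vec c$, (iii) append the classically-computed regularization terms from \cref{lem:approximating regularization Hessian} with precision $\delta C/n$, and (iv) bound the total $\ell_1$-error by $\delta\norm{\A x y}_1+\delta\norm{\A x y}_1+\delta C=O(\delta C)$, rescaling $\delta$ by a constant to close. The running time is then $\bigOt{\sqrt{mn}/\delta\cdot\polylog(C/\mu)}$ from the marginal subroutine, with the regularization step and all arithmetic absorbed into the $\bigOt{\cdot}$.
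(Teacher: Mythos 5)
Your final proof outline is correct and matches the paper's (informal) justification: invoke the marginal-estimation subroutine of~\cite{qscalingICALP} (Thm.~4.5, arXiv) with multiplicative precision $\delta$, subtract the targets, classically add the regularization terms as in \cref{lem:approximating regularization Hessian}, and bound the total $\ell_1$-error by $O(\delta\,\norm{\A xy}_1) = O(\delta C)$.

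However, the intermediate ``obstacle'' you identify and then resolve is illusory, and the confusion is worth clearing up. You claim that estimating a single row marginal costs $\bigOt{\sqrt{mn}/\delta'}$, so that doing so for all $n$ rows would cost $\bigOt{n\sqrt{mn}/\delta'}$, and you posit that~\cite{qscalingICALP} must therefore be doing some ``combined'' estimation to recover the factor. In fact, the $\bigOt{\sqrt{mn}/\delta}$ bound in~\cite{qscalingICALP} is \emph{already} the cost of estimating all $n$ row marginals, each with multiplicative error $1\pm\delta$, and the $\sqrt{n}$ inside the bound arises by a Cauchy--Schwarz aggregation, not from any simultaneous routine: amplitude estimation for row $i$ (with row sparsity $s_i$) costs $\bigOt{\sqrt{s_i}/\delta}$, and $\sum_i \sqrt{s_i} \leq \sqrt{n\sum_i s_i} = \sqrt{nm}$. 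So the naive per-row strategy with $\delta' = \Theta(\delta)$ already achieves the claimed $\bigOt{\sqrt{mn}/\delta}$; there is nothing to rescue. This does not affect the validity of your conclusion, since you ultimately defer to the cited theorem with the correct precision parameter, but the ``$\bigOt{n\sqrt{mn}/\delta}$ --- still a factor $\sqrt n$ too large'' reasoning (which is also off by the wrong power of $n$) should be replaced by the Cauchy--Schwarz accounting above.
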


\paragraph{Controlling the $1$-norm of $\A  x y$: a Sinkhorn step}

The following lemma and corollary help us ensure that throughout the algorithm, $\norm{\A x y}_1$ is bounded above by a constant; if $\norm{\A x y}_1$ is too large, we can change the overall scaling of the matrix and decrease the regularized potential (so in particular, we stay in the sublevel set of the regularized potential).
\begin{lemma}
Let $\vec x, \vec y$ be such that $\tilde f(\vec x, \vec y) \leq \tilde f(\vec 0, \vec 0)$, and assume $\norm{\A x y}_1 \geq C'$ where $C'>1$. Let $\vec x' = \vec x - \ln(\gamma) \vec 1$ where $1 \leq \gamma \leq C'$. Then 
\[
\tilde f(\vec x',\vec y) - \tilde f(\vec x, \vec y) \leq (\frac1\gamma - 1) C' + \ln(\gamma) + (\gamma-1)\left(\ln(1/\mu)+\frac{4\eps^2}{e^B}\right)  
\]
\end{lemma}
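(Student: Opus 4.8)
The plan is to split $\tilde f$ into the potential $f$ and the regularization term
\[
R(\vec x, \vec y) = \frac{\eps^2}{n e^B}\left( \sum_i (e^{x_i} + e^{-x_i}) + \sum_j (e^{y_j} + e^{-y_j}) \right),
\]
to bound the change in each separately under the shift $\vec x \mapsto \vec x' = \vec x - \ln(\gamma)\vec 1$, and to invoke the hypothesis $\tilde f(\vec x, \vec y) \le \tilde f(\vec 0, \vec 0)$ only at the very end, to control a single leftover sum.

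For the potential: shifting $\vec x$ by $-\ln(\gamma)\vec 1$ multiplies every entry $A_{ij} e^{x_i + y_j}$ by $1/\gamma$, so $\norm{\mat A(\vec x', \vec y)}_1 = \tfrac1\gamma \norm{\A x y}_1$, while $\ip{\vec r}{\vec x'} = \ip{\vec r}{\vec x} - \ln(\gamma)$ since $\norm{\vec r}_1 = 1$, and the $\vec c$-term is unchanged; hence
\[
f(\vec x', \vec y) - f(\vec x, \vec y) = \left(\tfrac1\gamma - 1\right)\norm{\A x y}_1 + \ln(\gamma).
\]
Because $\gamma \ge 1$ the coefficient $\tfrac1\gamma - 1$ is non-positive, so together with $\norm{\A x y}_1 \ge C'$ this is at most $(\tfrac1\gamma - 1)C' + \ln(\gamma)$ (multiplying the non-positive number by the larger quantity $\norm{\A x y}_1 \ge C'$ only decreases it). For the regularizer only the $\vec x$-block changes, and entrywise $e^{x_i'} + e^{-x_i'} = \tfrac1\gamma e^{x_i} + \gamma e^{-x_i}$, so
\[
R(\vec x', \vec y) - R(\vec x, \vec y) = \frac{\eps^2}{n e^B}\sum_i\left( \left(\tfrac1\gamma - 1\right)e^{x_i} + (\gamma - 1)e^{-x_i}\right) \le (\gamma - 1)\cdot\frac{\eps^2}{n e^B}\sum_i e^{-x_i},
\]
again using $\tfrac1\gamma - 1 \le 0$ to drop a non-positive term.

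Finally I would feed in the hypothesis: by the estimate \eqref{eq: sum exp bound} established in the proof of \cref{lem:basic regularized properties}, $\tilde f(\vec x, \vec y) \le \tilde f(\vec 0, \vec 0)$ implies $R(\vec x, \vec y) \le \ln(1/\mu) + \tfrac{4\eps^2}{e^B}$, and in particular $\frac{\eps^2}{n e^B}\sum_i e^{-x_i} \le \ln(1/\mu) + \tfrac{4\eps^2}{e^B}$. Adding the two bounds yields
\[
\tilde f(\vec x', \vec y) - \tilde f(\vec x, \vec y) \le \left(\tfrac1\gamma - 1\right)C' + \ln(\gamma) + (\gamma - 1)\left(\ln(1/\mu) + \tfrac{4\eps^2}{e^B}\right),
\]
as claimed. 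I do not expect a genuine obstacle here: the computation is elementary, and the only points needing care are the directions of the inequalities when multiplying through by the non-positive factor $\tfrac1\gamma - 1$, and citing the sublevel-set bound in exactly the quantitative form already proved in \cref{lem:basic regularized properties} rather than re-deriving it.
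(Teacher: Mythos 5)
Your proof is correct and takes essentially the same route as the paper's: you expand the difference, drop the non-positive $(\frac1\gamma-1)\sum_i e^{x_i}$ term, use $\norm{\A x y}_1\ge C'$ on the first term, and invoke the sublevel-set estimate \eqref{eq: sum exp bound} for the leftover $\sum_i e^{-x_i}$ term. The only difference is presentational (you separate $f$ from the regularizer $R$ before combining, whereas the paper computes the whole difference in a single chain).
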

\begin{proof}
We have 
\begin{align*}
    \tilde f(\vec x',\vec y) - \tilde f(\vec x,\vec y) &= (\frac1\gamma - 1) \| \A x y \|_1 + \langle \vec r, \ln(\gamma)\vec 1\rangle  +  \frac{\eps^2}{n e^B} \left(\sum_i (e^{x_i-\ln(\gamma)}-e^{x_i} + e^{-x_i+\ln(\gamma)}-e^{-x_i})  \right) \\
    &= (\frac1\gamma - 1) \| \A x y \|_1 + \ln(\gamma)  +  \frac{\eps^2}{n e^B}(\frac{1}{\gamma}-1)\left(\sum_i e^{x_i}\right) + \frac{\eps^2}{n e^B}(\gamma-1)(\sum_i e^{-x_i})   \\
    &\leq (\frac1\gamma - 1) \|\A xy\|_1 + \ln(\gamma)  +0+ \frac{\eps^2}{n e^B}(\gamma-1)(\sum_i e^{-x_i})   \\
    &\leq (\frac1\gamma - 1) C' + \ln(\gamma) + (\gamma-1)(\ln(1/\mu)+\frac{4\eps^2}{e^B})  
\end{align*}
where for the last inequality we use $\norm{\A xy}_1 \geq C'$ for the first term and \cref{eq: sum exp bound} for the last term.
\end{proof}
An appropriate choice of $C'$ and $\gamma$ makes the bound in the above lemma non-positive. 
\begin{corollary} \label{cor: large 1 norm implies potential decrease}
Let $\eps \leq 1$ and $\mu \leq 1$, set $\gamma = 2$ and $C' = 2(\ln(2/\mu) + 4\eps^2/e^B)$. Then, if $\norm{\A x y}_1 \geq C'$ and $\tilde f(\vec x, \vec y) \leq \tilde f(\vec 0, \vec 0)$, we have $\tilde f(\vec x', \vec y) \leq \tilde f(\vec x, \vec y)$.
\end{corollary}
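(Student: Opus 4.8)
The claim follows immediately from the preceding lemma once the parameters are substituted, so the plan is to invoke that lemma with $\gamma = 2$ and $C' = 2(\ln(2/\mu) + 4\eps^2/e^B)$ and then check that the resulting upper bound on $\tilde f(\vec x', \vec y) - \tilde f(\vec x, \vec y)$ is nonpositive. First I would verify the lemma's hypotheses. We are given $\norm{\A x y}_1 \geq C'$ and $\tilde f(\vec x, \vec y) \leq \tilde f(\vec 0, \vec 0)$, so it only remains to see $1 \leq \gamma \leq C'$: indeed $\gamma = 2 \geq 1$, and since by our standing assumption the smallest non-zero entry $\mu$ of $\mat A$ is inverse-polynomially small (in particular $\mu \leq 1/2$), we have $\ln(2/\mu) \geq \ln 4 > 1$ and hence $C' \geq 2\ln(2/\mu) > 2 = \gamma$.

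Next I would substitute $\gamma = 2$ into the bound provided by the lemma, which becomes
\[
    \tilde f(\vec x', \vec y) - \tilde f(\vec x, \vec y) \leq \Bigl(\tfrac{1}{\gamma} - 1\Bigr) C' + \ln \gamma + (\gamma - 1)\Bigl(\ln(1/\mu) + \tfrac{4\eps^2}{e^B}\Bigr) = -\tfrac{1}{2} C' + \ln 2 + \ln(1/\mu) + \tfrac{4\eps^2}{e^B}.
\]
Since $C' = 2\bigl(\ln(2/\mu) + 4\eps^2/e^B\bigr)$ and $\ln(2/\mu) = \ln 2 + \ln(1/\mu)$, we have $\tfrac{1}{2} C' = \ln 2 + \ln(1/\mu) + 4\eps^2/e^B$, so the right-hand side above is exactly $0$. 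Therefore $\tilde f(\vec x', \vec y) \leq \tilde f(\vec x, \vec y)$, which proves the corollary.

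The statement is pure bookkeeping and presents no real obstacle: the threshold $C'$ and the shift factor $\gamma = 2$ are reverse-engineered precisely so that the estimate of the preceding lemma becomes tight at $0$. The only subtlety worth a line is confirming $C' \geq \gamma$ so that the lemma is applicable, which is guaranteed by the standing assumption on the magnitude of $\mu$.
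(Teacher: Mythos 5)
Your proof is correct and matches the (omitted) computation the paper intends: invoke the preceding lemma with $\gamma=2$ and the given $C'$, and observe that the resulting bound collapses to exactly $0$. Your arithmetic is right: $\tfrac12 C' = \ln 2 + \ln(1/\mu) + 4\eps^2/e^B$, which cancels the other terms precisely.

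One small remark on the side condition $\gamma \leq C'$: the corollary's stated hypotheses ($\eps \leq 1$, $\mu \leq 1$) do not by themselves give $C' \geq 2$ (take $\mu$ near $1$ and $\eps$ near $0$, so $C'$ is near $2\ln 2 < 2$). You patch this by appealing to the paper's standing assumption that $\mu \leq 1/\!\poly(n)$, which is fair in context but goes beyond the corollary's literal hypotheses. In any case this is harmless: if you trace through the proof of the preceding lemma, the hypothesis $\gamma \leq C'$ is never actually used — only $\gamma \geq 1$ and $\norm{\A x y}_1 \geq C'$ are — so the corollary holds regardless.
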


\subsection{Quantum box-constrained scaling}

Combining the above leads to a quantum algorithm for matrix scaling that is based on classical box-constrained newton methods. See \cref{alg:quantum box-constrained} for its formal definition. In \cref{thm:upper bound} we analyze its output. 
\begin{algorithm}[ht]
    \caption{Quantum box-constrained Newton method for matrix scaling}
    \label{alg:quantum box-constrained}
    \Input{Oracle access to $\mat A \in [0,1]^{n \times n}$ with $\norm{\mat A}_1 \leq 1$ and smallest non-zero entry $\mu > 0$, error $\eps > 0$, targets $\vec r, \vec c \in \R_{>0}^n$ with $\norm{\vec r}_1 = 1 = \norm{\vec c}_1$, diameter bound $B \geq 1$, classical $k$-oracle $\mathcal A$ for SDD matrices with non-negative off-diagonal entries}
    \Output{Vectors $\vec x, \vec y \in \R^n$ with $\norm{(\vec x, \vec y)}_\infty \leq B + \ln(4n + (n \ln(1/\mu)/\eps^2))$}
    set $T = \lceil 4 e^4 \max(k B + \ln (4n + (n \ln(1/\mu) / \eps^2)),1) \cdot \ln\left(\frac{\ln(1/\mu) + 2\eps^2/e^B}{\eps^2 / 2}\right)\rceil$\;
    set $C' = 2 \lceil \ln(2/\mu) + 8 \eps^2 / e^B \rceil$\; 
    set $\eps' = \lfloor \eps^2 / 8e^4 \max(k (B + \ln (4n + (n \ln(1/\mu) / \eps^2))), 1) \rfloor$\; 
    store $\vec x^{(0)}, \vec y^{(0)} = \vec 0 \in \R^n$ in QCRAM\;
    \For{$i = 0, \dotsc, T-1$}{
        compute $\mat H_m$, $\mat H_a$ s.t. $\mat H_m + \mat H_a \approx \hess \tilde f(\vec x^{(i)}, \vec y^{(i)})$ as in \cref{thm:approximating regularized potential Hessian} with $\delta_a = \eps' k^2 / 2e^2$\;
        compute $\vec b \approx \grad \tilde f(\vec x^{(i)}, \vec y^{(i)})$ as in \cref{lem:approxgrad} at $\vec x^{(i)}, \vec y^{(i)}$ with $\delta = \eps' / 3$\;
        compute $\Delta = \mathcal A(\tfrac{4e^2}{3k^2} \cdot (\mat H_m + \mat H_a), \tfrac{\vec b}{k})$\;
        compute $(\vec x^{(i+1)}, \vec y^{(i+1)}) = (\vec x^{(i)}, \vec y^{(i)}) + \frac{1}{k} \Delta$ and store in QCRAM\;\label{line: update Delta}
        set \texttt{flag = true}\;
        \While{\texttt{flag}}
        {
        Compute $C'/2$-additive approximation $\gamma$ of $\norm{\mat A(\vec x^{(i+1)}, \vec y^{(i+1)})}_1$\;
        \If{$\gamma \leq 3C'/2$}{set \texttt{flag = false}\;}
        \Else{
        update $\vec x^{(i+1)} \leftarrow \vec x^{(i+1)} - \ln(2) \vec 1$ in QCRAM\;
        }
        }
    }
    \Return{$(\vec x, \vec y) = (\vec x^{(T)}, \vec y^{(T)})$}\;
\end{algorithm}

\begin{theorem}
  \label{thm:upper bound}
  Let $\mat A \in [0,1]^{n \times n}$ with $m$ non-zero entries, $\vec r, \vec c \in \R_{>0}^n$ such that $\norm{\vec r}_1 = 1 = \norm{\vec c}$, and assume $\mat A$ is asymptotically $(\vec r, \vec c)$-scalable.
  Let $\eps > 0$, let $B \geq 1$, and assume there exist $(\vec x_\eps, \vec y_\eps)$ such that $\norm{(\vec x_\eps, \vec y_\eps)}_\infty \leq B$ and $f(\vec x_\eps, \vec y_\eps) - f^* \leq \eps^2$.
  Furthermore, let $\mathcal A$ be the $\bigO{\log(n)}$-oracle of \cref{thm: log n oracle}.
  Then \cref{alg:quantum box-constrained} with these parameters outputs, with probability $\geq 2/3$, vectors $\vec x, \vec y$ such that $f(\vec x, \vec y) - f^* \leq 6 \eps^2$ and runs in quantum time $\bigOt{B^2 \sqrt{mn} / \eps^2}$.
\end{theorem}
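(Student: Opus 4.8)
The plan is to recognize \cref{alg:quantum box-constrained} as exactly the iterative scheme of \cref{thm: second order robust update} run on the regularized potential $\tilde f$, with the Hessian approximation supplied by \cref{thm:approximating regularized potential Hessian}, the gradient approximation by \cref{lem:approxgrad}, and the sign convention matched by conjugating $\hess\tilde f$ with $\diag(\mat I,-\mat I)$ (this turns the non-negative off-diagonal, SDD Hessian of $\tilde f$ into one with non-positive off-diagonal as required by the $k$-oracle of \cref{thm: log n oracle}, without changing its spectrum, so second-order robustness and the quadratic bounds \cref{eq:g lower and upper bound on box} are unaffected). By \cref{lem:basic regularized properties}, $\tilde f$ is second-order robust with SDD Hessian, satisfies $f\le\tilde f$ everywhere, has $\{\tilde f\le\tilde f(\vec 0,\vec 0)\}$ of $\ell_\infty$-radius at most $R:=B+\ln(4n+n\ln(1/\mu)/\eps^2)$, and — using the hypothesis that some $(\vec x_\eps,\vec y_\eps)$ with $\norm{(\vec x_\eps,\vec y_\eps)}_\infty\le B$ attains $f(\vec x_\eps,\vec y_\eps)-f^*\le\eps^2$ — obeys $\abs{\tilde f^*-f^*}\le 5\eps^2$.

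For correctness I would maintain by induction on $i$ the invariant that $(\vec x^{(i)},\vec y^{(i)})$ lies in (a slight, $O(\eps^2)$-enlargement of) the sublevel set $\{\tilde f\le\tilde f(\vec 0,\vec 0)\}$. This simultaneously supplies the $\ell_\infty$-norm bound $\norm{(\vec x^{(i)},\vec y^{(i)})}_\infty\le R$ needed for \cref{thm:approximating regularized potential Hessian,lem:approximating regularization Hessian}, and bounds the quantity $R_\infty$ in \cref{thm: second order robust update} by $2R=\bigOt{B}$ (a minimizer of $\tilde f$ also lies in the sublevel set). The while-loop preserves the invariant and by \cref{cor: large 1 norm implies potential decrease} only lowers $\tilde f$: whenever the additively-estimated value of $\norm{\A xy}_1$ exceeds $C'$, the shift $\vec x\mapsto\vec x-\ln(2)\vec 1$ decreases $\tilde f$; since one Newton step satisfies $\norm{\tfrac1k\Delta}_\infty\le 1$ (by the $k$-oracle), it changes $\norm{\A xy}_1$ by at most a factor $e^2$, so the loop runs $O(1)$ times per iteration and exits with $\norm{\A xy}_1=O(C')=\bigOt{1}$. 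With the algorithm's choices $\delta=\eps'/3$, $\delta_a=\eps'k^2/(2e^2)$, the additive term $e^2\delta_a/k^2+\tfrac32\delta$ of \cref{thm: second order robust update} equals $\eps'$, so each outer iteration gives $\tilde f(\vec x^{(i+1)},\vec y^{(i+1)})-\tilde f^*\le\rho\,(\tilde f(\vec x^{(i)},\vec y^{(i)})-\tilde f^*)+\eps'$ with $\rho=1-1/(4e^4\max(kR_\infty,1))$. Unrolling from $\tilde f(\vec 0,\vec 0)-\tilde f^*\le\ln(1/\mu)+O(\eps^2)$, the parameters $T$ and $\eps'$ are calibrated so that the homogeneous term $\rho^T(\tilde f(\vec 0,\vec 0)-\tilde f^*)$ and the geometric sum $\eps'/(1-\rho)$ together give $\tilde f(\vec x^{(T)},\vec y^{(T)})-\tilde f^*\le\eps^2$; then $f(\vec x^{(T)},\vec y^{(T)})\le\tilde f(\vec x^{(T)},\vec y^{(T)})\le\tilde f^*+\eps^2\le f^*+6\eps^2$, and the claimed $\ell_\infty$-bound on the output is the invariant at $i=T$.

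For the running time, $T=\bigOt{B}$ since $k=O(\log n)$ and all logarithmic factors (in $n$ and $1/\eps$, using $\mu=1/\poly(n)$) are $\bigOt{1}$. In each of the $T$ iterations: the gradient call of \cref{lem:approxgrad} producing $\ell_1$-error $\Theta(\eps')$ costs $\bigOt{\sqrt{mn}\,\norm{\A xy}_1/\eps'}=\bigOt{B\sqrt{mn}/\eps^2}$ (using $\norm{\A xy}_1=\bigOt{1}$ from the while-loop); the quantum Hessian sparsification of \cref{thm:approximating regularized potential Hessian} costs $\bigOt{\sqrt{mn}}$; the classical $k$-oracle, run on an $\bigOt{n}$-sparse SDD matrix, costs $\bigOt{n}$; each of the $O(1)$ norm estimates costs $\bigOt{\sqrt{mn}}$; and the QCRAM updates cost $\bigOt{n}$. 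The gradient estimation dominates, giving total $\bigOt{B\cdot B\sqrt{mn}/\eps^2}=\bigOt{B^2\sqrt{mn}/\eps^2}$. For the success probability, each of the $\bigOt{T}$ randomized subroutine calls is amplified to failure probability $\le 1/(cT)$ for a suitable constant $c$ at $\bigOt{1}$ multiplicative cost, and a union bound yields overall success probability $\ge 2/3$.

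The step I expect to be the main obstacle is the error bookkeeping in the second paragraph: making the sublevel-set invariant precise in the presence of the additive errors $\delta,\delta_a$ from the quantum subroutines. Because of these errors a Newton step need not strictly decrease $\tilde f$ once the optimality gap drops below $\Theta(\eps^2)$, so one cannot literally claim $\tilde f(\vec x^{(i)},\vec y^{(i)})\le\tilde f(\vec 0,\vec 0)$ at every iterate; one has to track the (at most $O(\eps^2)$) overshoot and verify it is small enough that the radius bound $R_\infty=\bigOt{B}$, the $\ell_\infty$-bound $R$, and the margin in \cref{cor: large 1 norm implies potential decrease} (the reason the algorithm inflates $C'$ and uses generous ceilings) all still apply, and that it does not corrupt the final bound $f(\vec x^{(T)},\vec y^{(T)})-f^*\le 6\eps^2$. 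The remaining pieces — the geometric-series estimate, the per-iteration cost accounting, and the union bound — are routine.
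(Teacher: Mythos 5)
Your proposal matches the paper's proof essentially step for step: the same appeal to \cref{thm: second order robust update} with the approximations from \cref{thm:approximating regularized potential Hessian,lem:approxgrad}, the same while-loop analysis via \cref{cor: large 1 norm implies potential decrease}, the same geometric unrolling to get $\tilde f(\vec x^{(T)},\vec y^{(T)}) - \tilde f^* \leq \eps^2$, the same transfer to $f$ via \cref{lem:basic regularized properties}, and the same dominant-cost accounting yielding $\bigOt{B^2\sqrt{mn}/\eps^2}$. The one point you flag as a worry (whether iterates leave the sublevel set due to the additive error) is also not belabored in the paper; it is harmless because the recursion $z_{i+1} \le (1-\tfrac1M) z_i + \tfrac{\eps^2}{2M}$ forces $z_{i+1} \le \max(z_i, \eps^2/2)$, so any overshoot above $\tilde f(\vec 0,\vec 0)$ is at most $\eps^2/2$ and changes the radius bound only by a negligible additive term (indeed the paper simply uses the radius of the $\tilde f(\vec 0,\vec 0)$-sublevel set about the origin, absorbing such constants).
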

\begin{proof}
  In every iteration, the matrices $\mat H_m, \mat H_a$ and the vector $\vec b$ are such that they satisfy the requirements of \cref{thm: second order robust update}, hence
  \[
    \tilde f(\vec x^{(i+1)}, \vec y^{(i+1)}) - \tilde f^* \leq \left(1 - \frac{1}{4e^4 \max(k R_\infty,1)} \right) (\tilde f(\vec x^{(i)}, \vec y^{(i)}) - \tilde f^*) + \frac{e^2\delta_a}{k^2} + \frac{3\delta}{2}
  \]
  where $R_\infty \leq B + \ln (4n + (n \ln(1/\mu) / \eps^2)))$ is the $\ell_\infty$-radius of the sublevel set $\{(\vec x, \vec y) : \tilde f(\vec x, \vec y) \leq \tilde f(\vec 0, \vec 0) \}$ about $(\vec 0, \vec 0)$, whose upper bound follows from \cref{lem:basic regularized properties}.
  From here on, we write $M = 4 e^4 \max(k R_\infty, 1)$.
  The choice of $\delta_a$ and $\delta$ in the algorithm is such that $e^2 \delta_a /k^2 + 3 \delta / 2 \leq \frac{\eps^2}{2M}$, hence we can also bound the progress by
  \[
    \tilde f(\vec x^{(i+1)}, \vec y^{(i+1)}) - \tilde f^* \leq \left(1 - \frac{1}{M} \right) (\tilde f(\vec x^{(i)}, \vec y^{(i)}) - \tilde f^*) + \frac{\eps^2}{2M}.
  \]
  
  \Cref{cor: large 1 norm implies potential decrease} shows that if $\norm{\mat A(\vec x^{(i+1)}, \vec y^{(i+1)})}_1$ is larger than $C'$, then we can  shift $\vec x$ by $-\ln(2) \vec 1$, this halves $\norm{\mat A(\vec x^{(i+1)}, \vec y^{(i+1)})}_1$ and does not increase the regularized potential. Repeating this roughly $\log_2(\|\mat A(\vec x^{(i+1)}, \vec y^{(i+1)})\|_1/C')$ many times\footnote{Which is an almost constant number of times: in a single update of the box-constrained method, we take steps of size at most $1$ in $\ell_\infty$-norm, so individual entries can only grow by a factor $e^2$ in a single iteration, and the holds same for $\|\A xy\|_1$.} 
  reduces $\|\mat A(\vec x^{(i+1)}, \vec y^{(i+1)})\|_1$ to at most $C = 2C'$. Determining when to stop this process requires a procedure to distinguish between the cases $\norm{\mat A(\vec x^{(i+1)}, \vec y^{(i+1)})}_1 \leq C'$ and $\norm{\mat A(\vec x^{(i+1)}, \vec y^{(i+1)})}_1 \geq 2C'$ (if in between $C'$ and $2C'$ either continuing or stopping is fine). Such a procedure can be implemented by computing a $C'/2$-additive approximation of $\norm{\mat A(\vec x^{(i+1)}, \vec y^{(i+1)})}_1$, which can be done using $\bigOt{\sqrt{mn}\polylog(C'/\mu)}$ quantum queries, see (the proof of)~\cite[Lemma 4.6 (arXiv)]{qscalingICALP}.
  Therefore, throughout the algorithm we may assume that $\norm{\mat A(\vec x^{(i+1)}, \vec y^{(i+1)})}_1 \leq 2 C' = C$.
  
  It remains to show that $\tilde f(\vec x^{(T)}, \vec y^{(T)}) - \tilde f^* \leq \eps^2$ for our choice of $T$.
  Note that we have
  \begin{align*}
      \tilde f(\vec x^{(T)}, \vec y^{(T)}) - \tilde f^* & \leq (1 - \frac1M)^T (\tilde f(\vec 0, \vec 0) - \tilde f^*) + \sum_{i=0}^{T-1} (1 - \frac1M)^{T-i-1} \cdot \frac{\eps^2}{2M} \\
      & \leq (1 - \frac1M)^T (\tilde f(\vec 0, \vec 0) - \tilde f^*) + (1 - (1 - \frac1M)^T) \cdot \frac{\eps^2}{2} \\
      & \leq (1 - \frac1M)^T (f(\vec 0, \vec 0) - f^* + \frac{2\eps^2}{e^B}) + \frac{\eps^2}{2} \\
      & \leq (1 - \frac1M)^T (\ln(1/\mu) + \frac{2 \eps^2}{e^B}) + \frac{\eps^2}{2} \\
      & \leq \eps^2
  \end{align*}
  where in the third inequality we use \cref{lem:basic regularized properties}, and in the last inequality we use
  \begin{align*}
    T & = \left\lceil 4 e^4 \max(k B + \ln (4n + (n \ln(1/\mu) / \eps^2)),1) \cdot \ln\left(\frac{\ln(1/\mu) + 2\eps^2/e^B}{\eps^2 / 2}\right) \right\rceil \\
    & \geq \left\lceil M \cdot \ln\left(\frac{\ln(1/\mu) + 2\eps^2/e^B}{\eps^2 / 2}\right) \right\rceil \\
    & \geq \frac{1}{\ln(1 - \frac1M)} \cdot \ln\left(\frac{\eps^2/2}{\ln(1/\mu) + \frac{2 \eps^2}{e^B}}\right). 
  \end{align*} 
  This implies that
  \begin{align*}
      f(\vec x^{(T)}, \vec y^{(T)}) - f^* \leq \tilde f(\vec x^{(T)}, \vec y^{(T)}) - \tilde f^* + 5 \eps^2 \leq 6 \eps^2,
  \end{align*}
  where we crucially use the last point of \cref{lem:basic regularized properties} and the assumption that there exist $(\vec x_\eps, \vec y_\eps)$ with $\norm{(\vec x_\eps, \vec y_\eps)}_\infty \leq B$ which $\eps^2$-minimize $f$.
  
  Finally we bound the time complexity of \cref{alg:quantum box-constrained}. For each of the quoted results, we use the choice $C = 2 C' = \bigOt{\ln(n)+\eps^2}$. In each of the $T$ iterations we compute: 
  \begin{enumerate}
      \item approximations $\mat H_m$, $\mat H_a$ of $\nabla^2 \tilde f(\vec x^{(i)}, \vec x^{(i)})$ in time $\bigOt{\sqrt{mn}\polylog(1/\eps)}$ (using that $C$, $1/\mu$ are at most $\poly(n)$), 
      \item an $\eps'/3$-$\ell_1$-approximation of $\nabla \tilde f (\vec x^{(i)},\vec y^{(i)})$ in time $\bigOt{\sqrt{mn}/\eps'} = \bigOt{B\sqrt{mn}/\eps^2}$,
      \item an update $\Delta$ in time $\bigOt{n}$ 
      using one call to the $k = \bigO{\log(n)}$-oracle on SDD-matrices with $\bigOt{n}$ non-zero entries from \cref{thm: log n oracle},
      \item at most $\bigO{1}$ many times (using the fact that in \cref{line: update Delta} the $1$-norm changes by at most a constant factor since $\|\frac1k\Delta\|_\infty \leq 1$)       an $\bigO{\ln(1/\mu)+\eps^2}$-additive approximation of $\|\mat A({\vec x^{(i)}},{\vec y^{(i)}})\|_1$ in time $\bigOt{\sqrt{mn}}$.
    \end{enumerate}
  Note that the second contribution dominates the others, resulting in an overall time complexity of $\bigOt{B^2\sqrt{mn}/\eps^2}$.
\end{proof}

The above theorem assumes that a bound $B$ on the $\ell_\infty$-norm of an $\eps^2$-minimizer of $f$ is known.
For the purpose of matrix scaling, one can circumvent this assumption by running the algorithm for successive powers of $2$ (i.e., $B=1$, $B=2$, $B=4$,$\ldots$) and testing after each run whether the output provides an $\eps$-scaling or not.
Verifying whether given $\vec x, \vec y$ provide an $\eps$-scaling of $\mat A$ can be done in time $\bigOt{\sqrt{mn}/\eps^2}$.
Note that this gives an algorithm for $\eps$-scaling whose complexity depends on a diameter bound for $\eps^2$-minimizers of $f$, rather than a diameter bound for $\eps$-scaling vectors.
Furthermore, such an approach does not work for the task of finding an $\eps^2$-minimizer of $f$, as we do not know how to test this property efficiently. 

\begin{corollary} \label{cor: non-positive}
  For asymptotically-scalable matrices $\mat A \in \R^{n \times n}_{\geq 0}$ with $m$ non-zero entries, one can find  $O(\eps)$-$\ell_1$-scaling vectors $(\vec x, \vec y)$ of $\mat A$ to target marginals $\vec r, \vec c \in \R^n_{>0}$ with $\norm{\vec r}_1 = 1 = \norm{\vec c}_1$ in time $\bigOt{R_\infty^2 \sqrt{mn}/\eps^2}$, where $R_\infty$ is such that there exists an $\eps^2$-approximate minimizer $(\vec x_\eps, \vec y_\eps)$ of $f$ with
  \[R_\infty = \|(\vec x_\eps, \vec y_\eps)\|_\infty + \ln(4n+ (n \ln(1/\mu)/\eps^2)).
  \]
\end{corollary}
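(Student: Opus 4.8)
The plan is to invoke \cref{thm:upper bound} essentially as a black box, wrapped in two standard outer layers: a Sinkhorn-type post-processing step that turns the potential-gap guarantee of \cref{thm:upper bound} into an $\ell_1$-scaling guarantee, and a doubling loop that removes the need to know the diameter parameter $B$ in advance. Since $\mat A$ is asymptotically scalable, $f$ is bounded below, so for every $\eps > 0$ there is an $\eps^2$-minimizer $(\vec x_\eps, \vec y_\eps)$; the corollary quantifies over the associated $R_\infty = \norm{(\vec x_\eps, \vec y_\eps)}_\infty + \ln(4n + n\ln(1/\mu)/\eps^2)$. Note that $\ln(4n + n\ln(1/\mu)/\eps^2) = \bigOt 1$ since $\mu = 1/\!\poly(n)$, and in particular $R_\infty \geq \max(\norm{(\vec x_\eps, \vec y_\eps)}_\infty, 1)$; this is exactly what lets the final running time be stated in terms of $R_\infty$ rather than of the raw bound $B$ demanded by \cref{thm:upper bound}.

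For the post-processing, suppose we are handed $(\vec x, \vec y)$ with $f(\vec x, \vec y) - f^* \leq 6\eps^2$ and $\norm{\A x y}_1 = \bigOt 1$ (both guaranteed by \cref{thm:upper bound}). I would apply a single \emph{approximate} row-rescaling step: using amplitude-estimation-based marginal estimation as in \cite{qscalingICALP} (cf.\ the proof of \cref{lem:approxgrad}), compute in quantum time $\bigOt{\sqrt{mn}/\eps}$ values $\tilde r_i \in (1 \pm \tfrac{\eps}{100})\, r_i(\A x y)$ and set $x^\sharp_i = x_i + \ln r_i - \ln \tilde r_i$. To control the resulting marginals I would compare $(\vec x^\sharp, \vec y)$ with the \emph{exact} row update $x^\flat_i = x_i + \ln r_i - \ln r_i(\A x y)$, which minimizes $\vec x' \mapsto f(\vec x', \vec y)$, so that $f(\vec x^\flat, \vec y) \leq f(\vec x, \vec y)$ and $\norm{\A{x^\flat}y}_1 = \norm{\vec r}_1 = 1$. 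The standard Sinkhorn identity gives, for the exact column update $\vec y^c$ applied to $(\vec x^\flat, \vec y)$, that $f(\vec x^\flat, \vec y) - f(\vec x^\flat, \vec y^c) = \mathrm{KL}(\vec c \,\Vert\, \vec c(\A{x^\flat}y))$ (here one uses $\norm{\A{x^\flat}y}_1 = 1$); combined with $f(\vec x^\flat, \vec y^c) \geq f^*$ this yields $\mathrm{KL}(\vec c \,\Vert\, \vec c(\A{x^\flat}y)) \leq 6\eps^2$, hence $\norm{\vec c(\A{x^\flat}y) - \vec c}_1 \leq \sqrt{12}\,\eps$ by Pinsker's inequality. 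Since $(\vec x^\sharp, \vec y)$ is within $\ell_\infty$-distance $O(\eps)$ of $(\vec x^\flat, \vec y)$, its column marginals differ entrywise from those of $(\vec x^\flat, \vec y)$ by a factor $e^{\pm O(\eps)}$, hence in $\ell_1$ by $O(\eps)$; and by construction $r_i(\A{x^\sharp}y) = \tfrac{r_i(\A x y)}{\tilde r_i}\, r_i \in (1 \pm O(\eps)) r_i$, so $\norm{\vec r(\A{x^\sharp}y) - \vec r}_1 = O(\eps)$. Thus $(\vec x^\sharp, \vec y)$ is an $O(\eps)$-$\ell_1$-scaling of $\mat A$.

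To remove the dependence on $B$, I would run \cref{alg:quantum box-constrained} for $B = 1, 2, 4, 8, \dotsc$, with the success probability of each component (the algorithm itself, the marginal estimation in the post-processing, and the verification procedure mentioned after \cref{thm:upper bound}, which costs $\bigOt{\sqrt{mn}/\eps^2}$) boosted by repetition so that a union bound over the $O(\log R_\infty) = O(\log n)$ rounds still leaves overall success probability $\geq \tfrac23$ — a polylogarithmic overhead that is absorbed into $\widetilde O(\cdot)$. After each run I apply the post-processing above and then use the verification procedure to test whether the resulting $(\vec x^\sharp, \vec y)$ is an $O(\eps)$-$\ell_1$-scaling, halting and outputting it on the first success. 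The test accepts only genuine $O(\eps)$-$\ell_1$-scalings, so any output is correct; and as soon as $B$ exceeds $\norm{(\vec x_\eps, \vec y_\eps)}_\infty$ the hypotheses of \cref{thm:upper bound} are met, so that round (together with its post-processing and verification) succeeds. Hence the loop makes $O(\log R_\infty)$ rounds, the one with parameter $B = 2^j$ costing $\bigOt{4^j \sqrt{mn}/\eps^2}$; this geometric sum is dominated by its last term, with final $B \leq 2R_\infty$, for a total quantum time $\bigOt{R_\infty^2 \sqrt{mn}/\eps^2}$ (the $\bigOt{\sqrt{mn}/\eps}$ post-processing and $\bigOt{\sqrt{mn}/\eps^2}$ verification costs are absorbed).

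The main obstacle is the post-processing: extracting an $\ell_1$-scaling of error $O(\eps)$ — rather than only $O(\sqrt\eps)$ — from the potential-gap bound $6\eps^2$ forces one to pass through the \emph{exact} row update, where $\norm{\A{x^\flat}y}_1 = 1$ holds on the nose so that the KL identity and Pinsker's inequality apply without slack, and only then perturb to the computationally cheap approximate update. Everything else — the doubling loop, the boosting across $O(\log R_\infty)$ rounds, and the observation that the $\ln(4n + \dots)$ term in $R_\infty$ is swallowed by $\widetilde O(\cdot)$ — is routine bookkeeping.
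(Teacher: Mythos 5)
Your proposal is correct and follows essentially the same route as the paper: double over $B$, run \cref{alg:quantum box-constrained} with the $k$-oracle, verify whether the result is an $\eps$-scaling, and halt on the first success, with the geometric sum of costs dominated by the last round. You are more careful than the paper about converting the potential-gap guarantee $f(\vec x,\vec y)-f^*\leq 6\eps^2$ of \cref{thm:upper bound} into a genuine $O(\eps)$-$\ell_1$-scaling (via the approximate Sinkhorn row step, the exact-row-update KL identity, and Pinsker's inequality), a step the paper leaves implicit in the paragraph preceding the corollary; this extra care is welcome but does not change the structure of the argument.
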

For the general case mentioned above, we do not have good (i.e., polylogarithmic) bounds on the parameter $R_\infty$. We do have such bounds when $\mat A$ is entrywise positive: it is well-known (and easy to show\footnote{From the inequality $A_{ij} e^{x_i + y_j} \leq 1/n$ one gets the upper bounds $x_i + y_j \leq \ln(1/n\mu)$ for every $i, j$. To obtain a variation norm bound for $\vec x$ and $\vec y$, note that for every fixed $i$, there is at least one $j_i$ such that $A_{ij_i} e^{x_i + y_{j_i}} \geq 1/n^2$ (because the row sums are $1/n$). Therefore $x_i + y_{j_i} \geq \ln(1/n^2 \nu)$ where $\nu$ is the largest entry of $\mat A$, and $x_{i'} - x_i = (x_{i'} + y_{j_i}) - (x_i + y_{j_i}) \leq \ln(1/n\mu) - \ln(1/n^2 \nu) = \ln(n \nu / \mu)$ for every $i, i'$. This is an upper bound on the variation norm of $\vec x$, and one can derive the same bound for that of $\vec y$. By translating $\vec x, \vec y$ by appropriate multiples of the all-ones vector we can assume $x_1 = 0$. Then the variation-norm bound also bounds the $\ell_\infty$-norm of $\vec x$. To then get an $\ell_\infty$-bound on $\vec y$, recall that for at least one $j$, one has $x_1 + y_j = y_j \geq \ln(1/n^2 \nu) \geq 0$ and still $y_j = x_1 +y_j \leq \ln(1/n\mu)$, so $\norm{\vec y}_\infty \leq \ln(1/n\mu) + \ln(n \nu/\mu) = \ln(\nu/\mu^2)$.})
that such an $\mat A$ can be exactly scaled to uniform marginals with scaling vectors $(\vec x, \vec y)$ such that $\norm{(\vec x, \vec y)}_\infty = O(\log(\norm{\mat A}_1 / \mu))$
(cf.~\cite[Lem.~1]{KALANTARI199687}, \cite[Lem.~4.11]{cmtv17}).
In particular, this implies that there exists a minimizer $(\vec x^*,\vec y^*)$ of $f$ with $\norm{(\vec x^*,\vec y^*)}_\infty = O(\log(\norm{\mat A}_1/\mu)) = \bigOt{1}$ and therefore we have the following corollary.

\begin{corollary} \label{cor: positive}
For entrywise-positive matrices $\mat A$, one can find an $\eps$-$\ell_1$-scaling of $\mat A$ to uniform marginals in time $\bigOt{n^{1.5} / \eps^2}$.
\end{corollary}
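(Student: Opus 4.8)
The plan is to obtain this as a direct specialization of \cref{cor: non-positive} to the entrywise-positive case, in which both the sparsity parameter and the diameter quantity $R_\infty$ collapse to their most favorable values. First I would normalize so that $\norm{\mat A}_1 = 1$; dividing $\mat A$ by $\norm{\mat A}_1$ only shifts the scaling vectors by a fixed multiple of the all-ones vector and leaves the scaling problem (and the notion of $\eps$-$\ell_1$-scaling to uniform marginals) unchanged. An entrywise-positive $n \times n$ matrix has $m = n^2$ non-zero entries, so $\sqrt{mn} = n^{1.5}$. Thus it only remains to bound the parameter $R_\infty$ appearing in \cref{cor: non-positive}.

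The key input is the standard fact recalled immediately before the corollary: an entrywise-positive matrix $\mat A$ with $\norm{\mat A}_1 = 1$ and smallest entry $\mu$ can be exactly scaled to uniform marginals by vectors $(\vec x^*, \vec y^*)$ with $\norm{(\vec x^*, \vec y^*)}_\infty = O(\log(1/\mu))$ (cf.~\cite[Lem.~1]{KALANTARI199687}, \cite[Lem.~4.11]{cmtv17}). Such an exact scaling is a global minimizer of $f$, hence in particular an $\eps^2$-approximate minimizer for every $\eps > 0$; taking $(\vec x_\eps, \vec y_\eps) = (\vec x^*, \vec y^*)$ in the statement of \cref{cor: non-positive} gives
\[
  R_\infty = \norm{(\vec x^*, \vec y^*)}_\infty + \ln\!\left(4n + \frac{n \ln(1/\mu)}{\eps^2}\right) = O\!\left(\log(1/\mu) + \log\tfrac{n}{\eps}\right).
\]
Under the standing assumption that the smallest non-zero entry of $\mat A$ is at least $1/\poly(n)$, we have $\log(1/\mu) = O(\log n)$, so $R_\infty = O(\log(n/\eps)) = \bigOt{1}$ and hence $R_\infty^2 = \bigOt{1}$ as well. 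Plugging $m = n^2$ and $R_\infty^2 = \bigOt{1}$ into the complexity bound $\bigOt{R_\infty^2 \sqrt{mn}/\eps^2}$ of \cref{cor: non-positive} yields the claimed $\bigOt{n^{1.5}/\eps^2}$ running time. Since \cref{cor: non-positive} produces $O(\eps)$-$\ell_1$-scaling vectors, running it with $\eps$ replaced by a sufficiently small constant multiple of $\eps$ turns the output into a genuine $\eps$-$\ell_1$-scaling, which is absorbed into the $\bigOt{\cdot}$.

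I do not expect a serious obstacle here: the statement is essentially a bookkeeping corollary of \cref{thm:upper bound}/\cref{cor: non-positive}. The only two points that need care are (i) verifying that the $\ell_\infty$-bound on the minimizer of $f$ for entrywise-positive matrices is genuinely polylogarithmic — which is precisely the content of the cited lemmas and of the footnote preceding the corollary — and (ii) confirming that the remaining logarithmic terms inside $R_\infty$ (the regularization offset $\ln(4n + n\ln(1/\mu)/\eps^2)$ and the sublevel-set radius) all reduce to $\polylog(n, 1/\eps)$ once the standing hypothesis $\mu \geq 1/\poly(n)$ is invoked. With these in hand the bound is immediate.
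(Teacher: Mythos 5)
Your proposal matches the paper's own argument: specialize \cref{cor: non-positive} by taking $m = n^2$ and bounding $R_\infty = \bigOt{1}$ via the standard $O(\log(\norm{\mat A}_1/\mu))$ diameter bound on exact scaling vectors for entrywise-positive matrices, together with the standing assumption $\mu \geq 1/\poly(n)$. The extra bookkeeping you supply (normalization of $\norm{\mat A}_1$, absorbing the $O(\eps)\to\eps$ constant, estimating the $\ln(4n + n\ln(1/\mu)/\eps^2)$ term) is all implicit in the paper's one-paragraph derivation and is correct.
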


\paragraph{Optimality of the choice of parameters.}

Let $z_i = \tilde f(\vec x_i, \vec y_i) - \tilde f(\vec x^*,\vec y^*)$ for each iteration $i$. Then the $z_i$ satisfy constraints of the following form: 
\[
z_{i+1} \leq (1-\gamma) z_i + \delta_i,
\]
where $\gamma = \frac{1}{4e^4 \max(k R_\infty,1)}$ and $\delta_i$ is a parameter that determines the accuracy with which we approximate the gradient and Hessian in each iteration.
Above we used the choice $\delta_i = \frac{e^2\delta_a}{k^2} + \frac{3\delta}{2}$, independent of $i$. Since $1/\delta_i$ dominates the complexity of each iteration, a natural question is whether one can obtain a better overall complexity by letting $\delta_i$ depend on $i$. In the following lemma we show this is not the case.
\begin{lemma}
    Let $z_0 > 0$, $\eps > 0$ and $0 < \gamma \leq 1/2$ be given.
    Then, for any $N \geq 1$ and any choice of sequence of $\delta_0, \dotsc, \delta_{N-1} > 0$ such that the sequence defined by
    \begin{equation*}
        z_{i+1} = \left(1 - \gamma\right) z_i + \delta_i, \quad 0 \leq i \leq N-1
    \end{equation*}
    satisfies $z_N \leq \eps$, one must have
    \begin{equation*}
        \sum_{i=0}^{N-1} \frac1{\delta_i} \geq \frac{1}{\gamma^2 \eps} \cdot (1 - \sqrt{\eps/z_0})^2.
    \end{equation*}
\end{lemma}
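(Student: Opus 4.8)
The plan is to unroll the linear recursion $z_{i+1}=(1-\gamma)z_i+\delta_i$ into closed form and then extract the bound via Cauchy--Schwarz. Unrolling gives $z_N=(1-\gamma)^N z_0+\sum_{i=0}^{N-1}(1-\gamma)^{N-1-i}\delta_i$, and since $z_0>0$ and all $\delta_i>0$ every summand on the right is positive; hence the hypothesis $z_N\le\eps$ furnishes \emph{two} inequalities: the ``budget'' constraint $\sum_{i=0}^{N-1}(1-\gamma)^{N-1-i}\delta_i\le\eps$, and (dropping the sum) $(1-\gamma)^N\le\eps/z_0$. The second of these is the only place the starting value $z_0$ enters, and it is where one uses $\eps<z_0$ (in the complementary regime $\eps\ge z_0$ the stated inequality need not hold, but that regime is irrelevant: in the application $z_0=\tilde f(\vec 0)-\tilde f^*$ is much larger than the target accuracy, so I would simply assume $0<\eps<z_0$).

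Next I would apply Cauchy--Schwarz to the budget constraint with the weighting $a_i=\delta_i^{-1/2}$ and $b_i=\bigl((1-\gamma)^{N-1-i}\delta_i\bigr)^{1/2}$, chosen precisely so that the cross terms $a_ib_i=(1-\gamma)^{(N-1-i)/2}$ no longer involve the (adversarial) $\delta_i$. This gives
\[
  \Bigl(\sum_{i=0}^{N-1}(1-\gamma)^{(N-1-i)/2}\Bigr)^{\!2}\ \le\ \Bigl(\sum_{i=0}^{N-1}\tfrac1{\delta_i}\Bigr)\Bigl(\sum_{i=0}^{N-1}(1-\gamma)^{N-1-i}\delta_i\Bigr)\ \le\ \eps\sum_{i=0}^{N-1}\tfrac1{\delta_i},
\]
so after re-indexing $j=N-1-i$ and setting $q=\sqrt{1-\gamma}\in(0,1)$, the left-hand sum becomes the geometric sum $\sum_{j=0}^{N-1}q^{j}=\tfrac{1-q^N}{1-q}$, and hence $\sum_i\delta_i^{-1}\ge \tfrac1\eps\bigl(\tfrac{1-q^N}{1-q}\bigr)^2$.

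To finish I would bound numerator and denominator separately: $q^N=(1-\gamma)^{N/2}\le\sqrt{\eps/z_0}<1$ from the second unrolled inequality, so $1-q^N\ge 1-\sqrt{\eps/z_0}>0$ and thus $(1-q^N)^2\ge(1-\sqrt{\eps/z_0})^2$; and $1-q=1-\sqrt{1-\gamma}\le\gamma$ since $\sqrt{1-\gamma}\ge 1-\gamma$ on $[0,1]$, so $(1-q)^2\le\gamma^2$. Substituting both estimates yields $\sum_{i=0}^{N-1}\delta_i^{-1}\ge\frac{1}{\gamma^2\eps}(1-\sqrt{\eps/z_0})^2$, which is the claim.

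I do not anticipate a genuine obstacle: the one thing to get right is the Cauchy--Schwarz weighting, which is essentially forced once one observes that unrolling converts the target $z_N\le\eps$ into a linear budget on exactly the combination $\sum_i(1-\gamma)^{N-1-i}\delta_i$. It is worth remarking (though not needed) that equality in Cauchy--Schwarz holds when $\delta_i^{-1}\propto(1-\gamma)^{N-1-i}\delta_i$, i.e.\ when $\delta_i$ decays geometrically in $N-1-i$; this shows the bound is tight up to the constant, and in particular that allowing $\delta_i$ to depend on $i$ cannot improve the $\Theta\!\bigl(1/(\gamma^2\eps)\bigr)$ scaling that dominates the overall running time.
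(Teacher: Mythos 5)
Your proof is correct, and it takes a genuinely different route from the paper. Both proofs start by unrolling the recursion to get $z_N=(1-\gamma)^N z_0+\sum_{i=0}^{N-1}(1-\gamma)^{N-1-i}\delta_i$ and extracting the two consequences $(1-\gamma)^N\le\eps/z_0$ and $\sum_i(1-\gamma)^{N-1-i}\delta_i\le\eps$. From there the paper sets up a Lagrangian for the constrained problem of minimizing $\sum_i 1/\delta_i$ subject to the budget constraint, solves the KKT system explicitly (finding that the minimizing $\delta_i$ decay geometrically), and then bounds the resulting expression; you instead apply Cauchy--Schwarz with the weighting $a_i=\delta_i^{-1/2}$, $b_i=\bigl((1-\gamma)^{N-1-i}\delta_i\bigr)^{1/2}$, which makes the cross terms $\delta_i$-free and lands directly on the geometric sum $\tfrac{1-q^N}{1-q}$ with $q=\sqrt{1-\gamma}$. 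These lead to the same bound, and indeed the Cauchy--Schwarz equality case is exactly the geometrically-decaying $\delta_i$ the KKT analysis produces, so the two arguments are dual views of the same optimization. Your route is more elementary (no appeal to convexity, Slater-type regularity, or KKT), slightly shorter, and makes the tightness transparent; the paper's Lagrangian route is more mechanical but exhibits the optimizer explicitly.

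One further point, which you flagged correctly: the squaring step $1-q^N\ge 1-\sqrt{\eps/z_0}\implies(1-q^N)^2\ge(1-\sqrt{\eps/z_0})^2$ requires $\eps\le z_0$ so that both sides are nonnegative. The paper's proof has the identical implicit requirement (its step from $\bigl(\tfrac{1-\sqrt{1-\gamma}^N}{1-\sqrt{1-\gamma}}\bigr)^2$ to $\bigl(\tfrac{1-\sqrt{\eps/z_0}}{\gamma}\bigr)^2$ needs the same sign condition), and a small counterexample with $\eps>z_0$ shows the stated inequality genuinely fails there, so this is not an artifact of the proof method. Your remark that the regime $\eps\ge z_0$ is irrelevant to the application is exactly right.
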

\begin{proof}
  Observe that we have the explicit expression
  \[
    z_N = (1 - \gamma)^N z_0 + \sum_{i=0}^{N-1} (1 - \gamma)^{N-i-1} \delta_i.
  \]
  As every term in this sum is positive, we must have $(1 - \gamma)^N z_0 < \eps$ if $z_N \leq \eps$ (where we have strict inequality since $N \geq 1$ and therefore the sum is not empty).
  Now fix $N$ such that $(1 - \gamma)^N z_0 < \eps$, and define the Lagrangian $L(\delta_0, \dotsc, \delta_{N-1}; \lambda)$ by
  \begin{equation} \notag
      L(\delta_0, \dotsc, \delta_{N-1}; \lambda) = \sum_{i=0}^{N-1} \frac{1}{\delta_i} + \lambda \left((1 - \gamma)^N z_0 + \sum_{i=0}^{N-1} (1 - \gamma)^{N-i-1} \delta_i - \eps\right).
  \end{equation}
  Observe that the Lagrangian is convex in the $\delta_i$ and that the constraint $z_N \leq \eps$ is linear in the $\delta_i$, and can be made strict for a very small choice of $\delta_i$. In other words, the Karush--Kuhn--Tucker conditions are satisfied, so that $\sum_{i=0}^{N-1} 1/\delta_i$ is minimized subject to the constraint $z_N \leq \eps$ if and only if $\grad L(\delta_0, \dotsc, \delta_{N-1}; \lambda) = 0$ for some $\lambda \geq 0$. 
  This gradient vanishes if and only if
  \begin{equation} \notag
      - \frac{1}{\delta_i^2} + \lambda \cdot \left(1 - \gamma\right)^{N-i-1} = 0, \quad 0 \leq i \leq N-1, \quad \eps = (1 - \gamma)^N z_0 + \sum_{i=0}^{N-1} (1 - \gamma)^{N-i-1} \delta_i.
  \end{equation}
  For fixed $\lambda > 0$ this means that the optimal choice of $\delta_i$ is
  \begin{equation} \notag
      \delta_i = \sqrt{\frac{1}{\lambda (1 - \gamma)^{N-i-1}}} = c_\lambda \sqrt{1 - \gamma}^{-N+i+1}
  \end{equation}
  where $c_\lambda := \sqrt{1/\lambda}$. The constraint on $z_N$ then gives
  \begin{equation} \notag
      \eps - (1 - \gamma)^N z_0 = c_\lambda \sum_{i=0}^{N-1} \sqrt{1 - \gamma}^{N-i-1} = c_\lambda \cdot \frac{1 - \sqrt{1 - \gamma}^{N}}{1 - \sqrt{1 - \gamma}},
  \end{equation}
  leading to an associated cost of 
  \begin{equation} \notag
      \sum_{i=0}^{N-1} \frac{1}{\delta_i} = \frac{1}{c_\lambda} \cdot \frac{1 - \sqrt{1 - \gamma}^N}{1 - \sqrt{1 - \gamma}} = \left(\frac{1 - \sqrt{1 - \gamma}^N}{1 - \sqrt{1 - \gamma}}\right)^2 \cdot \frac{1}{\eps - (1 - \gamma)^N z_0}.
  \end{equation}
  As $\gamma \leq 1$ we have $1 - \sqrt{1 - \gamma} \leq \gamma$, and because $(1 - \gamma)^N z_0 < \eps$, we have 
  \begin{equation*}
      1 - \sqrt{1 - \gamma}^N > 1 - \sqrt{\frac{\eps}{z_0}}
  \end{equation*}
  and the cost satisfies
  \begin{equation*}
      \sum_{i=0}^{N-1} \frac{1}{\delta_i} \geq \left(\frac{1 - \sqrt{\eps/z_0}}{\gamma}\right)^2 \cdot \frac{1}{\eps - (1 - \gamma)^N z_0} = \frac{1}{\gamma^2(\eps - (1 - \gamma)^N z_0)} \cdot (1 - \sqrt{\eps/z_0})^2 \geq \frac{1}{\gamma^2 \eps} \cdot (1 - \sqrt{\eps/z_0})^2.
  \end{equation*}
\end{proof}

\noindent\textbf{Acknowledgments}\; We thank Joran van Apeldoorn, Michael Walter and Ronald de Wolf for interesting and helpful discussions, and the latter two for giving feedback on a first version of this paper. Moreover, we thank Ronald de Wolf for pointing us to \cite{lee&roland:qsdpt}, which allows for an exponentially small success probability in \cref{thm:basic lb}.
    
\bibliographystyle{alpha}
\bibliography{references}

\newcommand{\etalchar}[1]{$^{#1}$}
\providecommand{\noopsort}[1]{}
\begin{thebibliography}{{\noopsort{Apeldoorn}v}AGL{\etalchar{+}}21}

\bibitem[ABB{\etalchar{+}}99]{lapack}
Edward Anderson, Zhaojun Bai, Christian Bischof, L~Susan Blackford, James
  Demmel, Jack Dongarra, Jeremy Du~Croz, Anne Greenbaum, Sven Hammarling, Alan
  McKenney, et~al.
\newblock {\em LAPACK Users' guide}.
\newblock SIAM, 1999.

\bibitem[AdW20]{Apers2020QLaplacian}
Simon Apers and Ronald de~Wolf.
\newblock Quantum speedup for graph sparsification, cut approximation and
  laplacian solving.
\newblock {\em Proceedings of IEEE 61st Annual Symposium on Foundations of
  Computer Science (FOCS'20)}, pages 637--648, 2020.
\newblock arXiv:1911.07306.

\bibitem[Amb02]{ambainis:lowerboundsj}
Andris Ambainis.
\newblock Quantum lower bounds by quantum arguments.
\newblock {\em Journal of Computer and System Sciences}, 64(4):750--767, 2002.
\newblock Earlier version in STOC'00. arXiv:quant-ph/0002066.

\bibitem[ANWR17]{altschuler2017nearlinear}
Jason Altschuler, Jonathan Niles-Weed, and Philippe Rigollet.
\newblock Near-linear time approximation algorithms for optimal transport via
  {S}inkhorn iteration.
\newblock In {\em Advances in Neural Information Processing Systems},
  volume~30, pages 1964--1974, 2017.

\bibitem[{\noopsort{Apeldoorn}v}AGL{\etalchar{+}}21]{qscalingICALP}
Joran {\noopsort{Apeldoorn}v}an~Apeldoorn, Sander Gribling, Yinan Li, Harold
  Nieuwboer, Michael Walter, and Ronald de~Wolf.
\newblock {Quantum Algorithms for Matrix Scaling and Matrix Balancing}.
\newblock In {\em 48th International Colloquium on Automata, Languages, and
  Programming (ICALP 2021)}, volume 198, pages 110:1--110:17, 2021.
\newblock arXiv:2011.12823v1.

\bibitem[AZLOW17]{azlow17}
Zeyuan Allen-Zhu, Yuanzhi Li, Rafael Oliveira, and Avi Wigderson.
\newblock Much faster algorithms for matrix scaling.
\newblock In {\em {Proceedings of IEEE 58th Annual Symposium on Foundations of
  Computer Science (FOCS'17)}}, pages 890--901, 2017.
\newblock arXiv:1704.02315.

\bibitem[BBC{\etalchar{+}}01]{beals2001quantum}
Robert Beals, Harry Buhrman, Richard Cleve, Michele Mosca, and Ronald de~Wolf.
\newblock Quantum lower bounds by polynomials.
\newblock {\em Journal of the ACM (JACM)}, 48(4):778--797, 2001.
\newblock arXiv: quant-ph/9802049.

\bibitem[BFG{\etalchar{+}}19]{burgisser2019towards}
Peter B{\"u}rgisser, Cole Franks, Ankit Garg, Rafael Oliveira, Michael Walter,
  and Avi Wigderson.
\newblock Towards a theory of non-commutative optimization: geodesic 1st and
  2nd order methods for moment maps and polytopes.
\newblock In {\em Proceedings of 60th IEEE Annual Symposium on Foundations of
  Computer Science (FOCS'19)}, pages 845--861. IEEE, 2019.
\newblock arXiv:1910.12375.

\bibitem[BFH75]{bishop-fienberg-holland}
Yvonne M.~M. Bishop, Stephen~E. Fienberg, and Paul~W. Holland.
\newblock {\em Discrete Multivariate Analysis: Theory and Practice}.
\newblock MIT Press, 1975.

\bibitem[Bro59]{brown}
David~T. Brown.
\newblock A note on approximations to discrete probability distributions.
\newblock {\em Information and control}, 2(4):386--392, 1959.

\bibitem[CK20]{chakrabarty2020better}
Deeparnab Chakrabarty and Sanjeev Khanna.
\newblock Better and simpler error analysis of the {S}inkhorn--{K}nopp
  algorithm for matrix scaling.
\newblock {\em Mathematical Programming}, pages 1--13, 2020.

\bibitem[CMTV17]{cmtv17}
Michael~B. Cohen, Aleksander Madry, Dimitris Tsipras, and Adrian Vladu.
\newblock Matrix scaling and balancing via box constrained {N}ewton's method
  and interior point methods.
\newblock In {\em Proceedings of IEEE 58th Annual Symposium on Foundations of
  Computer Science (FOCS'17)}, pages 902--913, 2017.

\bibitem[Cut13]{NIPS2013_af21d0c9}
Marco Cuturi.
\newblock Sinkhorn distances: Lightspeed computation of optimal transport.
\newblock In {\em Advances in Neural Information Processing Systems},
  volume~26, pages 2292--2300, 2013.

\bibitem[DS40]{deming1940least}
W.~Edwards Deming and Frederick~F. Stephan.
\newblock On a least squares adjustment of a sampled frequency table when the
  expected marginal totals are known.
\newblock {\em The Annals of Mathematical Statistics}, 11(4):427--444, 1940.

\bibitem[GGOW19]{garg2019operator}
Ankit Garg, Leonid Gurvits, Rafael Oliveira, and Avi Wigderson.
\newblock Operator scaling: theory and applications.
\newblock {\em Foundations of Computational Mathematics}, pages 1--68, 2019.
\newblock Earlier version in FOCS'16.

\bibitem[HL{\v{S}}07]{DBLP:conf/stoc/HoyerLS07}
Peter H{\o}yer, Troy Lee, and Robert {\v{S}}palek.
\newblock Negative weights make adversaries stronger.
\newblock In {\em Proceedings of the 39th Annual {ACM} {SIGACT} Symposium on
  Theory of Computing (STOC'07)}, pages 526--535, 2007.
\newblock arXiv:quant-ph/0611054.

\bibitem[Ide16]{idel2016review}
Martin Idel.
\newblock A review of matrix scaling and {S}inkhorn's normal form for matrices
  and positive maps, 2016.
\newblock arXiv:1609.06349.

\bibitem[KK93]{KALANTARI1993237}
Bahman Kalantari and Leonid Khachiyan.
\newblock On the rate of convergence of deterministic and randomized {RAS}
  matrix scaling algorithms.
\newblock {\em Operations Research Letters}, 14(5):237--244, 1993.

\bibitem[KK96]{KALANTARI199687}
Bahman Kalantari and Leonid Khachiyan.
\newblock On the complexity of nonnegative-matrix scaling.
\newblock {\em Linear Algebra and its Applications}, 240:87--103, 1996.

\bibitem[KLP{\etalchar{+}}16]{DBLP:conf/stoc/KyngLPSS16}
Rasmus Kyng, Yin~Tat Lee, Richard Peng, Sushant Sachdeva, and Daniel~A.
  Spielman.
\newblock Sparsified {C}holesky and multigrid solvers for connection
  {L}aplacians.
\newblock In {\em Proceedings of the 48th Annual {ACM} {SIGACT} Symposium on
  Theory of Computing ({STOC}'16)}, pages 842--850, 2016.
\newblock arXiv:1512.01892.

\bibitem[KLRS08]{klrs08}
B.~Kalantari, I.~Lari, F.~Ricca, and B.~Simeone.
\newblock On the complexity of general matrix scaling and entropy minimization
  via the {RAS} algorithm.
\newblock {\em Mathematical Programming}, 112:371--401, 2008.

\bibitem[Kru37]{kruithof:telefoon}
J.~Kruithof.
\newblock Telefoonverkeersrekening.
\newblock {\em De Ingenieur}, 52:E15--E25, 1937.

\bibitem[LMR{\etalchar{+}}11]{Lee2011QuantumQC}
Troy Lee, Rajat Mittal, Ben Reichardt, Robert \v{S}palek, and Mario Szegedy.
\newblock Quantum query complexity of state conversion.
\newblock {\em 2011 IEEE 52nd Annual Symposium on Foundations of Computer
  Science}, pages 344--353, 2011.
\newblock arXiv:1011.3020.

\bibitem[LPS15]{lee2015sparsified}
Yin~Tat Lee, Richard Peng, and Daniel~A. Spielman.
\newblock {S}parsified {C}holesky solvers for {SDD} linear systems, 2015.
\newblock arXiv:1506.08204.

\bibitem[LR13]{lee&roland:qsdpt}
Troy Lee and J\'{e}r\'{e}mie Roland.
\newblock A strong direct product theorem for quantum query complexity.
\newblock {\em Computational Complexity}, 22(2):429--462, 2013.
\newblock Earlier version in CCC'12. arXiv:1104.4468.

\bibitem[LSW00]{lsw00}
Nathan Linial, Alex Samorodnitsky, and Avi Wigderson.
\newblock A deterministic strongly polynomial algorithm for matrix scaling and
  approximate permanents.
\newblock {\em Combinatorica}, 20(4):545--568, 2000.

\bibitem[NW99]{nayak1999quantum}
Ashwin Nayak and Felix Wu.
\newblock The quantum query complexity of approximating the median and related
  statistics.
\newblock In {\em Proceedings of the 31st Annual {ACM} {SIGACT} Symposium on
  Theory of Computing (STOC'99)}, pages 384--393, 1999.
\newblock arXiv:quant-ph/9804066.

\bibitem[RS89]{rothblum&schneider:scaling}
Uriel~G. Rothblum and Hans Schneider.
\newblock Scalings of matrices which have prespecified row sums and column sums
  via optimization.
\newblock {\em Linear Algebra and its Applications}, 114:737--764, 1989.

\bibitem[Sin64]{sinkhorn:scaling}
Richard Sinkhorn.
\newblock A relationship between arbitrary positive matrices and doubly
  stochastic matrices.
\newblock {\em The Annals of Mathematical Statistics}, 35(2):876--879, 1964.

\end{thebibliography}

\end{document}